\documentclass[a4paper,onecolumn,10pt,unpublished]{quantumarticle}

\pdfoutput=1

\usepackage[utf8]{inputenc}
\usepackage[english]{babel}
\usepackage[T1]{fontenc}

\usepackage[dvipsnames]{xcolor}
\usepackage[colorlinks=true,urlcolor=Blue,citecolor=Blue,allcolors=Blue]{hyperref}

\usepackage{amsmath,amssymb,amsfonts,amsthm}
\usepackage{mathtools,physics,qcircuit,bbm}

\usepackage{float}
\usepackage{graphicx}
\graphicspath{{./Figures/}}

\usepackage[numbers,sort&compress]{natbib}

\addto\captionsenglish{
    \renewcommand{\contentsname}{\large\fontfamily{sans}Contents}
}

\newtheorem{Definition}{Definition}
\newtheorem{Theorem}{Theorem}
\newtheorem*{Theorem*}{Theorem}
\newtheorem{Lemma}{Lemma}
\newtheorem*{Lemma*}{Lemma}
\newtheorem{Corollary}{Corollary}

\newcommand{\Legendre}[2]{\left(\frac{#1}{#2}\right)}

\begin{document}

\title{High-threshold magic state distillation with quantum quadratic residue codes}
\author{Michael Zurel}
\orcid{0000-0003-0333-1174}
\author{Santanil Jana}
\orcid{0009-0006-5243-0653}
\author{Nadish de Silva}
\orcid{0000-0001-7904-306X}
\affiliation{Department of Mathematics, Simon Fraser University, Burnaby, Canada}

\begin{abstract}
    We present applications of quantum quadratic residue codes in magic state distillation. This includes showing that existing codes which are known to distill magic states, like the $5$-qubit perfect code, the $7$-qubit Steane code, and the $11$-qutrit and $23$-qubit Golay codes, are equivalent to certain quantum quadratic residue codes. We also present new examples of quantum quadratic residue codes that distill qubit $T$ states and qutrit Strange states with high thresholds, and we show that there are infinitely many quantum quadratic residue codes that distill $T$ states with a non-trivial threshold. All of these codes, including the codes with the highest currently known thresholds for $T$ state and Strange state distillation, are unified under the umbrella of quantum quadratic residue codes.
\end{abstract}

\tableofcontents

\section{Introduction}\label{Section:Introduction}

Magic state distillation is one of the leading candidates for scalable fault-tolerant quantum computation~\cite{BravyiKitaev2005,CampbellVuillot2017}. In this approach, the stabilizer formalism describes the operations that can be performed fault-tolerantly, namely, preparation of stabilizer states, Clifford unitary gates, and Pauli measurements. But these operations alone are not universal for quantum computation~\cite{Gottesman1997,Gottesman1998a,Gottesman1998b,AaronsonGottesman2004}. Furthermore, the Eastin-Knill theorem shows that no universal set of unitary gates can be implemented transversally in any fixed stabilizer code~\cite{EastinKnill2009}.

Quantum computational universality is recovered by injecting ``magic'' (non-stabilizer) states into otherwise stabilizer quantum circuits. The problem of fault-tolerantly implementing a universal set of computational primitives is thus shifted from that of fault-tolerantly implementing a universal set of unitary gates, to that of preparing sufficiently high-fidelity magic states. For this task, Bravyi and Kitaev showed that a supply of noisy magic states can be consumed by a so-called \emph{magic state distillation protocol} in order to return fewer purified copies of the magic state~\cite{BravyiKitaev2005}. Repeated application of these protocols allow for the production of nearly pure magic states, enabling universal quantum computation. Since then, many new magic states and distillation protocols have been discovered~\cite{CampbellBrowne2009,Reichardt2009,AnwarBrowne2012,BravyiHaah2012,CampbellBrowne2012,DawkinsHoward2015,HowardDawkins2016,HaahWecker2017,Prakash2020,PrakashSaha2025,WillsYamasaki2025,PrakashSinghal2026}.

A foundational question for this quantum computational model, with implications for quantum computation and quantum information more broadly, is: \emph{which quantum states can be distilled using only stabilizer operations to useful magic states?} We call these states \emph{universal} for quantum computation. For qubits, a long-standing conjecture is that any state that is not a convex mixture of stabilizer states is universal~\cite{Reichardt2005,Reichardt2009,VeitchEmerson2014,HowardDawkins2016,HowardCampbell2017}. The corresponding conjecture for odd-prime-dimensional qudits is that any state whose Wigner function takes negative values is universal~\cite{MariEisert2012,VeitchEmerson2012,DawkinsHoward2015,Prakash2020,PrakashGupta2020}.

Mixtures of stabilizer states for qubits and Wigner nonnegative states for odd-prime-dimensional qudits are known to not be distillable to useful magic states~\cite{Gottesman1997,MariEisert2012,VeitchEmerson2012,VeitchEmerson2014}. Thus, nonstabilizerness and Wigner negativity are necessary preconditions for universality. Both conjectures align with traditional notions of non-classicality. Stabilizer states with stabilizer operations are efficiently simulatable classically~\cite{Gottesman1997,Gottesman1998a,AaronsonGottesman2004}, and therefore are classical from a computational point of view, whereas Wigner negativity is a traditional notion of non-classicality in physics~\cite{Hudson1974,KenfackZyczkowski2004,Galvao2005,CormickPittenger2006,Gross2006,Gross2007}. Furthermore, negativity in the Wigner function implies the failure of a classical (i.e., noncontextual) hidden variable description of the computation~\cite{Spekkens2008,HowardEmerson2014,DelfosseRaussendorf2017,SchmidPusey2022,RaussendorfFeldmann2023}. The conjectures assert that the respective notions of non-classicality (non-stabilizerness for qubits, Wigner negativity for odd-prime-dimensional qudits) are not only necessary but also sufficient for quantum computational universality. Both conjectures remain open.

An obstruction to proving these conjectures is that no protocol of finite size can distill useful magic states from resource states arbitrarily close to certain regions of the boundary of the known non-universal regions. Campbell and Browne showed that, for qubits, no fixed finite-size stabilizer code can distill useful magic states from states arbitrarily close to a face of the stabilizer polytope~\cite{CampbellBrowne2010}. For any given protocol size there is always a finite gap between the best achievable noise threshold of a protocol of that size and the face of the stabilizer polytope. Therefore, a positive resolution of the qubit conjecture would require a sequence of distillation protocols of increasing size whose distillation thresholds converge to the stabilizer polytope boundary. Prakash and Gupta established an analogous limitation for odd-dimensional qudits~\cite{PrakashGupta2020}, implying that any proof of the Wigner negativity conjecture would similarly need an infinite family of protocols with thresholds approaching the relevant boundary.

Simultaneously, there is a common expectation that increasing code size tends to make distillation thresholds worse rather than better. This expectation is supported empirically by the observation that the codes with the best known noise thresholds for particular target magic states are often among the smallest codes that distill those states. For example, for qubit $T$-type magic states, the best known threshold is achieved by the $5$-qubit code~\cite{BravyiKitaev2005}. For qutrit ``Strange'' states, the best known threshold is achieved by the $11$-qutrit Golay code~\cite{Prakash2020,PrakashSinghal2026}. Together these observations suggest that, while large codes are necessary in principle to get a sequence of codes with distillation thresholds approaching the boundary of the non-universal region, identifying code families that maintain high thresholds with increasing code size is non-trivial.

This motivates the present focus on quantum quadratic residue (QR) codes and their role in defining such infinite families of magic state distillation protocols. The first contribution of this paper is to show that several known distillation protocols, such as those based on the $5$-qubit perfect code (for $T$ state distillation), the $7$-qubit Steane code (for $H$ state distillation), and the $11$-qutrit Golay code (for Strange state distillation), are equivalent to certain quantum QR codes. The second contribution is to identify new members of the quantum QR code family that distill qubit $T$ states and qutrit Strange states.

Although none of the new protocols we find surpass the noise thresholds of the $5$-qubit code or the $11$-qutrit Golay code, we do obtain fairly large codes that distill these states with high thresholds. This fact, together with the unification of the best known codes for $T$ state and Strange state distillation under a single umbrella, make quantum QR codes a natural candidate family for constructing infinite sequences of magic state distillation protocols that could resolve the above conjectures. However, these codes are notoriously hard to analyze and so, for now, computing the noise thresholds of these codes is a computational problem that can only be carried out for some relatively small examples, and we cannot conclude much about the asymptotic thresholds of these codes or any subsequence thereof. That said, one asymptotic result we do prove is that there are infinitely many quantum QR codes which distill qubit $T$ states with a non-trivial threshold.

\paragraph{Outline.} The rest of this paper is structured as follows. First, in Section~\ref{Section:Background}, we provide some background on classical and quantum coding theory and on magic state distillation, including a correspondence between quantum stabilizer codes and certain classical codes that is useful in defining and describing the codes and magic state distillation protocols of present interest. In Section~\ref{Section:QRCodes}, we introduce classical and quantum quadratic residue codes and prove some of their properties. In Section~\ref{Section:MSD}, we show that quantum quadratic residue codes distill magic states. This section includes rederivations of relations between the performance of a magic state distillation protocol and certain weight enumerators of the corresponding code~\cite{Rall2017,Prakash2020,PrakashGupta2020,KalraPrakash2025,PrakashSinghal2026}, as well as methods for computing the weight enumerators of quantum quadratic residue codes. Finally, we conclude with a discussion of these results in Section~\ref{Section:Discussion}.

\section{Background}\label{Section:Background}

Here we briefly review some background material on classical coding theory, quantum coding theory, and magic state distillation that will be necessary for stating and proving our main results.

\subsection{Classical coding theory}

An $[n,k,\delta]_q$ linear code is a $k$-dimensional subspace of the vector space $\mathbb{F}_q^n$, where $\mathbb{F}_q$ is the finite field of order $q$. Here, $n$ is the length of the code, $k$ is the dimension, $q$ is the size of the alphabet, and $\delta$ is the minimum distance of the code. When the minimum distance is not relevant or not known, the notation $[n,k]_q$ is also used.

The vector space $\mathbb{F}_q^n$ comes equipped with the standard Euclidean inner product
\begin{equation}
    \langle u,v\rangle=\sum_{k=1}^nu_kv_k\in\mathbb{F}_q.
\end{equation}
The dual code of an $[n,k]_q$ code $C$ is the $[n,n-k]_q$ code
\begin{equation}
    C^\perp:=\{v\in\mathbb{F}_q^n\;|\;\langle u,v\rangle=0\;\forall u\in C\}.
\end{equation}
A code $C$ is called self-orthogonal (or, more specifically, Euclidean self-orthogonal) if $C\subseteq C^\perp$, and it is (Euclidean) self-dual if $C=C^\perp$.

When the size of the alphabet is a square, $q=r^2$, we can define another inner product on $\mathbb{F}_q^n$. To start, the Frobenius automorphism is $a\mapsto\bar{a}=a^r$, this is the generator of the Galois group $\mathrm{Gal}(\mathbb{F}_q/\mathbb{F}_r)\cong\mathbb{Z}_2$. Then the Hermitian inner product on $\mathbb{F}_q^n$ is defined as
\begin{equation}
    \langle u,v\rangle_H=\sum\limits_{k=1}^n\bar{u}_kv_k.
\end{equation}
The Hermitian dual code $C^{\perp_H}$ of a code $C$, along with Hermitian self-orthogonality and Hermitian self-duality, can be defined similar to the Euclidean case.

The weight enumerator polynomial of a code $C$ is a useful invariant defined as
\begin{equation}\label{Equation:WeightEnumerator}
    W_C(x,y)=\sum\limits_{v\in C}x^{n-wt(v)}y^{wt(v)}=\sum\limits_{w=0}^nA_wx^{n-w}y^w,
\end{equation}
where $wt(v):=|\{k\;|\;v_k\ne0\}|$ is the Hamming weight of the vector $v$, and the coefficients $A_w:=|\{v\in C\;|\;wt(v)=w\}|$ determine the weight distribution of the code. The weight enumerator of a code $C$ is related to that of the its dual code by the MacWilliams identity~\cite{MacWilliams1962,MacWilliams1963},
\begin{equation}\label{Equation:MacWilliamsIdentity}
    W_{C^\perp}(x,y)=\frac{1}{|C|}W_C(x+(q-1)y,x-y).
\end{equation}
This identity holds for both Euclidean and Hermitian duals~\cite{MacWilliamsSloane1977}.

\paragraph{Cyclic codes.}

A cyclic code of length $n$ over $\mathbb{F}_q$ is a linear code $C\subset\mathbb{F}_q^n$ such that for any vector $c=(c_0,c_1,\dots,c_{n-1})\in C$, the code also contains the vector $(c_{n-1},c_0,\dots,c_{n-2})$ obtained by cyclically permuting the coordinates. We can identify cyclic codes with certain polynomials. First, vectors in $\mathbb{F}_q^n$ are in bijective correspondence with polynomials in $\mathbb{F}_q[x]$ of degree at most $n-1$ through the map $c=(c_0,c_1,\dots,c_{n-1})\mapsto c(x)=c_0+c_1x+\cdots+c_{n-1}x^{n-1}$. With multiplication modulo $x^n-1$, a cyclic shift of the vector's coordinates corresponds to multiplication by $x$, $xc(x)=c_{n-1}+c_0x+\cdots+c_{n-2}x^{n-1}$. Under this identification, cyclic codes are ideals of the ring $\mathbb{F}_q[x]/(x^n-1)$. The ring $\mathbb{F}_q[x]/(x^n-1)$ is a principal ideal ring, and every cyclic code (ideal of $\mathbb{F}_q[x]/(x^n-1)$), $C$, is generated by a unique monic polynomial $g(x)$ of minimum degree dividing $x^n-1$. The polynomial $g(x)$ is called the generator polynomial of the code $C$, and $\dim(C)=n-\deg(g)$~\cite{PlessBrualdi1998}.

Assume that $\gcd(n,q)=1$, let $\mathbb{F}_{q^m}$ be the splitting field of $x^n-1$ over $\mathbb{F}_q$, and let $\zeta\in\mathbb{F}_{q^m}$ be a primitive $n^{th}$ root of unity. The defining set, or set of zeros, of the code $C$ is the set
\begin{equation}
    \mathcal{Z}(C):=\left\{k\in\mathbb{Z}_n\;|\;g\left(\zeta^k\right)=0\right\}.
\end{equation}
Since, with the assumption $\gcd(n,q)=1$, $x^n-1=\prod_{k\in\mathbb{Z}_n}(x-\zeta^k)$ in $\mathbb{F}_{q^m}[x]$, the zeros uniquely determine the generator polynomial, and hence also the code. The Euclidean dual code $C^\perp$ of a cyclic code $C$ is a cyclic code with zeros~\cite{MacWilliamsSloane1977}
\begin{equation}
    \mathcal{Z}(C^\perp)=\mathbb{Z}_n\setminus(-\mathcal{Z}(C)),
\end{equation}
and when $q=r^2$, the Hermitian dual $C^{\perp_H}$ of a cyclic code $C$ is also a cyclic code with zeros~\cite[Proposition~3.4]{DicuangcoSole2007}
\begin{equation}\label{Equation:HermitianDualZeros}
    \mathcal{Z}(C^{\perp_H})=\mathbb{Z}_n\setminus(-r\mathcal{Z}(C)).
\end{equation}

An idempotent of a cyclic code $C$ is a polynomial $e(x)\in C$ such that $e(x)^2=e(x)$ in $\mathbb{F}_q[x]/(x^n-1)$. When $\gcd(n,q)=1$, each cyclic code has a unique idempotent generator, called the idempotent of the code.

\paragraph{Additive codes.}

Additive codes are more general than the linear codes described above. An additive code over a finite field $\mathbb{F}_q$ is not necessarily $\mathbb{F}_q$-linear, but it is closed under addition. In other words, an additive code $C\subset\mathbb{F}_q^n$ is a subgroup of $(\mathbb{F}_q^n,+)$. Additive codes will appear briefly in Section~\ref{Section:CodeCorrespondence}, but most of this paper will be about linear codes.

\medskip

See, for example, Refs.~\cite{MacWilliamsSloane1977,PlessBrualdi1998} for a more thorough introduction to coding theory.

\subsection{Quantum coding theory}

Our setting is quantum computation on systems of multiple qudits, where each qudit has prime Hilbert space dimension $d$. For our main results, we restrict to the cases $d=2$ and $d=3$.

The single-qudit generalized Pauli operators are defined as
\begin{equation}
    Z=\sum\limits_{j\in\mathbb{Z}_d}\omega^j\ketbra{j}{j}\quad\text{and}\quad X=\sum\limits_{j\in\mathbb{Z}_d}\ketbra{j+1}{j}
\end{equation}
where $\omega=\exp(2\pi i/d)$. The $n$-qudit Pauli operators are constructed from tensor products of the single-qudit operators. We label the $n$-qudit Pauli operators up to overall phases by the elements of the vector space $\mathbb{F}_d^{2n}$ as follows. For each $v=(v_z,v_x)\in\mathbb{F}_d^n\times\mathbb{F}_d^n$,
\begin{equation}\label{Equation:PauliOperators}
    T_v=\begin{cases}
        i^{-\langle v_z,v_x\rangle}\bigotimes\limits_{k=1}^n Z^{[v_z]_k}X^{[v_x]_k} \qquad&\text{if $d=2$},\\
        \omega^{-\langle v_z,v_x\rangle\cdot2^{-1}}\bigotimes\limits_{k=1}^n Z^{[v_z]_k}X^{[v_x]_k} \qquad& \text{if $d$ is odd},
    \end{cases}
\end{equation}
where $2^{-1}$ is the multiplicative inverse of $2$ in $\mathbb{F}_d$. The Pauli group $\mathcal{P}$ is the group generated by the $n$-qudit Pauli operators.

The symplectic form on $\mathbb{F}_d^{2n}$ defined by $[u,v]:=\langle u_z,v_x\rangle-\langle u_x,v_z\rangle$ computes the commutator of the Pauli operators through the relation
\begin{equation}
    [T_u,T_v]:=T_uT_vT_u^{-1}T_v^{-1}=\omega^{[u,v]}\mathbbm{1}.
\end{equation}
We define a function $\beta$ that tracks the phases that appear when commuting Pauli operators compose through the relation
\begin{equation}\label{Equation:BetaDefinition}
    T_uT_v=\omega^{-\beta(u,v)}T_{u+v},\qquad\forall u,v\in\mathbb{F}_d^{2n}\text{ such that }[u,v]=0.
\end{equation}
With the phase convention chosen in Eq.~\eqref{Equation:PauliOperators}, when $d$ is odd, we have the identities $T_u^{-1}=T_u^\dagger=T_{-u}$ and $T_uT_v=\omega^{[u,v]\cdot 2^{-1}}T_{u+v}$ for all $u,v\in\mathbb{F}_d^{2n}$, and so $\beta\equiv0$. These are the finite-dimensional analogues of the Weyl relations. For $d=2$, there is no corresponding choice of phase convention that reproduces these relations~\cite{RaussendorfFeldmann2023}.

The Clifford group, $\mathcal{C}\ell$, is the normalizer of the Pauli group in the unitary group, i.e., the subgroup of unitary operators that map Pauli operators to Pauli operators under conjugation. They act on Pauli operators as
\begin{equation}\label{Equation:CliffordActionOnPaulis}
    gT_ug^\dagger=\omega^{\Phi_g(u)}T_{S_gu},\qquad\forall g\in\mathcal{C}\ell
\end{equation}
where $S_g\in\mathrm{Sp}(\mathbb{F}_d^{2n})$ is a symplectic map on $\mathbb{F}_d^{2n}$. The Pauli group is a normal subgroup of the Clifford group such that $\mathcal{C}\ell/\mathcal{P}\cong\mathrm{Sp}(\mathbb{F}_d^{2n})$.

A projector onto the eigenspace of a set $I\subset\mathbb{F}_d^{2n}$ of compatible Pauli observables corresponding to eigenvalues $\{\omega^{r(v)}\;|\;v\in I\}$ is given by
\begin{equation}\label{Equation:PauliProjector}
    \Pi_I^r=\frac{1}{|I|}\sum\limits_{v\in I}\omega^{-r(v)}T_v.
\end{equation}
In order for $\Pi_I^r$ to be a valid projector, the set $I$ and the function $r:I\to\mathbb{F}_d$ must satisfy some conditions. First, to have idempotence, $(\Pi_I^r)^2=\Pi_I^r$, I must be closed under addition, i.e., $I$ is a subspace of $\mathbb{F}_d^{2n}$. Second, the operators labeled by $I$ must commute so $I$ must be an isotropic subspace, i.e., a subspace on which the symplectic form vanishes. Finally, the eigenvalues must be compatible so $r$ must be a consistent (noncontextual) value assignment for $I$. That is, $r:I\to\mathbb{F}_d$ is a function satisfying
\begin{equation}
    \omega^{-r(u)-r(v)}T_uT_v=\omega^{-r(u+v)}T_{u+v}\quad\forall u,v\in I,
\end{equation}
or equivalently, with Eq.~\eqref{Equation:BetaDefinition},
\begin{equation}\label{Equation:NoncontextualityCondition}
    r(u)+r(v)-r(u+v)=-\beta(u,v)\qquad\forall u,v\in I.
\end{equation}

An $[\![n,k,\delta]\!]_d$ quantum stabilizer code is a rank $n-k$ Abelian subgroup of the $n$-qudit Pauli group that contains no element of the centre of the Pauli group other than $\mathbbm{1}$. Such a code can be identified with a pair $(I,r)$ where $I$ is an $n-k$-dimensional isotropic subspace of $\mathbb{F}_d^{2n}$, and $r:I\to\mathbb{F}_d$ is a consistent value assignment for $I$. The code space is the subspace of the $n$-qudit Hilbert space fixed by $\Pi_I^r$. For some purposes, we can ignore the phases $r$ and identify the code with the subspace $I$. Note that since $I$ is a subspace of a vector space over a finite field, we can regard it as a classical linear code which is self-orthogonal with respect to the symplectic inner product.

The symplectic weight, or Pauli weight, of a vector $v=(v_z,v_x)\in\mathbb{F}_d^n\times\mathbb{F}_d^n$ is
\begin{equation}
    swt(v)=|\{k\in\mathbb{Z}_n\;|\;[v_z]_k\ne0\vee[v_x]_k\ne0\}|.
\end{equation}
Equivalently, it is the number of qudits on which the Pauli operator $T_v$ acts non-trivially. The distance $\delta$ of a stabilizer code is the minimum symplectic weight of the nonzero vectors in $I^\perp\setminus I$. Symplectic weight distributions and symplectic weight enumerators can be defined analogous to Eq.~\eqref{Equation:WeightEnumerator}.

\medskip

For more background on quantum error correction and fault-tolerant quantum computation see, for example, Refs.~\cite{NielsenChuang2010,Gottesman2024}.

\subsection{Correspondence between classical and quantum codes}\label{Section:CodeCorrespondence}

There is a correspondence between stabilizer codes on $d$-dimensional qudits where $d$ is a prime, and certain additive codes over $\mathbb{F}_{d^2}$~\cite{CalderbankSloane1998,AshikhimKnill2001,KetkarSarvepalli2005}. To describe this correspondence, we first need a map $\iota:\mathbb{F}_d^{2n}\to\mathbb{F}_{d^2}^n$ defined as follows. Fix a quadratic extension $\mathbb{F}_{d^2}/\mathbb{F}_d$ and choose an $\alpha\in\mathbb{F}_{d^2}\setminus\mathbb{F}_d$ so that $\{1,\alpha\}$ is an $\mathbb{F}_d$-basis for $\mathbb{F}_{d^2}$. Then for each $v=(v_z,v_x)\in\mathbb{F}_d^n\times\mathbb{F}_d^n$,
\begin{equation}
    \iota(v)=v_z+\alpha v_x\in\mathbb{F}_{d^2}^n.
\end{equation}

Let $x\mapsto\bar{x}:=x^d$ denote the Frobenius automorphism, this is the generator of the Galois group $\mathrm{Gal}(\mathbb{F}_{d^2}/\mathbb{F}_d)\cong\mathbb{Z}_2$. Then $\mathbb{F}_{d^2}^n$ comes equipped with a Hermitian inner product,
\begin{equation}\label{Equation:HermitianProduct}
    \langle x,y\rangle_H=\sum_{k=1}^n \bar{x}_k y_k\in\mathbb{F}_{d^2}.
\end{equation}

Another inner product on $\mathbb{F}_{d^2}^n$ will also be useful. First, we define the field trace map $\Tr_{\mathbb{F}_{d^2}/\mathbb{F}_d}:\mathbb{F}_{d^2}\to\mathbb{F}_d$ as\footnote{The simple form of the trace map here comes from the fact that $\mathrm{Gal}(\mathbb{F}_{d^2}/\mathbb{F}_d)$ has order $2$. In general, the field trace sums over every element of the Galois group.}
\begin{equation}
    \Tr_{\mathbb{F}_{d^2}/\mathbb{F}_d}(x)=x+\bar{x}.
\end{equation}
Then, letting $\lambda=(\alpha-\bar{\alpha})^{-1}\in\mathbb{F}_{d^2}$ (this is well-defined since $\alpha\neq\bar{\alpha}$ for $\alpha\in\mathbb{F}_{d^2}\setminus\mathbb{F}_d$), we define an $\mathbb{F}_d$-bilinear alternating form on $\mathbb{F}_{d^2}^n$ by
\begin{equation}\label{Equation:ModifiedTraceHermitian}
    \langle x,y\rangle_{MTH}:=\Tr_{\mathbb{F}_{d^2}/\mathbb{F}_d}(\lambda\langle x,y\rangle_H)\in\mathbb{F}_d.
\end{equation}
We call this the modified Trace-Hermitian inner product. If $\lambda=1$, then this reduces the usual Trace-Hermitian inner product $\langle x,y\rangle_{TH}=\Tr_{\mathbb{F}_{d^2}/\mathbb{F}_d}(\langle x,y\rangle_H)$.

Through the map $\iota$, the standard symplectic form on $\mathbb{F}_d^{2n}$ is exactly reproduced by the modified Trace-Hermitian form on $\mathbb{F}_{d^2}^n$. More precisely, for all $u,v\in\mathbb{F}_d^{2n}$,
\begin{equation}\label{Equation:SymplecticEqualsMTH}
    [u,v]:=\langle u_z,v_x\rangle-\langle u_x,v_z\rangle=\Tr_{\mathbb{F}_{d^2}/\mathbb{F}_d}(\lambda\langle\iota(u),\iota(v)\rangle_H)=:\langle \iota(u),\iota(v)\rangle_{MTH}.
\end{equation}
To see this, write $\iota(u)=u_z+\alpha u_x$ and $\iota(v)=v_z+\alpha v_x$. Then, expanding $\langle \iota(u),\iota(v)\rangle_H$ gives
\begin{equation}\label{Equation:HermitianProductExpansion}
    \langle \iota(u),\iota(v)\rangle_H = \langle u_z,v_z\rangle+\alpha\langle u_z,v_x\rangle+\bar{\alpha}\langle u_x,v_z\rangle+\bar{\alpha}\alpha\langle u_x,v_x\rangle.
\end{equation}
Since $\Tr_{\mathbb{F}_{d^2}/\mathbb{F}_d}(\lambda)=\Tr_{\mathbb{F}_{d^2}/\mathbb{F}_d}(\lambda\bar{\alpha}\alpha)=0$, $\Tr_{\mathbb{F}_{d^2}/\mathbb{F}_d}(\lambda\alpha)=1$, and $\Tr_{\mathbb{F}_{d^2}/\mathbb{F}_d}(\lambda\bar{\alpha})=-1$, multiplying Eq.~\eqref{Equation:HermitianProductExpansion} by $\lambda$ and taking a trace yields Eq.~\eqref{Equation:SymplecticEqualsMTH}.

Therefore, an isotropic subspace $I\subseteq\mathbb{F}_d^{2n}$ (which specifies a stabilizer code up to phases) corresponds via $\iota$ to an $\mathbb{F}_d$-additive code $C=\iota(I)\subseteq \mathbb{F}_{d^2}^n$ which is self-orthogonal with respect to the inner product Eq.~\eqref{Equation:ModifiedTraceHermitian}. Conversely, if $C\subseteq\mathbb{F}_{d^2}^n$ is an $\mathbb{F}_d$-additive code satisfying $C\subseteq C^{\perp_{MTH}}$, then $I=\iota^{-1}(C)$ is an isotropic subspace of $\mathbb{F}_d^{2n}$, and hence defines a stabilizer code on $n$ qudits with $k=n-\log_d|I|$ logical qudits.

The symplectic weight of $v=(v_z,v_x)\in\mathbb{F}_d^{2n}$ is equal to the Hamming weight of $\iota(v)\in\mathbb{F}_{d^2}^n$, since $\iota(v)_k=0\in\mathbb{F}_{d^2}$ if and only if $([v_z]_k,[v_x]_k)=(0,0)\in\mathbb{F}_d^2$. Therefore, the distance of the corresponding stabilizer code is
\begin{equation}\label{Equation:QuantumCodeDistance}
    \delta=\min\left\{wt(c)\;|\;c\in C^{\perp_{MTH}}\setminus C\right\}.
\end{equation}

Hermitian self-orthogonal codes (with respect to Eq.~\eqref{Equation:HermitianProduct}) are necessarily also self-orthogonal with respect to the modified Trace-Hermitian inner product Eq.~\eqref{Equation:ModifiedTraceHermitian}. Such codes are a special case of stabilizer codes. Similarly, stabilizer codes on $d$-dimensional qudits are only required to be $\mathbb{F}_d$-linear (i.e., the image under $\iota$ is an additive code over $\mathbb{F}_{d^2}$). Codes which under the map $\iota$ are $\mathbb{F}_{d^2}$-linear codes are also a special case of stabilizer codes. The codes of interest in this paper will happen to be $\mathbb{F}_{d^2}$-linear and Hermitian self-orthogonal.

The following lemma will be useful later on.

\begin{Lemma}\label{Lemma:MTHEqualsHermitian}
    If a code $C\subseteq\mathbb{F}_{d^2}^n$ is $\mathbb{F}_{d^2}$-linear, then $C^{\perp_{MTH}}=C^{\perp_H}$.
\end{Lemma}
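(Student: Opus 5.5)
The plan is to prove the two inclusions $C^{\perp_H}\subseteq C^{\perp_{MTH}}$ and $C^{\perp_{MTH}}\subseteq C^{\perp_H}$ separately, using only the sesquilinearity of $\langle\cdot,\cdot\rangle_H$ and the nondegeneracy of the field trace $\Tr_{\mathbb{F}_{d^2}/\mathbb{F}_d}$.

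The first inclusion is immediate. If $y\in C^{\perp_H}$, then $\langle x,y\rangle_H=0$ for every $x\in C$. Hence $\langle x,y\rangle_{MTH}=\Tr_{\mathbb{F}_{d^2}/\mathbb{F}_d}(\lambda\cdot 0)=0$, so $y\in C^{\perp_{MTH}}$. This direction uses no linearity of $C$; it is the same observation made before the lemma that Hermitian self-orthogonality implies MTH self-orthogonality.

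For the reverse inclusion, fix $y\in C^{\perp_{MTH}}$ and $x\in C$, and set $a:=\langle x,y\rangle_H\in\mathbb{F}_{d^2}$. Because $C$ is $\mathbb{F}_{d^2}$-linear, $\mu x\in C$ for every $\mu\in\mathbb{F}_{d^2}$. Since the Hermitian form is conjugate-linear in its first argument, $\langle \mu x,y\rangle_H=\bar{\mu}a$. The hypothesis therefore gives
\begin{equation*}
    \Tr_{\mathbb{F}_{d^2}/\mathbb{F}_d}(\lambda\bar{\mu}a)=0\qquad\text{for all }\mu\in\mathbb{F}_{d^2}.
\end{equation*}
As $\mu$ ranges over $\mathbb{F}_{d^2}$, so does $\nu:=\lambda\bar{\mu}$, because $\lambda\neq0$ and Frobenius is a bijection. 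So $\Tr_{\mathbb{F}_{d^2}/\mathbb{F}_d}(\nu a)=0$ for every $\nu\in\mathbb{F}_{d^2}$.

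The remaining step, and the only one that needs justification, is that this forces $a=0$. This is the nondegeneracy of the trace form $(\nu,a)\mapsto\Tr_{\mathbb{F}_{d^2}/\mathbb{F}_d}(\nu a)$, which follows from the trace map being nonzero. Indeed, $x\mapsto x+x^d$ is a polynomial of degree $d<d^2$, so it cannot vanish on all of $\mathbb{F}_{d^2}$. Alternatively, one can use the values already computed in the paper, $\Tr_{\mathbb{F}_{d^2}/\mathbb{F}_d}(\lambda\alpha)=1$. Now if $a\neq0$, take $\nu=\nu_0a^{-1}$ where $\Tr_{\mathbb{F}_{d^2}/\mathbb{F}_d}(\nu_0)\neq0$; this gives a contradiction. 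Hence $\langle x,y\rangle_H=0$ for all $x\in C$, so $y\in C^{\perp_H}$, which completes the proof.
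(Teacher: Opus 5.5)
Your proposal is correct and follows the paper's proof essentially step for step: the trivial inclusion $C^{\perp_H}\subseteq C^{\perp_{MTH}}$, then scaling $x$ by arbitrary $\mu\in\mathbb{F}_{d^2}$ so that $\lambda\bar{\mu}$ sweeps out the whole field, and concluding from nondegeneracy of the trace form. Your one addition is to justify that nondegeneracy explicitly (the degree-$d$ polynomial $x+x^d$ cannot vanish on all $d^2$ elements), which the paper simply asserts.
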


\begin{proof}[Proof of Lemma~\ref{Lemma:MTHEqualsHermitian}]
    If $y\in C^{\perp_H}$, then $\langle x,y\rangle_H=0$ for all $x\in C$, and hence,
    \begin{equation}
        \langle x,y\rangle_{MTH}=\Tr_{\mathbb{F}_{d^2}/\mathbb{F}_d}(\lambda\langle x,y\rangle_H)=0\qquad \forall x\in C.
    \end{equation}
    Therefore, $C^{\perp_H}\subseteq C^{\perp_{MTH}}$.

    Conversely, let $y\in C^{\perp_{MTH}}$, fix any $x\in C$, and let $b:=\langle x,y\rangle_H$. Since $C$ is $\mathbb{F}_{d^2}$-linear, we have $ax\in C$ for every $a\in\mathbb{F}_{d^2}$, and so
    \begin{equation}
        0 = \langle ax,y\rangle_{MTH} = \Tr_{\mathbb{F}_{d^2}/\mathbb{F}_d}(\lambda\langle ax,y\rangle_H) = \Tr_{\mathbb{F}_{d^2}/\mathbb{F}_d}(\lambda\bar{a}b)\qquad\forall a\in\mathbb{F}_{d^2}.
    \end{equation}
    Since $a$ ranges over all of $\mathbb{F}_{d^2}$, so does $\lambda\bar{a}$. Hence, $\Tr_{\mathbb{F}_{d^2}/\mathbb{F}_d}(zb)=0$ for all $z\in\mathbb{F}_{d^2}$. This is only possible if $b=0$, i.e., $\langle x,y\rangle_H=0$. Since $x\in C$ was arbitrary, $y\in C^{\perp_H}$. Thus $C^{\perp_{MTH}}=C^{\perp_H}$.
\end{proof}

\medskip

In this paper, we will focus on two cases in particular, namely, qubits and qutrits.

\paragraph{Qubits.} In this case, we take $\mathbb{F}_4\cong\mathbb{F}_2[\alpha]/(\alpha^2+\alpha+1)$ with Frobenius conjugation $x\mapsto\bar{x}=x^2$ and trace map $\Tr_{\mathbb{F}_4/\mathbb{F}_2}(x)=x+x^2$. For any $\alpha\in\mathbb{F}_4\setminus\mathbb{F}_2$ we have $\alpha-\bar{\alpha}=1$, hence $\lambda=1$ and the modified Trace-Hermitian inner product reduces to the usual Trace-Hermitian inner product. We can choose an explicit identification of $\mathbb{F}_2^2=\{0=(0,0),z=(1,0),x=(0,1),y=(1,1)\}$ with $\mathbb{F}_4$, for example $(0,z,x,y)\longleftrightarrow(0,1,\alpha,\alpha^2)$. Then additive codes $C\subseteq\mathbb{F}_4^n$ which are Trace-Hermitian self-orthogonal correspond to $n$-qubit stabilizer codes.

\paragraph{Qutrits.} Here, take $\mathbb{F}_9\cong\mathbb{F}_3[\alpha]/(\alpha^2+1)$ with conjugation $x\mapsto\bar{x}=x^3$ and trace map $\Tr_{\mathbb{F}_9/\mathbb{F}_3}(x)=x+x^3$.\footnote{Different choices of the irreducible quadratic polynomial in $\mathbb{F}_9\cong\mathbb{F}_3[\alpha]/f(\alpha)$ (and different choices of basis element $\alpha$) lead to different but equivalent maps $\iota$ and forms Eq.~\eqref{Equation:ModifiedTraceHermitian}.} In this case, additive codes $C\subseteq\mathbb{F}_9^n$ which are self-orthogonal with respect to Eq.~\eqref{Equation:ModifiedTraceHermitian} correspond to $n$-qutrit stabilizer codes.

\subsection{Magic state distillation}

A magic state distillation protocol is a procedure using stabilizer operations (preparations of stabilizer states, Clifford unitary gates, Pauli measurements, tracing out subsystems, post-selection and/or classical conditioning on measurement outcomes), that consumes a number of copies of a noisy magic state to produce fewer copies of a state with higher fidelity to the pure target magic state. In this paper, we focus on $n$-to-$1$ qudit distillation protocols.

Formally, we fix a target pure magic state $\ket{M}$ and an experimentally-preparable mixed resource state $\rho$. An $n$-to-$1$ qudit magic state distillation protocol is a sequence of stabilizer operations described by a completely positive, trace non-increasing map $\Phi$. If the protocol distills $\ket{M}$, then there is a distillable region (a neighbourhood of states) around $\ket{M}$ such that if the resource state $\rho$ is in this distillable region, then with success probability $P_s=\Tr(\Phi(\rho^{\otimes n}))$, the post-distillation state $\rho'=\Phi(\rho^{\otimes n})/P_s$ has greater fidelity with respect to $\ket{M}$ than the resource state, $\Tr(\ketbra{M}{M}\rho')>\Tr(\ketbra{M}{M}\rho)$. Iterating the protocol pushes the noise to zero allowing us to prepare essentially pure magic states.

Campbell and Browne~\cite{CampbellBrowne2009} showed that, in analyzing useful magic state distillation protocols, it suffices to focus on protocols with a certain structure. In particular, they showed that for any $n$-to-$1$ qudit protocol $\Phi$ and resource state $\rho$, there is another protocol with the same or better performance (measured by the fidelity of the output state), of the form
\begin{equation}
    \rho\longmapsto\rho'=\frac{\mathcal{R}(\rho^{\otimes n})}{\Tr(\mathcal{R}(\rho^{\otimes n}))},
\end{equation}
where the recovery map $\mathcal{R}(A)$ is defined as
\begin{equation}
    \mathcal{R}(A)=\Tr_{2,\dots,n}(D\Pi_I^rA\Pi_I^rD^\dagger).
\end{equation}
Here, $I\subset\mathbb{F}_d^{2n}$ is an $n-1$-dimensional isotropic subspace, $r:I\to\mathbb{F}_d$ is a consistent value assignment, and $D$ is a decoding Clifford operation that maps the code space of $(I,r)$ onto the first physical qudit. In words, we (i)~prepare $n$ copies of the resource state, (ii)~measure commuting Pauli observables labeled by $I\subset\mathbb{F}_d^{2n}$ and post-select on measurement outcomes $r:I\to\mathbb{F}_d$ (equivalently, project with $\Pi_I^r$), (iii)~use a Clifford group operation to decode the code space, $I^\perp/I\cong\mathbb{F}_d^2$, onto the first physical qudit, and (iv)~trace out the last $n-1$ qudits.

Although Ref.~\cite{CampbellBrowne2009} proves this structural result only for qubits, the same reduction extends to prime-dimensional qudits with essentially the same argument, since the argument depends only on the properties of the stabilizer formalism that have direct analogues for odd-prime dimensions~\cite{Gottesman1998b}. We provide a sketch of the proof for prime-dimensional qudits in Appendix~\ref{Section:MSDStructure}.

\section{Quadratic residue codes}\label{Section:QRCodes}

Quadratic residue (QR) codes are three related families of codes defined by the set of quadratic residues modulo an odd prime $p$. To distinguish them, they are called the augmented, extended, and expurgated quadratic residue codes.

Let $\mathbb{F}_q$ be the finite field of order $q$ with $q$ a prime power, let $p$ be an odd prime such that $\gcd(p,q)=1$ and $q$ is a quadratic residue modulo $p$, and let $\zeta$ be a primitive $p^{th}$ root of unity in the splitting field $\mathbb{F}_{q^m}$ of $x^p-1$ over $\mathbb{F}_q$. Let $\mathcal{Q}$ be the set of nonzero quadratic residues modulo $p$, and let $\mathcal{N}=\mathbb{F}_p^\times\setminus\mathcal{Q}$ be the set of non-residues modulo $p$. We have $|\mathcal{Q}|=|\mathcal{N}|=(p-1)/2$.

The length $p$ augmented quadratic residue code over $\mathbb{F}_q$ is a $[p,(p+1)/2]_q$ cyclic code with generator polynomial $\prod_{s\in\mathcal{Q}}(x-\zeta^s)$. The length $p$ expurgated QR code over $\mathbb{F}_q$ is a $[p,(p-1)/2]_q$ cyclic code with generator polynomial $(x-1)\prod_{s\in\mathcal{Q}}(x-\zeta^s)$.

Extending the augmented QR code by adding a parity check digit gives the extended QR code, a $[p+1,(p+1)/2]_q$ code. The augmented QR code can be obtained from the extended QR code by puncturing on any coordinate, and the expurgated QR code can be obtained from the extended QR code by shortening on any coordinate.

\medskip

See, for example, Ref.~\cite{MacWilliamsSloane1977,PlessBrualdi1998} for more background on quadratic residue codes.

\subsection{Quantum quadratic residue codes}

When a code over $\mathbb{F}_{d^2}$ is Hermitian self-orthogonal, it can be identified with a quantum stabilizer code on $d$-dimensional qudits via the correspondence described in Section~\ref{Section:CodeCorrespondence}. This motivates the following definition.

\begin{Definition}
    A quantum quadratic residue code on $d$-dimensional qudits is the image under the map $\iota^{-1}$ defined in Section~\ref{Section:CodeCorrespondence} of a Hermitian self-orthogonal expurgated quadratic residue code over $\mathbb{F}_{d^2}$.
\end{Definition}

Quantum quadratic residue codes have previously been introduced~\cite{Rains1999,KetkarSarvepalli2005}, but only as examples of the procedure of constructing quantum codes from classical codes described in Section~\ref{Section:CodeCorrespondence}. In this paper, we explore these codes in depth, along with their properties and applications.

\medskip

The following theorem identifies when an expurgated quadratic residue code is Hermitian self-orthogonal, and thus, when it can be used to define a quantum quadratic residue code.

\begin{Theorem}\label{Theorem:QRHermitianOrthogonality}
    A length $p$ expurgated quadratic residue code over $\mathbb{F}_q$ where $q=r^2$ is Hermitian self-orthogonal if and only if $-r$ is a quadratic non-residue modulo $p$.
\end{Theorem}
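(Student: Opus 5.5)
The argument goes entirely through defining sets. Let $C$ be the length $p$ expurgated QR code over $\mathbb{F}_q$, $q=r^2$. Its generator polynomial is $(x-1)\prod_{s\in\mathcal{Q}}(x-\zeta^s)$, so
\[
\mathcal{Z}(C)=\{0\}\cup\mathcal{Q}.
\]
By Eq.~\eqref{Equation:HermitianDualZeros}, the Hermitian dual is cyclic with
\[
\mathcal{Z}(C^{\perp_H})=\mathbb{Z}_p\setminus(-r\mathcal{Z}(C))=\mathbb{Z}_p\setminus\bigl(\{0\}\cup(-r)\mathcal{Q}\bigr).
\]
For cyclic codes with $\gcd(p,q)=1$, inclusion of codes reverses inclusion of zero sets: $C_1\subseteq C_2$ iff $g_2\mid g_1$ iff $\mathcal{Z}(C_2)\subseteq\mathcal{Z}(C_1)$. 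I will note briefly that this holds because $x^p-1$ is separable, so generator polynomials are products of distinct linear factors over $\mathbb{F}_{q^m}$. Hence $C\subseteq C^{\perp_H}$ is equivalent to
\[
\mathbb{Z}_p\setminus\bigl(\{0\}\cup(-r)\mathcal{Q}\bigr)\subseteq\{0\}\cup\mathcal{Q}.
\]
The left side never contains $0$, so this reduces to $\mathbb{F}_p^\times\setminus(-r)\mathcal{Q}\subseteq\mathcal{Q}$.

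Next I use the structure of $\mathcal{Q}$ as the index-two subgroup of $\mathbb{F}_p^\times$. Note that $-r$ is nonzero mod $p$, since $\gcd(p,q)=1$. Multiplying $\mathcal{Q}$ by a nonzero $a$ gives $\mathcal{Q}$ if $a\in\mathcal{Q}$ and $\mathcal{N}$ if $a\in\mathcal{N}$.
\begin{itemize}
\item If $-r\in\mathcal{N}$, then $(-r)\mathcal{Q}=\mathcal{N}$, and the complement in $\mathbb{F}_p^\times$ is exactly $\mathcal{Q}$. The inclusion holds, so $C$ is Hermitian self-orthogonal.
\item If $-r\in\mathcal{Q}$, the complement is $\mathcal{N}$. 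This is nonempty (it has $(p-1)/2\ge1$ elements) and disjoint from $\mathcal{Q}$, so the inclusion fails.
\end{itemize}
This gives both directions.

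The only step needing care is the zero-set/inclusion correspondence and the direction of the reversal. I expect that to be the main, though mild, obstacle. It is also worth remarking that the hypothesis that $q$ is a QR mod $p$ makes $\mathcal{Q}$ closed under multiplication by $q$, so the code is well defined over $\mathbb{F}_q$. No condition on $r$ itself beyond $p\nmid r$ is needed.
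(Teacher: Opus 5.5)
Your proof is correct and follows essentially the same route as the paper: compute $\mathcal{Z}(C)=\{0\}\cup\mathcal{Q}$ and $\mathcal{Z}(C^{\perp_H})=\mathbb{Z}_p\setminus(\{0\}\cup(-r)\mathcal{Q})$, use the reversal of inclusion between codes and zero sets, and conclude from the fact that multiplication by $-r$ preserves or swaps $\mathcal{Q}$ and $\mathcal{N}$. Your additional remarks, namely separability of $x^p-1$ justifying the zero-set correspondence and $p\nmid r$, supply details that the paper instead cites from the literature.
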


\begin{proof}[Proof of Theorem~\ref{Theorem:QRHermitianOrthogonality}]
    We use two facts about cyclic codes: (i)~for two cyclic codes $C_1$ and $C_2$ with zeros $\mathcal{Z}(C_1)$ and $\mathcal{Z}(C_2)$, we have $C_1\subseteq C_2$ if and only if $\mathcal{Z}(C_2)\subseteq\mathcal{Z}(C_1)$~\cite[Lemma~3.5]{DicuangcoSole2007}, and (ii)~for a cyclic code $C\subset\mathbb{F}_q^n$, $q=r^2$, with zeros $\mathcal{Z}(C)$, the hermitian dual $C^{\perp_H}$ of $C$ is a cyclic code with zeros $\mathcal{Z}(C^{\perp_H})=\mathbb{Z}_n\setminus(-r\mathcal{Z}(C))$~\cite[Proposition~3.4]{DicuangcoSole2007}.

    Let $C$ be the length $p$ expurgated QR code over $\mathbb{F}_q$, with $q=r^2$. By definition, $C$ is the cyclic code with zeros
    \begin{equation}
        \mathcal{Z}(C):=\{0\}\cup\mathcal Q\subset\mathbb{Z}_p.
    \end{equation}
    Then the Hermitian dual $C^{\perp_H}$ of $C$ is a cyclic code with zeros
    \begin{equation}
        \mathcal{Z}(C^{\perp_H})=\mathbb{Z}_p\setminus (\{0\}\cup-r\mathcal{Q}).
    \end{equation}
    Therefore, $C\subseteq C^{\perp_H}$ if and only if $\mathbb{Z}_p\setminus(\{0\}\cup-r\mathcal{Q})\subseteq\{0\}\cup\mathcal{Q}$, or equivalently, $C\subseteq C^{\perp_H}$ if and only if $\mathcal{N}\subseteq-r\mathcal{Q}$. Since $\mathcal{Q}\le\mathbb{F}_p^\times$ is the index-$2$ subgroup of squares modulo $p$, multiplication by $a\in\mathbb{F}_p^\times$ preserves $\mathcal{Q}$ and $\mathcal{N}$ if and only if $a\in\mathcal{Q}$, and it swaps $\mathcal{Q}$ and $\mathcal{N}$ if and only if $a\in\mathcal{N}$. Thus, $\mathcal{N}\subseteq-r\mathcal{Q}$ if and only if $-r\in\mathcal{N}$, i.e., if and only if $-r$ is a quadratic non-residue modulo $p$.
\end{proof}

For the cases $q=4$ and $q=9$, we have the following corollaries.

\begin{Corollary}\label{Corollary:F4HermitianSelfOrthogonality}
    An expurgated quadratic residue code over $\mathbb{F}_4$ is Hermitian self-orthogonal if and only if the code length is $5$ or $7$ modulo $8$.
\end{Corollary}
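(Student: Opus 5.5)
Our plan is to specialize Theorem~\ref{Theorem:QRHermitianOrthogonality} to $q=4$, so $r=2$. The expurgated QR code over $\mathbb{F}_4$ of length $p$ only exists when $p$ is an odd prime with $\gcd(p,4)=1$ and $4$ a quadratic residue modulo $p$. The last condition is automatic, since $4=2^2$. By the theorem, the code is Hermitian self-orthogonal if and only if $-2$ is a quadratic non-residue modulo $p$. In terms of the Legendre symbol, the condition is $\Legendre{-2}{p}=-1$.

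We then evaluate this symbol using multiplicativity, $\Legendre{-2}{p}=\Legendre{-1}{p}\Legendre{2}{p}$, together with the two standard supplementary laws:
\begin{itemize}
\item $\Legendre{-1}{p}=1$ if and only if $p\equiv1\pmod 4$;
\item $\Legendre{2}{p}=1$ if and only if $p\equiv\pm1\pmod 8$.
\end{itemize}

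We split into the four odd residue classes modulo $8$:
\begin{itemize}
\item $p\equiv1$: the product is $(+1)(+1)=+1$;
\item $p\equiv3$: the product is $(-1)(-1)=+1$;
\item $p\equiv5$: the product is $(+1)(-1)=-1$;
\item $p\equiv7$: the product is $(-1)(+1)=-1$.
\end{itemize}
So $-2$ is a non-residue exactly when $p\equiv5$ or $7\pmod 8$, which is the claim. Equivalently, one could quote the known formula $\Legendre{-2}{p}=1\iff p\equiv1,3\pmod 8$.

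The only substantive input is the second supplementary law for $\Legendre{2}{p}$. This is classical (via Gauss's lemma), so we would cite it rather than reprove it. There is no real obstacle. The one point to remark on is that $p=2$ is excluded, because $p$ must be odd and coprime to $q$.
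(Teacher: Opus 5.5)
Your proposal is correct and follows the paper's proof essentially verbatim: you specialize Theorem~\ref{Theorem:QRHermitianOrthogonality} to $r=2$, factor $\Legendre{-2}{p}=\Legendre{-1}{p}\Legendre{2}{p}$, and check the four odd residue classes modulo $8$ using the supplementary laws. The only cosmetic difference is that the paper writes the supplements as $(-1)^{(p-1)/2}$ and $(-1)^{(p^2-1)/8}$, while you state them as congruence conditions.
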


\begin{proof}[Proof of Corollary~\ref{Corollary:F4HermitianSelfOrthogonality}]
    Let $C$ be a length $p$ expurgated QR code over $\mathbb{F}_4$. From Theorem~\ref{Theorem:QRHermitianOrthogonality}, the condition for $C$ to be Hermitian self-orthogonal is that $-2$ is a quadratic non-residue modulo $p$. Using multiplicativity of the Legendre symbol and the supplements to the law of quadratic reciprocity~\cite{NivenMontgomery1991},
    \begin{equation}
        \Legendre{-2}{p}=\Legendre{-1}{p}\Legendre{2}{p}=(-1)^{\frac{p-1}{2}}\cdot(-1)^{\frac{p^2-1}{8}}.
    \end{equation}
    Now we check the four possible cases for an odd prime $p$ mod $8$.
    \begin{equation}
        \begin{cases}
            \text{If $p\equiv1\mod{8}$, then $\frac{p-1}{2}$ is even and $\frac{p^2-1}{8}$ is even so $\Legendre{-2}{p}=1$.} \\
            \text{If $p\equiv3\mod{8}$, then $\frac{p-1}{2}$ is odd and $\frac{p^2-1}{8}$ is odd so $\Legendre{-2}{p}=1$.} \\
            \text{If $p\equiv5\mod{8}$, then $\frac{p-1}{2}$ is even and $\frac{p^2-1}{8}$ is odd so $\Legendre{-2}{p}=-1$.} \\
            \text{If $p\equiv7\mod{8}$, then $\frac{p-1}{2}$ is odd and $\frac{p^2-1}{8}$ is even so $\Legendre{-2}{p}=-1$.}
        \end{cases}
    \end{equation}
    Therefore, $-2$ is a quadratic non-residue modulo $p$ if and only if $p\equiv5\mod8$ or $p\equiv7\mod8$.
\end{proof}

\begin{Corollary}\label{Corollary:F9HermitianSelfOrthogonality}
    An expurgated quadratic residue code over $\mathbb{F}_9$ is Hermitian self-orthogonal if and only if the code length is $5$ or $11$ modulo $12$.
\end{Corollary}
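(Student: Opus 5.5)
We follow the proof of Corollary~\ref{Corollary:F4HermitianSelfOrthogonality}, now with $q=9=r^2$ and $r=3$. By Theorem~\ref{Theorem:QRHermitianOrthogonality}, a length $p$ expurgated QR code over $\mathbb{F}_9$ is Hermitian self-orthogonal if and only if $-3$ is a quadratic non-residue modulo $p$. So the whole task is to compute $\Legendre{-3}{p}$ in terms of $p$ modulo $12$.

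First we fix the admissible primes. The construction requires $\gcd(p,9)=1$, so $p\neq3$ and $p$ is an odd prime greater than $3$. Such a $p$ lies in one of the classes $1,5,7,11$ modulo $12$. The other hypothesis, that $9$ is a quadratic residue modulo $p$, holds automatically because $9=3^2$.

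Next we use multiplicativity, $\Legendre{-3}{p}=\Legendre{-1}{p}\Legendre{3}{p}$, together with the first supplement $\Legendre{-1}{p}=(-1)^{(p-1)/2}$. Quadratic reciprocity gives $\Legendre{3}{p}=(-1)^{(p-1)/2}\Legendre{p}{3}$. Multiplying, the two factors $(-1)^{(p-1)/2}$ cancel, leaving
\begin{equation}
    \Legendre{-3}{p}=\Legendre{p}{3}.
\end{equation}
This equals $1$ if $p\equiv1\pmod 3$ and $-1$ if $p\equiv2\pmod 3$. Hence $-3$ is a non-residue exactly when $p\equiv2\pmod3$. For an odd prime this is equivalent to $p\equiv5$ or $11\pmod{12}$.

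As a cross-check, we can tabulate the four classes $1,5,7,11 \pmod{12}$ directly, in the same style as the proof of Corollary~\ref{Corollary:F4HermitianSelfOrthogonality}. In each class we record the parity of $(p-1)/2$ and the value of $\Legendre{3}{p}$, and confirm the result above.

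The only point needing care is excluding $p=3$, which is ruled out by the coprimality assumption. Beyond that, the main step is the reciprocity computation, which is routine.
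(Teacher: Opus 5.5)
Your proposal is correct and follows the paper's proof essentially verbatim: it reduces to $-3$ being a non-residue via Theorem~\ref{Theorem:QRHermitianOrthogonality}, then uses multiplicativity, the first supplement, and quadratic reciprocity to get $\Legendre{-3}{p}=\Legendre{p}{3}$, which is $-1$ exactly when $p\equiv 5$ or $11 \pmod{12}$. Your explicit exclusion of $p=3$ via the coprimality hypothesis matches the paper's remark that only primes $p>3$ are admissible over $\mathbb{F}_9$.
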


\begin{proof}[Proof of Corollary~\ref{Corollary:F9HermitianSelfOrthogonality}]
    Let $C$ be a length $p$ expurgated QR code over $\mathbb{F}_9$. From Theorem~\ref{Theorem:QRHermitianOrthogonality}, the condition for $C$ to be Hermitian self-orthogonal is that $-3$ is a quadratic non-residue modulo $p$. Using multiplicativity of the Legendre symbol and quadratic reciprocity~\cite{NivenMontgomery1991},
    \begin{equation}
        \Legendre{-3}{p}=\Legendre{-1}{p}\Legendre{3}{p}=(-1)^{\frac{p-1}{2}}\cdot\Legendre{p}{3}(-1)^{\frac{p-1}{2}}=\Legendre{p}{3}.
    \end{equation}
    We have $\Legendre{p}{3}=-1$ if and only if $p\equiv2\mod{3}$. An odd prime $p$ must be $1$, $3$, $5$, $7$, or $11$ modulo $12$, and of these cases only $5$ and $11$ are compatible with $p$ being $2$ modulo $3$. Therefore, $-3$ is a quadratic non-residue mod $p$ if and only if $p\equiv5\mod12$ or $p\equiv11\mod12$.
\end{proof}

In the definition of quadratic residue codes above, we have some conditions on the length $p$ of the code that are required in order for the code to exist. Namely, we assume that $p$ is an odd prime, that $\gcd(p,q)=1$, and that $q$ is a quadratic residue modulo $p$. In the case of quantum quadratic residue codes on $d$-dimensional qudits, the corresponding expurgated quadratic residue code is over $\mathbb{F}_{d^2}$. That is, $q=d^2$ is a square. Therefore, $q$ is a quadratic residue modulo any prime $p$ such that $\gcd(p,q)=1$, and so the third condition is satisfied automatically. For the case $q=4$, any odd prime $p$ satisfies $\gcd(p,4)=1$, and so the conclusion of Corollary~\ref{Corollary:F4HermitianSelfOrthogonality} is that there exists a quantum QR code on qubits for any odd-prime length $p$ such that $p$ is $5$ or $7$ modulo $8$. For the case $q=9$, any odd prime $p>3$ satisfies $\gcd(p,9)=1$, and so with Corollary~\ref{Corollary:F9HermitianSelfOrthogonality}, there exists a quantum QR code on qutrits for any odd-prime length $p$ such that $p$ is $5$ or $11$ modulo $12$.

\subsection{Parameters of quantum QR codes}

As shown in the previous section, a quantum quadratic residue code on $d$-dimensional qudits exists for any odd-prime length $p$ such that $\gcd(p,d^2)=1$, and $-d$ is a quadratic non-residue modulo $p$. In particular, quantum QR codes on qubits exist for any odd-prime length that is $5$ or $7$ modulo $8$, and quantum QR codes on qutrits exist for any odd-prime length that is $5$ or $11$ modulo $12$.

The cardinality of a length $p$ expurgated QR code over $\mathbb{F}_q$, $q=r^2$, is $|C|=q^{(p-1)/2}=r^{p-1}$. Therefore, a length $p$ quantum QR code on $d$-dimensional qudits has a stabilizer group of size $d^{p-1}$, and has code space of dimension $d^{p-(p-1)}=d$. That is, quantum quadratic residue codes always encode exactly one qudit.

We can also determine the distance of quantum QR codes in terms of the distance of the corresponding expurgated QR codes. This is achieved by the following theorem.

\begin{Theorem}\label{Theorem:QQRDistance}
    Let $C$ be a Hermitian self-orthogonal expurgated quadratic residue code with parameters $[p,(p-1)/2,\delta_C]_{d^2}$. The distance of the corresponding $[\![p,1,\delta_Q]\!]_d$ quantum quadratic residue code $\iota^{-1}(C)$ is $\delta_Q=\delta_C-1$.
\end{Theorem}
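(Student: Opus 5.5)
The plan is to identify the set $C^{\perp_{MTH}}\setminus C$ that appears in Eq.~\eqref{Equation:QuantumCodeDistance}, and then to use the Gleason--Prange theorem to compare minimum weights across it.

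\textbf{Step 1: identify $C^{\perp_H}$.} Since $C$ is $\mathbb{F}_{d^2}$-linear, Lemma~\ref{Lemma:MTHEqualsHermitian} gives $C^{\perp_{MTH}}=C^{\perp_H}$. Hence $\delta_Q=\min\{wt(c)\mid c\in C^{\perp_H}\setminus C\}$. By Eq.~\eqref{Equation:HermitianDualZeros}, $C^{\perp_H}$ is cyclic with zeros
\begin{equation*}
\mathbb{Z}_p\setminus(\{0\}\cup -d\mathcal{Q}).
\end{equation*}
Because $-d\in\mathcal{N}$ (Theorem~\ref{Theorem:QRHermitianOrthogonality}), we have $-d\mathcal{Q}=\mathcal{N}$, so the zero set is exactly $\mathcal{Q}$. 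Therefore $C^{\perp_H}$ is the augmented QR code $A$ with generator polynomial $\prod_{s\in\mathcal{Q}}(x-\zeta^s)$. The code $C$ is its even-like subcode, cut out by the single extra zero at $1$. So $C^{\perp_H}\setminus C$ is precisely the set of odd-like codewords $c\in A$, those with $c(1)\neq0$, and $\delta_Q$ is the minimum odd-like weight of $A$.

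\textbf{Step 2: pass to the extended code.} Let $\hat{A}$ be the extended QR code of length $p+1$, obtained by appending the parity coordinate $\infty$. If a scaling of the parity digit is needed over $\mathbb{F}_{d^2}$, it does not affect weights. Let $w$ be the minimum weight of $\hat A$. Codewords of $\hat A$ split into two kinds:
\begin{itemize}
\item those vanishing at $\infty$, which are exactly the words of $C$ with an extra zero appended, so their weights equal the weights in $C$;
\item those nonzero at $\infty$, which are the odd-like words of $A$ with one extra nonzero coordinate, so their weights are the weights in $A\setminus C$ plus one.
\end{itemize}
It follows at once that $\delta_C\ge w$ and $\delta_Q+1\ge w$.

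\textbf{Step 3: transitivity.} By the Gleason--Prange theorem, $\mathrm{Aut}(\hat A)$ contains a copy of $\mathrm{PSL}(2,p)$ acting transitively on the $p+1$ coordinates $\mathbb{Z}_p\cup\{\infty\}$.
\begin{itemize}
\item Take a minimum-weight word of $\hat A$. Its support has size $w<p+1$, so some coordinate lies outside it. Moving that coordinate to $\infty$ by an automorphism (monomial, so weight-preserving) yields a weight-$w$ word vanishing at $\infty$. Hence $\delta_C=w$.
\item Likewise, moving a support coordinate to $\infty$ yields a weight-$w$ word nonzero at $\infty$. Hence $\delta_Q+1=w$.
\end{itemize}
Combining the two gives $\delta_Q=\delta_C-1$.

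\textbf{Main obstacle.} The step needing the most care is invoking Gleason--Prange over $\mathbb{F}_{d^2}$. The classical statement for general $q$ requires the extended coordinate to be defined with a specific constant, and the automorphisms are monomial rather than pure permutations. I would check that the standard formulation in \cite{MacWilliamsSloane1977,PlessBrualdi1998} covers any $q$ that is a residue mod $p$, and note that only transitivity on coordinates by weight-preserving monomial maps is used. Everything else reduces to the zero-set bookkeeping of Step 1.
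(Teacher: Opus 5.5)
Your proof is correct and follows essentially the paper's route: you identify $C^{\perp_H}$ as the augmented QR code via the zero sets and $-d\mathcal{Q}=\mathcal{N}$, pass to the extended QR code, and use the coordinate-transitivity supplied by the Gleason--Prange theorem. The only difference is that the paper cites Prange's weight-distribution identities (shortening preserves the minimum distance, puncturing lowers it by exactly one, and a weight-$(\delta_C-1)$ word of $C^{\perp_H}$ cannot lie in $C$), whereas you run the underlying transitivity argument directly on a minimum-weight word and identify $C^{\perp_H}\setminus C$ with the odd-like words; this is slightly more elementary and makes the ``$\setminus C$'' step automatic.
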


The proof of this theorem relies on Prange's Theorem.

\begin{Theorem*}[Prange's Theorem]
    Let $C$ be a homogeneous $[n,k,\delta]$ code over $\mathbb{F}_q$ with $\delta>1$. Let $C^*$ be the code obtained from $C$ by puncturing on some coordinate, and let $C_*$ be the code obtained from $C$ by shortening on some coordinate. Then for $0\le i\le n-1$, we have
    \begin{align}\label{Equation:PrangeWeights}
        \begin{cases}
            A_i(C^*)=\frac{n-i}{n}A_i(C)+\frac{i+1}{n}A_{i+1}(C)\\
            A_i(C_*)=\frac{n-i}{n}A_i(C)
        \end{cases}
    \end{align}
\end{Theorem*}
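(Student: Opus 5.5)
The plan is a double-counting argument that uses the homogeneity of $C$. Here I take ``homogeneous'' in the usual sense (as in MacWilliams and Sloane): the group of monomial automorphisms of $C$, meaning coordinate permutations composed with nonzero coordinate scalings that map $C$ to itself, acts transitively on the $n$ coordinates. Fix the coordinate $j$ used for puncturing and shortening. For each weight $w$ define
\begin{equation}
    Z_w(j)=\left|\{c\in C\;|\;wt(c)=w,\;c_j=0\}\right|,\qquad N_w(j)=\left|\{c\in C\;|\;wt(c)=w,\;c_j\neq0\}\right|.
\end{equation}
The proof has three steps: show these numbers do not depend on $j$, evaluate them by counting, and then express the weight distributions of $C^*$ and $C_*$ through them.

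\textbf{Independence of $j$.} Let $\sigma$ be a monomial automorphism whose underlying permutation sends $j$ to $j'$. Then $\sigma$ is a bijection of $C$ that preserves Hamming weight, because nonzero scalars keep nonzero entries nonzero. It also maps codewords vanishing at $j$ to codewords vanishing at $j'$. Hence $Z_w(j)=Z_w(j')$ and $N_w(j)=N_w(j')$. By transitivity, both counts are the same for every coordinate.

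\textbf{Evaluating the counts.} Count the pairs $(c,k)$ with $wt(c)=w$ and $c_k=0$. Summing over codewords first gives $(n-w)A_w(C)$, since each weight-$w$ word has $n-w$ zero coordinates. Summing over coordinates first gives $nZ_w$. Therefore $Z_w=\frac{n-w}{n}A_w(C)$. In the same way, counting pairs with $c_k\neq0$ gives $N_w=\frac{w}{n}A_w(C)$.

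\textbf{Shortened code.} $C_*$ is obtained by taking the subcode $\{c\in C\;|\;c_j=0\}$ and deleting coordinate $j$. Deleting a zero entry preserves weight, and the deletion map is injective on this subcode. So $A_i(C_*)=Z_i=\frac{n-i}{n}A_i(C)$, which is the second formula.

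\textbf{Punctured code.} This case is the only real subtlety, and it is where the hypothesis $\delta>1$ enters. Deleting coordinate $j$ is a linear map $C\to C^*$. Its kernel consists of codewords supported on $\{j\}$, that is, codewords of weight at most $1$. Since $\delta>1$, the kernel is trivial and the map is a bijection. A codeword $c$ maps to a word of weight $i$ exactly when one of two cases holds:
\begin{itemize}
    \item $wt(c)=i$ and $c_j=0$;
    \item $wt(c)=i+1$ and $c_j\neq0$.
\end{itemize}
Hence
\begin{equation}
    A_i(C^*)=Z_i+N_{i+1}=\frac{n-i}{n}A_i(C)+\frac{i+1}{n}A_{i+1}(C),
\end{equation}
valid for $0\le i\le n-1$. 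This completes the proof of both formulas.
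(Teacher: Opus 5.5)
The paper does not prove this statement. It quotes it as Theorem~10.12 of Pless--Brualdi and only uses its consequences for extended QR codes.

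Your double-counting argument is the standard proof, and it is correct. Once $Z_w(j)$ and $N_w(j)$ are known not to depend on $j$, the counts $Z_w=\frac{n-w}{n}A_w(C)$ and $N_w=\frac{w}{n}A_w(C)$ follow. The shortened formula then needs nothing further. The hypothesis $\delta>1$ enters exactly where you use it: it makes deletion of coordinate $j$ injective on $C$. Without it, homogeneity would put a weight-one codeword on every coordinate, and each word of $C^*$ would have $q$ preimages.

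The one thing to adjust is your reading of ``homogeneous.'' The paper treats transitivity of the automorphism group only as a \emph{sufficient} condition for homogeneity. The notion intended there, and the only thing your argument needs, is weaker: for every $w$, the number of weight-$w$ codewords that are nonzero at coordinate $j$ is independent of $j$. As written, you have proved the theorem only under the stronger transitivity assumption. However, you use transitivity solely to derive the $j$-independence of $Z_w(j)$ and $N_w(j)$. So replacing your first step by this definition gives the full statement with no other change.

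Working with monomial rather than purely permutational automorphisms is the right generality for how the paper applies the theorem. Over fields of odd characteristic such as $\mathbb{F}_9$, the Gleason--Prange automorphisms of extended QR codes involve sign changes on coordinates.
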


The version of Prange's theorem quoted here is Theorem~10.12 in Ref.~\cite{PlessBrualdi1998}. An immediate consequence of this theorem is that the weight enumerators of the codes $C^*$ and $C_*$ can be obtained by taking partial derivatives of the weight enumerator polynomial of $C$. In particular,
\begin{equation}\label{Equation:PrangeDerivative1}
    W_{C^*}(x,y)=\frac{1}{n}\left(\frac{\partial}{\partial x} + \frac{\partial}{\partial y}\right)W_C(x,y)
\end{equation}
and
\begin{equation}\label{Equation:PrangeDerivative2}
    W_{C_*}(x,y)=\frac{1}{n}\frac{\partial}{\partial x}W_C(x,y).
\end{equation}

The precondition for Prange's theorem to apply is that the code in question must be homogeneous. A sufficient condition for homogeneity is that the automorphism group of the code acts transitively on the coordinate positions~\cite{PlessBrualdi1998}. The permutation action of the automorphism group on the coordinate positions of the extended quadratic residue code is transitive (in fact, it is $2$-transitive). This follows from the Gleason-Prange theorem~\cite{MattsonAssmus1964,Blahut1991}, which states that, labeling the coordinates of the extended QR code by the elements of the projective line $\mathrm{P}^1(\mathbb{F}_p)=\{0,1,\dots,p-1,\infty\}$, the permutation action of the automorphism group contains a copy of the projective special linear group $\mathrm{PSL}(2,p)$. The action of $\mathrm{PSL}(2,p)$ on the projective line is $2$-transitive, and so Prange's theorem applies to extended QR codes. The augmented and expurgated QR codes are obtained from the extended QR code by puncturing and shortening respectively, and so we can obtain the weight distributions of these codes from the weight distribution of the extended code via Prange's theorem.

\medskip

We can now proceed to the proof of Theorem~\ref{Theorem:QQRDistance}.

\begin{proof}[Proof of Theorem~\ref{Theorem:QQRDistance}]
    Let $C$ be a $[p,(p-1)/2,\delta_C]_{d^2}$ Hermitian self-orthogonal expurgated quadratic residue code, and let $I=\iota^{-1}(C)$ be the corresponding $[\![p,1,\delta_Q]\!]_d$ quantum quadratic residue code. Since $C$ is $\mathbb{F}_{d^2}$-linear, by Lemma~\ref{Lemma:MTHEqualsHermitian}, $C^{\perp_{MTH}}=C^{\perp_H}$, so
    \begin{equation}
        \delta_Q:=\min\left\{swt(v)\;|\;v\in I^\perp\setminus I\right\}=\min\left\{wt(c)\;|\;c\in C^{\perp_H}\setminus C\right\}.
    \end{equation}

    The zeros of the expurgated QR code are $\mathcal{Z}(C)=\{0\}\cup\mathcal{Q}$. Therefore, with Eq.~\eqref{Equation:HermitianDualZeros}, we have $\mathcal{Z}\left(C^{\perp_H}\right)=\mathbb{Z}_p\setminus(\{0\}\cup(-d\mathcal{Q}))$. Since $C$ is Hermitian self-orthogonal, $-d\in\mathcal{N}$ by Theorem~\ref{Theorem:QRHermitianOrthogonality}, so multiplication by $-d$ swaps $\mathcal{Q}$ and $\mathcal{N}$, and therefore $-d\mathcal{Q}=\mathcal{N}$. Thus $\mathcal{Z}(C^{\perp_H})=\mathbb{Z}_p\setminus(\{0\}\cup\mathcal{N})=\mathcal{Q}$, i.e. $C^{\perp_H}$ is the length-$p$ augmented QR code.
    
    Now let $\tilde{C}$ be the length $p+1$ extended QR code over $\mathbb{F}_{d^2}$. Puncturing $\tilde{C}$ on any coordinate gives the augmented code, and shortening on any coordinate gives the expurgated code (see, e.g., \cite{PlessBrualdi1998}). Prange's theorem, Eq.~\eqref{Equation:PrangeWeights}, implies that shortening preserves the minimum distance while puncturing decreases it by exactly $1$. Hence $\delta(C^{\perp_H})=\delta(\tilde C)-1$ and $\delta(C)=\delta(\tilde C)$, so $\delta(C^{\perp_H})=\delta_C-1$.
    
    Because, by definition, $C$ has no nonzero vector of weight less than $\delta_C$, any vector in $C^{\perp_H}$ of weight $\delta_C-1$ lies in $C^{\perp_H}\setminus C$, and therefore $\delta_Q=\min\{wt(c)\;|\;c\in C^{\perp_H}\setminus C\}=\delta_C-1$.
\end{proof}

\begin{Corollary}\label{Corollary:SquareRootBound}
    The minimum distance of a $[\![p,1,\delta]\!]_d$ quantum quadratic residue code satisfies
    \begin{equation}
        \delta\ge\lceil\sqrt{p}\rceil-1.
    \end{equation}
\end{Corollary}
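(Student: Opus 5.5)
By Theorem~\ref{Theorem:QQRDistance}, $\delta=\delta_C-1$, where $\delta_C$ is the minimum distance of the Hermitian self-orthogonal expurgated QR code $C$ over $\mathbb{F}_{d^2}$. So it suffices to prove the classical square-root bound $\delta_C\ge\lceil\sqrt{p}\rceil$. In the proof of Theorem~\ref{Theorem:QQRDistance} we also saw that $\delta_C=\delta(\tilde C)$, where $\tilde C$ is the extended code, and that the augmented code $C^{\perp_H}$ has minimum distance $\delta_C-1$. Since $C$ is the subcode of the augmented code cut out by the extra zero at $0$, it is convenient to work with the augmented code. Writing $\delta_A$ for its minimum distance, the goal becomes $\delta_A\ge\lceil\sqrt p\rceil-1$, or equivalently the standard form $\delta_A^2\ge p$ applied to the odd-like words.

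I will run the classical argument. First choose a minimum-weight codeword $c(x)$ of the augmented code $\mathcal{A}$ (zeros $\mathcal{Q}$) that is odd-like, meaning $c(1)\neq0$. Such a word exists: puncturing the extended code shows that minimum-weight words of $\mathcal{A}$ come from the extended code. Alternatively, argue directly as follows. A minimum-weight word of $\mathcal{A}$ not in $C$ is odd-like. If every minimum-weight word lay in $C$, then $\delta_A=\delta_C$, which contradicts $\delta_A=\delta_C-1$.

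Next fix $n\in\mathcal{N}$ and let $c_n(x)=c(x^n)$. This is a codeword of the augmented code with zeros $\mathcal{N}$, and it has the same weight. The product $c(x)c_n(x)$ modulo $x^p-1$ vanishes at every $\zeta^k$ with $k\in\mathcal{Q}\cup\mathcal{N}$. At $x=1$ it equals $c(1)^2\neq0$. Hence $c(x)c_n(x)$ is a nonzero multiple of $1+x+\cdots+x^{p-1}$, i.e.\ the all-ones vector scaled by $c(1)^2$, which has weight $p$. A product of two polynomials with $w$ terms each has at most $w^2$ terms. This gives $\delta_A^2\ge p$, so $\delta_A\ge\lceil\sqrt p\rceil$.

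Combining, $\delta_C=\delta_A+1\ge\lceil\sqrt p\rceil+1$, and then $\delta=\delta_C-1\ge\lceil\sqrt p\rceil$, which is even stronger than the claim. The main point needing care is the existence of an odd-like minimum-weight word, which the Prange relation from the proof of Theorem~\ref{Theorem:QQRDistance} supplies. If the direct counting were in doubt, I would fall back on applying the square-root argument to $C$ itself, using an odd-like word of the augmented code of weight at most $\delta_C$. This route gives $\delta_C\ge\lceil\sqrt p\rceil$ and hence exactly the stated bound $\delta\ge\lceil\sqrt p\rceil-1$.
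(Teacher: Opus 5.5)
Your proof is correct. It follows the paper's skeleton: reduce via Theorem~\ref{Theorem:QQRDistance}, then use the square-root bound. The difference that matters is where the bound is applied. The paper cites the classical bound and applies it to $\delta_C$. You reprove it and apply it where it actually holds, namely to the odd-like words of the augmented code $C^{\perp_H}$.

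That is why you get $\delta\ge\lceil\sqrt p\rceil$ instead of $\lceil\sqrt p\rceil-1$. The sharper bound is consistent with Table~\ref{Table:CodeParameters} and is attained at $p=5,7$, where the paper's bound is not. Your argument is also self-contained and works verbatim over $\mathbb{F}_{d^2}$.

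You can shorten it further. By Lemma~\ref{Lemma:MTHEqualsHermitian} and Eq.~\eqref{Equation:QuantumCodeDistance}, $\delta$ is by definition the minimum weight on $C^{\perp_H}\setminus C$. Since $C^{\perp_H}$ is the augmented code (shown in the proof of Theorem~\ref{Theorem:QQRDistance}), this set is precisely its odd-like words, those with $c(1)\neq0$. Running your product argument on a minimum-weight element of that set gives $\delta^2\ge p$ directly. This needs no Prange's theorem, no homogeneity of the extended code, and no existence claim for odd-like minimum-weight words. Taking $n=-1\in\mathcal{N}$ when $p\equiv3\pmod 4$ even gives the refinement $\delta^2-\delta+1\ge p$.

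Two small corrections:
\begin{itemize}
\item The product reduces to $(c(1)^2/p)(1+x+\cdots+x^{p-1})$, not $c(1)^2$ times the all-ones word. It is still nonzero because $p$ is a unit in $\mathbb{F}_{d^2}$, which is where $\gcd(p,d)=1$ enters.
\item $\delta_A^2\ge p$ is strictly stronger than $\delta_A\ge\lceil\sqrt p\rceil-1$, not equivalent to it.
\end{itemize}

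Your fallback is unnecessary. It also would not remove the dependence you were worried about, because producing an odd-like word of weight at most $\delta_C$ needs the same transitivity/Prange input.
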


\begin{proof}[Proof of Corollary~\ref{Corollary:SquareRootBound}]
    This follows from Theorem~\ref{Theorem:QQRDistance} together with the square root bound for classical quadratic residue codes~\cite[Ch.~16]{MacWilliamsSloane1977}.
\end{proof}

The parameters of some of the Hermitian self-orthogonal expurgated quadratic residue codes over $\mathbb{F}_4$ and $\mathbb{F}_9$ are shown in Table~\ref{Table:CodeParameters}, together with the parameters of the corresponding qubit and qutrit quantum quadratic residue codes.

\begin{table}[ht]
    \centering
    \parbox{0.49\linewidth}{
    \centering
    \begin{tabular}{|c|c|}
        \hline
        Classical code & Quantum code \\
        \hline\hline
        $[5,2,4]_4$ & $[\![5,1,3]\!]_2$ \\
        \hline
        $[7,3,4]_4$ & $[\![7,1,3]\!]_2$ \\
        \hline
        $[13,6,6]_4$ & $[\![13,1,5]\!]_2$ \\
        \hline
        $[23,11,8]_4$ & $[\![23,1,7]\!]_2$ \\
        \hline
        $[29,14,12]_4$ & $[\![29,1,11]\!]_2$ \\
        \hline
        $[31,15,8]_4$ & $[\![31,1,7]\!]_2$ \\
        \hline
        $[37,18,12]_4$ & $[\![37,1,11]\!]_2$ \\
        \hline
        $[47,23,12]_4$ & $[\![47,1,11]\!]_2$ \\
        \hline
        $[53,26,16]_4$ & $[\![53,1,15]\!]_2$ \\
        \hline
        $[61,30,18]_4$ & $[\![61,1,17]\!]_2$ \\
        \hline
        $[71,35,12]_4$ & $[\![71,1,11]\!]_2$ \\
        \hline
    \end{tabular}
    }
    \parbox{0.49\linewidth}{
    \centering
    \begin{tabular}{|c|c|}
        \hline
        Classical code & Quantum code\\
        \hline\hline
        $[5,2,4]_9$ & $[\![5,1,3]\!]_3$ \\
        \hline
        $[11,5,6]_9$ & $[\![11,1,5]\!]_3$ \\
        \hline
        $[17,8,8]_9$ & $[\![17,1,7]\!]_3$ \\
        \hline
        $[23,11,9]_9$ & $[\![23,1,8]\!]_3$ \\
        \hline
        $[29,14,12]_9$ & $[\![29,1,11]\!]_3$ \\
        \hline
        $[41,20,14]_9$ & $[\![41,1,13]\!]_3$ \\
        \hline
        $[47,23,15]_9$ & $[\![47,1,14]\!]_3$ \\
        \hline
    \end{tabular}
    }
    \caption{Parameters of classical expurgated quadratic residue codes over $\mathbb{F}_4$ and $\mathbb{F}_9$ which are Hermitian self-orthogonal, and the corresponding parameters of the qubit and qutrit quantum quadratic residue codes.}
    \label{Table:CodeParameters}
\end{table}

\subsection{CSS quantum QR codes}

In some cases, quantum quadratic residue codes happen to be CSS codes. This is the result of the following theorem.

\begin{Theorem}\label{Theorem:CSSQRCodes}
    Let $C$ be a Hermitian self-orthogonal expurgated quadratic residue code of length $p$ over $\mathbb{F}_{d^2}$. If an expurgated quadratic residue code $D$ of length $p$ over $\mathbb{F}_d$ exists, then the corresponding quantum quadratic residue code $\iota^{-1}(C)$ is a CSS code with $Z$ and $X$ stabilizers given by $D$. That is, $\iota^{-1}(C)$ has a generator matrix of the form
    \begin{equation}
        G=\begin{bmatrix}
            H & | & 0\\
            0 & | & H
        \end{bmatrix},
    \end{equation}
    where $H$ is a generator matrix of $D$.
\end{Theorem}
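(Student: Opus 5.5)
The plan is to show that $C$ is obtained from $D$ by extension of scalars, $C=D\otimes_{\mathbb{F}_d}\mathbb{F}_{d^2}$, and then to read off $\iota^{-1}(C)$ coordinatewise in the basis $\{1,\alpha\}$.

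\emph{Step 1: $C$ and $D$ share a generator polynomial.} The existence of $D$ means $d$ is a quadratic residue modulo $p$. The splitting field of $x^p-1$ over $\mathbb{F}_d$ contains the one over $\mathbb{F}_{d^2}$, so we may fix a single primitive $p^{th}$ root of unity $\zeta$ for both codes. Let
\begin{equation*}
    g(x)=(x-1)\prod_{s\in\mathcal{Q}}(x-\zeta^s).
\end{equation*}
Its coefficients lie in $\mathbb{F}_d$, because the Frobenius $\zeta\mapsto\zeta^d$ permutes the roots $\{\zeta^s: s\in\{0\}\cup\mathcal{Q}\}$ (multiplication by $d\in\mathcal{Q}$ preserves $\mathcal{Q}$). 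Hence $g$ is simultaneously the generator polynomial of $D\subseteq\mathbb{F}_d^p$ and of $C\subseteq\mathbb{F}_{d^2}^p$.

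One subtlety is that a different choice of $\zeta$ can interchange $\mathcal{Q}$ and $\mathcal{N}$. I would deal with this by taking the same $\zeta$ in both definitions. Alternatively, one can note that the two resulting codes are equivalent under the coordinate permutation $i\mapsto ai$ with $a\in\mathcal{N}$; this permutation acts identically on the $Z$ and $X$ parts, so it preserves the CSS form.

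\emph{Step 2: $C=D+\alpha D$.} Let $H$ be a generator matrix of $D$, with rows $h_1,\dots,h_{(p-1)/2}\in\mathbb{F}_d^p$. The polynomials $x^ig(x)$ for $0\le i<(p-1)/2$ lie in $\mathbb{F}_d[x]$, and they form both an $\mathbb{F}_d$-basis of $D$ and an $\mathbb{F}_{d^2}$-basis of $C$. It follows that the rows of $H$, which are linearly independent over $\mathbb{F}_d$ and lie in $C$, also span $C$ over $\mathbb{F}_{d^2}$. One can check this by comparing dimensions: rank over $\mathbb{F}_d$ equals rank over $\mathbb{F}_{d^2}$, since rank is determined by minors with entries in $\mathbb{F}_d$. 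Therefore every $c\in C$ can be written as
\begin{equation*}
    c=\sum_i(a_i+\alpha b_i)h_i,\qquad a_i,b_i\in\mathbb{F}_d .
\end{equation*}

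\emph{Step 3: identify $\iota^{-1}(C)$.} Take $c$ as in Step 2 and write $c=v_z+\alpha v_x$. Because $\{1,\alpha\}$ is an $\mathbb{F}_d$-basis and the $h_i$ have entries in $\mathbb{F}_d$, the decomposition is unique, with
\begin{equation*}
    v_z=\sum_ia_ih_i\in D,\qquad v_x=\sum_ib_ih_i\in D.
\end{equation*}
Conversely, any pair $(v_z,v_x)\in D\times D$ yields $v_z+\alpha v_x\in C$. Hence $\iota^{-1}(C)=D\times D$, which is exactly the row space of $G=\begin{bmatrix}H&|&0\\0&|&H\end{bmatrix}$. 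As a consistency check, the $\mathbb{F}_d$-dimensions agree: $2\cdot\frac{p-1}{2}=p-1$.

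No further work is needed for isotropy. $C$ is Hermitian self-orthogonal by hypothesis, so $\iota^{-1}(C)$ is isotropic by the correspondence of Section~\ref{Section:CodeCorrespondence}. This is equivalent to $D\subseteq D^\perp$, which is what makes it a valid CSS code.

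The only step needing care is Step 1. There one must justify that the two codes are defined with a compatible $\zeta$, and that $g$ descends to $\mathbb{F}_d$ precisely because $d\in\mathcal{Q}$. Everything after that is linear algebra.
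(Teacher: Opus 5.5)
Your proof is correct and takes the same route as the paper. Both arguments obtain $C=D+\alpha D$ from a common generator with coefficients in $\mathbb{F}_d$ and then read off $\iota^{-1}(C)=D\times D$. The paper uses the idempotent of $D$, citing the textbook fact that it is also the idempotent of the expurgated QR code over $\mathbb{F}_{d^2}$. You instead verify directly via Frobenius that the generator polynomial has $\mathbb{F}_d$ coefficients, which has the added merit of making the compatible choice of $\zeta$ explicit.

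One harmless slip: the containment of splitting fields runs the other way, $\mathbb{F}_d(\zeta)\subseteq\mathbb{F}_{d^2}(\zeta)$. In this setting the containment is strict, since $d\in\mathcal{Q}$ and $-d\in\mathcal{N}$ force $p\equiv 3\pmod 4$, so the order of $d$ modulo $p$ is odd. A common primitive $p^{th}$ root of unity exists either way.
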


\begin{proof}[Proof of Theorem~\ref{Theorem:CSSQRCodes}]
    Fix a prime $d$ and an odd-prime $p$ such that an expurgated QR code $D$ of length $p$ over $\mathbb{F}_d$ exists, and the expurgated QR code $C$ of length $p$ over $\mathbb{F}_{d^2}$ is Hermitian self-orthogonal. Let $R=\mathbb{F}_d[x]/(x^p-1)$, and let $e(x)\in R$ be the idempotent of $D$. By Theorem~12.3 of Ref.~\cite{PlessBrualdi1998}, $e(x)$ is also the idempotent of the expurgated QR code of length $p$ over any extension field of $\mathbb{F}_d$, and in particular, $e(x)$ is the idempotent of $C$. That is, $D=\langle e(x)\rangle_{\mathbb{F}_d}$ and $C=\langle e(x)\rangle_{\mathbb{F}_{d^2}}$.

    Now use the same choice of $\alpha\in\mathbb{F}_{d^2}\setminus\mathbb{F}_d$ as in Section~\ref{Section:CodeCorrespondence}, so that $\mathbb{F}_{d^2}=\mathbb{F}_d\oplus\alpha\mathbb{F}_d$ as an $\mathbb{F}_d$-vector space, and hence $\mathbb{F}_{d^2}[x]/(x^p-1)=R\oplus \alpha R$. Therefore,
    \begin{equation}
        C=(R\oplus \alpha R)e=Re+\alpha Re=D+\alpha D.
    \end{equation}
    Then, by definition of the bijection $\iota$, we have $\iota^{-1}(C)=D\times D$. Therefore, the corresponding quantum code is a CSS stabilizer code with both $Z$-checks and $X$-checks given by $D$.

    Finally, if $H$ is a generator matrix for $D$, then the rows of $[H\,|\,0]$ and $[0\,|\,H]$ generate $\iota^{-1}(C)\cong D\times D\subset\mathbb{F}_d^n\times\mathbb{F}_d^n$. Therefore $\iota^{-1}(C)$ has generator matrix
    \begin{equation}
        G = \begin{bmatrix}
            H & | & 0\\
            0 & | & H
        \end{bmatrix},
    \end{equation}
    as claimed.
\end{proof}

\begin{Corollary}\label{Corollary:QubitCSSQRCodes}
    Quantum quadratic residue codes on qubits are CSS codes when the length of the code is a prime $p\equiv7\mod8$.
\end{Corollary}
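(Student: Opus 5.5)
The plan is to apply Theorem~\ref{Theorem:CSSQRCodes} with $d=2$. That requires two facts: the expurgated QR code $C$ of length $p$ over $\mathbb{F}_4$ is Hermitian self-orthogonal, and an expurgated QR code $D$ of length $p$ over $\mathbb{F}_2$ exists. The corollary then follows directly.

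For the first fact, I take $p\equiv7\mod8$. In particular $p\equiv5$ or $7\mod8$, so Corollary~\ref{Corollary:F4HermitianSelfOrthogonality} gives Hermitian self-orthogonality of $C$. Hence the quantum QR code $\iota^{-1}(C)$ on qubits is defined.

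For the second fact, I check the existence conditions for a QR code over $\mathbb{F}_q$ with $q=2$, as listed at the start of Section~\ref{Section:QRCodes}. We need $p$ an odd prime, $\gcd(p,2)=1$, and $2$ a quadratic residue modulo $p$. The first two are immediate. For the third I use the second supplement to quadratic reciprocity, $\Legendre{2}{p}=(-1)^{(p^2-1)/8}$. For $p\equiv7\mod8$, write $p=8k+7$. Then $p^2-1=(p-1)(p+1)=(8k+6)(8k+8)=16(4k+3)(k+1)$, so $(p^2-1)/8=2(4k+3)(k+1)$ is even. Thus $\Legendre{2}{p}=1$, and the binary expurgated QR code $D$ of length $p$ exists.

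With both hypotheses verified, Theorem~\ref{Theorem:CSSQRCodes} gives that $\iota^{-1}(C)=D\times D$ is CSS, with generator matrix $\begin{bmatrix}H&|&0\\0&|&H\end{bmatrix}$ where $H$ generates $D$. There is no real obstacle; the only step with content is the Legendre-symbol computation. As a side remark, the case $p\equiv5\mod8$ is excluded because there $\Legendre{2}{p}=-1$, so no binary QR code exists. Theorem~\ref{Theorem:CSSQRCodes} then does not apply; this is consistent with, for example, the non-CSS $5$-qubit code. Only the stated direction needs proof, however.
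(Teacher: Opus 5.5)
Your proof is correct and follows the paper's route: you use the second supplement to show that $2$ is a quadratic residue modulo $p\equiv7\pmod 8$, so a binary expurgated QR code exists, then combine this with Corollary~\ref{Corollary:F4HermitianSelfOrthogonality} and invoke Theorem~\ref{Theorem:CSSQRCodes}. You are right to prove only the stated direction: the paper phrases its conclusion as an ``if and only if,'' but Theorem~\ref{Theorem:CSSQRCodes} gives only a sufficient condition, so it does not supply the converse.
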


\begin{proof}[Proof of Corollary~\ref{Corollary:QubitCSSQRCodes}]
    An expurgated quadratic residue code of odd-prime length $p$ over $\mathbb{F}_2$ exists if and only if $2$ is a quadratic residue modulo $p$. By the second supplement to the law of quadratic reciprocity~\cite{NivenMontgomery1991}, this occurs if and only if $p\equiv\pm1\mod8$. Incorporating the Hermitian self-orthogonality condition Corollary~\ref{Corollary:F4HermitianSelfOrthogonality}, a length $p$ quantum quadratic residue code on qubits is a CSS code if and only if $p\equiv7\mod8$.
\end{proof}

\begin{Corollary}\label{Corollary:QutritCSSQRCodes}
    Quantum quadratic residue codes on qutrits are CSS codes when the length of the code is a prime $p\equiv11\mod12$.
\end{Corollary}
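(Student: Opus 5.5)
The plan is to mirror the proof of Corollary~\ref{Corollary:QubitCSSQRCodes}, combining Theorem~\ref{Theorem:CSSQRCodes} with Corollary~\ref{Corollary:F9HermitianSelfOrthogonality}. Theorem~\ref{Theorem:CSSQRCodes} reduces the claim to two conditions: the expurgated QR code of length $p$ over $\mathbb{F}_{9}$ must be Hermitian self-orthogonal, so that a quantum QR code on qutrits exists, and an expurgated QR code of length $p$ over $\mathbb{F}_3$ must exist. Once both hold, Theorem~\ref{Theorem:CSSQRCodes} directly gives that $\iota^{-1}(C)$ is CSS with $Z$ and $X$ checks given by the ternary code $D$.

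For the first condition, Corollary~\ref{Corollary:F9HermitianSelfOrthogonality} already gives that $p\equiv5$ or $11\pmod{12}$.

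For the second condition, I use the definition of QR codes at the start of Section~\ref{Section:QRCodes}. A length-$p$ QR code over $\mathbb{F}_3$ exists exactly when $p$ is an odd prime with $\gcd(p,3)=1$ and $3$ is a quadratic residue modulo $p$. To decide when $\Legendre{3}{p}=1$, apply quadratic reciprocity:
\[
\Legendre{3}{p}=\Legendre{p}{3}(-1)^{\frac{p-1}{2}}.
\]
This gives a case split over the residues of $p$ modulo $12$.
\begin{itemize}
\item $p\equiv1\pmod{12}$: both factors equal $+1$, so $\Legendre{3}{p}=1$.
\item $p\equiv5\pmod{12}$: $\Legendre{p}{3}=-1$ and the sign factor is $+1$, so $\Legendre{3}{p}=-1$.
\item $p\equiv7\pmod{12}$: $\Legendre{p}{3}=1$ and the sign factor is $-1$, so $\Legendre{3}{p}=-1$.
\item $p\equiv11\pmod{12}$: both factors equal $-1$, so $\Legendre{3}{p}=1$.
\end{itemize}
Hence $3$ is a residue modulo $p$ if and only if $p\equiv\pm1\pmod{12}$.

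Intersecting this with the Hermitian self-orthogonality condition $p\equiv5,11\pmod{12}$ leaves exactly $p\equiv11\pmod{12}$. For such $p$ we have $p>3$, so $\gcd(p,3)=1$ holds automatically. Theorem~\ref{Theorem:CSSQRCodes} then applies and gives the claim.

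There is no real obstacle. The only care needed is the sign bookkeeping in quadratic reciprocity, and noting that the statement only asserts the "when" direction. As in the qubit corollary, one can also remark that the ternary QR code exists only when $p\equiv\pm1\pmod{12}$, so Theorem~\ref{Theorem:CSSQRCodes} does not apply for $p\equiv5\pmod{12}$.
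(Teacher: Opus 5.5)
Your proposal is correct and follows the paper's proof: you combine Theorem~\ref{Theorem:CSSQRCodes} with Corollary~\ref{Corollary:F9HermitianSelfOrthogonality} and intersect with the condition $p\equiv\pm1\pmod{12}$ for $3$ to be a residue modulo $p$, which you derive directly from quadratic reciprocity where the paper simply cites it. Your remark that Theorem~\ref{Theorem:CSSQRCodes} gives only the sufficiency direction is apt, since the paper's proof states the conclusion as an ``if and only if'' that the theorem alone does not justify.
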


\begin{proof}[Proof of Corollary~\ref{Corollary:QutritCSSQRCodes}]
    An expurgated quadratic residue code of odd-prime length $p$ over $\mathbb{F}_3$ exists if and only if $3$ is a quadratic residue modulo $p$. Using the third supplement to the law of quadratic reciprocity~\cite{NivenMontgomery1991}, this occurs if and only if $p\equiv\pm1\mod12$. With the Hermitian self-orthogonality condition Corollary~\ref{Corollary:F9HermitianSelfOrthogonality}, a length $p$ quantum quadratic residue code on qutrits is a CSS code if and only if $p\equiv11\mod12$.
\end{proof}

\section{Magic state distillation with quantum QR codes}\label{Section:MSD}

Quantum quadratic residue codes are well-suited for defining $n$-to-$1$ qudit magic state distillation protocols. In this section, we first show that some quantum quadratic residue codes are equivalent to codes that are known to distill magic states, including the qubit $H$ and $T$ states and the qutrit Strange state. Then we analyze the performance of new quantum quadratic residue codes for distilling qubit $T$ states and qutrit Strange states.

\subsection{Special cases}\label{Section:MSDSpecialCases}

Some quantum quadratic residue codes are equivalent to codes that have previously been shown to distill magic states, where by equivalent we mean up to permutations of qudits and local Cliffords. For example, the $[\![5,1,3]\!]_2$ $5$-qubit code~\cite{BennettWootters1996,LaflammeZurek1996} (also called the Laflamme code) has stabilizer group
\begin{equation}
    \langle XZZXI,IXZZX,XIXZZ,ZXIXZ\rangle.
\end{equation}
This code was the first code shown to distill $T$ states~\cite{BravyiKitaev2005}, and it still has the best threshold among all codes known to distill $T$ states. This code is equivalent to the qubit quantum quadratic residue code of length $5$ (this can be verified directly, for example, using the Magma computer algebra system~\cite{Magma1997}).

As another example, the $7$-qubit Steane code~\cite{Steane1996}, with parameters $[\![7,1,3]\!]$, is a CSS code with stabilizer group
\begin{equation}
    \langle IIIXXXX,IXXIIXX,XIXIXIX,IIIZZZZ,IZZIIZZ,ZIZIZIZ\rangle.
\end{equation}
Each of the $X$ and $Z$ stabilizers is a copy of the dual of the $[7,4,3]_2$ binary Hamming code. This code distills qubit $H$ states~\cite{Reichardt2005}. The $7$-qubit Steane code is equivalent to the $7$ qubit quantum quadratic residue code. Since the length of the code is $7$ mod $8$, Corollary~\ref{Corollary:QubitCSSQRCodes} predicts that this code is a CSS code, as expected.

Finally, the $11$-qutrit Golay code is known to distill qutrit Strange states~\cite{Prakash2020}. This code is a CSS code with $Z$ and $X$ stabilizers each given by a copy of the dual of the length $11$ ternary Golay code. The length $11$ ternary Golay code is equivalent to the length $11$ ternary expurgated quadratic residue code~\cite{MacWilliamsSloane1977}, and so the $11$-qutrit Golay code is equivalent to the $11$-qutrit quantum quadratic residue code. Since the code length is $11$ mod $12$, this is a CSS code consistent with Corollary~\ref{Corollary:QutritCSSQRCodes}. Similarly, the $23$-qubit Golay code, a CSS code constructed from two copies of the dual of the length $23$ binary Golay code, is equivalent to the $23$-qubit quantum quadratic residue code. This code distills $H$ states~\cite{Reichardt2005}.

\subsection{Qubit \texorpdfstring{$T$}{T} state distillation}\label{Section:TStateDistillation}

The state $\ket{T}=\cos\beta\ket{0}+e^{i\pi/4}\sin\beta\ket{1}$ with $\cos2\beta=1/\sqrt{3}$ is one of the first proposed magic states~\cite{BravyiKitaev2005}. It is the single-qubit state with density matrix
\begin{equation}
    \ketbra{T}{T}=\frac{1}{2}\left(I+\frac{1}{\sqrt{3}}(X+Y+Z)\right).
\end{equation}
Rall showed that the performance of a magic state distillation protocol distilling $T$ states is determined by the ``signed'' weight enumerators of the corresponding code~\cite{Rall2017}, and more recently, Kalra and Prakash showed that if the code has a transversal $M_3$ gate, then the performance is determined by the usual (unsigned) weight enumerators~\cite{KalraPrakash2025}. For completeness, we partially reproduce these results connecting distillation performance to weight enumerators here. Then we show how to compute the weight enumerators of quantum quadratic residue codes, and we analyze their performance for $T$ state distillation.

\subsubsection{\texorpdfstring{$T$}{T} state distillation and weight enumerators}\label{Section:TDistillationAndWeights}

Let $M_3$ denote the single-qubit Clifford gate which cyclically permutes the Pauli operators, $M_3XM_3^\dagger=Y$, $M_3YM_3^\dagger=Z$, and $M_3ZM_3^\dagger=X$. Averaging over the group generated by $M_3$ gives a stabilizer channel
\begin{equation}
    \mathcal{T}_T(\rho):=\frac{1}{3}\sum\limits_{k\in\mathbb{Z}_3}M_3^k\rho{M_3^k}^\dagger,
\end{equation}
which maps arbitrary single-qubit states onto the line segment
\begin{equation}\label{Equation:TStateLine}
    \frac{1}{2}\mathbbm{1}+\frac{t}{2\sqrt{3}}(X+Y+Z),\quad t\in[-1,1].
\end{equation}
For $T$ state distillation, we assume we have access to a supply of noisy $T$ states of the form
\begin{equation}\label{Equation:NoisyTState}
    \rho_T(\epsilon)=(1-\epsilon)\ketbra{T}{T}+\frac{\epsilon}{2}\mathbbm{1}=\frac{1}{2}\mathbbm{1}+\frac{1-\epsilon}{2\sqrt{3}}(X+Y+Z).
\end{equation}
No generality is lost in making this assumption since given any resource state we can always apply the twirl $\mathcal{T}_T$ to get a state of this form. Similarly, after a distillation protocol is performed, we can apply the twirl $\mathcal{T}_T$ again to get a new state $\rho_T(\epsilon')$ with a new noise parameter, ideally satisfying $\epsilon'<\epsilon$. A protocol is said to distill $T$ states with threshold $\epsilon_{\mathrm{th}}$ if there exists an $\epsilon_{\mathrm{th}}(p)>0$ such that $\epsilon'(\epsilon)<\epsilon$ for all $0<\epsilon<\epsilon_{\mathrm{th}}(p)$, where $\epsilon'(\epsilon)$ is the post-distillation noise as a function of the pre-distillation noise.

The starting point of an $n$-to-$1$ qudit distillation protocol is an $n$-fold tensor product of noisy $T$ states. These states can be expanded in the Pauli basis as
\begin{equation}\label{Equation:nTStatePauliBasis}
    \rho_T(\epsilon)^{\otimes n}=\frac{1}{2^n}\sum_{v\in\mathbb{F}_2^{2n}}t(\epsilon)^{swt(v)}T_v,
\end{equation}
where $t(\epsilon)=(1-\epsilon)/\sqrt{3}$, and $swt(v)$ is the symplectic weight of the Pauli label $v\in\mathbb{F}_2^{2n}$.

Fix an $n$-to-$1$ qudit distillation protocol defined by a $[\![n,1]\!]_2$ stabilizer code $I\subset\mathbb{F}_2^{2n}$, post-selection on measurement outcomes $r:I\to\mathbb{F}_2$, and decoding Clifford $D$. The recovery map of the protocol is
\begin{equation}
    \mathcal{R}\left(\rho^{\otimes n}\right)=\Tr_{2,\dots,n}\left(D\Pi_I^r\rho^{\otimes n}\Pi_I^rD^\dagger\right).
\end{equation}
Let $P_s(\epsilon)=\Tr(\mathcal{R}(\rho_T(\epsilon)^{\otimes n}))$ be the success probability of the protocol starting with resource state $\rho_T(\epsilon)$. This gives us the first connection to the signed weight enumerator of the code $(I,r)$, defined as
\begin{equation}
    W_I^r(x,y):=\sum\limits_{v\in I}(-1)^{r(v)}x^{n-swt(v)}y^{swt(v)},
\end{equation}
analogous to Eq.~\eqref{Equation:WeightEnumerator}. Namely, the success probability can be written as
\begin{align}
    P_s(\epsilon) =& \Tr(\Pi_I^r\rho_T(\epsilon)^{\otimes n})\\
    =& \frac{1}{|I|}\sum\limits_{v\in I}(-1)^{r(v)}\Tr(T_v\rho_T(\epsilon)^{\otimes n})\\
    =& \frac{1}{2^{n-1}}\sum\limits_{v\in I}(-1)^{r(v)}t(\epsilon)^{swt(v)}\\
    =& \frac{1}{2^{n-1}}W_I^r\left(1,t(\epsilon)\right).
\end{align}
Here we get the second line by expanding the projector as in Eq.~\eqref{Equation:PauliProjector} and using linearity of the trace, and the third line uses the Pauli basis expansion of noisy $T$ states, Eq.~\eqref{Equation:nTStatePauliBasis}. 

Next we compute the post-distillation state,
\begin{equation}
    \rho':=\frac{\mathcal{R}\left(\rho_T(\epsilon)^{\otimes n}\right)}{P_s(\epsilon)},
\end{equation}
which, after twirling by $\mathcal{T}_T$ gives a state $\mathcal{T}_T(\rho')=\rho_T(\epsilon')$ with noise parameter $\epsilon'$. First, define the signed coset weight enumerators of the code $(I,r)$ as
\begin{equation}
    W_{u+I}^r(x,y)=\sum\limits_{v\in I}(-1)^{r(v)+\beta(u,v)}x^{n-swt(u+v)}y^{swt(u+v)},\quad\forall u\in I^\perp\setminus I,
\end{equation}
and the signed dual weight enumerator as
\begin{equation}
    W_{I^\perp}^r(x,y)=\sum\limits_{u\in I^\perp/I}W_{u+I}^r(x,y)
\end{equation}
where the outer sum is over representatives of each coset. For any non-identity single-qubit Pauli operator $T_u,\,u\in\{x,y,z\}$, the corresponding logical Pauli operator for the code is $T_{\bar{u}}=D^\dagger(T_u\otimes\mathbbm{1}_{2^{n-1}})D$. Since $\bar{u}\in I^\perp$, $T_{\bar{u}}$ commutes with the projector $\Pi_I^r$, and so the Pauli expectation value of the post-distillation state is
\begin{align}
    \Tr(T_u\rho') =& \frac{1}{P_s(\epsilon)}\Tr(T_u\Tr_{2,\dots,n}(D\Pi_I^r\rho_T(\epsilon)^{\otimes n}\Pi_I^rD^\dagger))\\
    =& \frac{1}{P_s(\epsilon)}\Tr(T_{\bar{u}}\Pi_I^r\rho_T(\epsilon)^{\otimes n})\\
    =& \frac{1}{2^{n-1}\cdot P_s(\epsilon)}\sum\limits_{v\in I}(-1)^{r(v)+\beta(\bar{u},v)}t(\epsilon)^{swt(\bar{u}+v)}\\
    =& \frac{W_{\bar{u}+I}^r\left(1,t(\epsilon)\right)}{W_I^r\left(1,t(\epsilon)\right)}.
\end{align}
After twirling the post-distillation state by $\mathcal{T}_T$, we get a depolarized $T$ state $\rho_T(\epsilon')$ with
\begin{align}
    \frac{1-\epsilon'}{\sqrt{3}} =& \frac{1}{3}\frac{W_{\bar{x}+I}^r\left(1,t(\epsilon)\right)+W_{\bar{y}+I}^r\left(1,t(\epsilon)\right)+W_{\bar{z}+I}^r\left(1,t(\epsilon)\right)}{W_I^r\left(1,t(\epsilon)\right)}\\
    =& \frac{1}{3}\frac{W_{I^\perp}^r\left(1,t(\epsilon)\right)-W_I^r\left(1,t(\epsilon)\right)}{W_I^r\left(1,t(\epsilon)\right)},
\end{align}
where the last equality follows from $I^\perp=I\sqcup(\bar{x}+I)\sqcup(\bar{y}+I)\sqcup(\bar{z}+I)$. Solving this for $\epsilon'$, we get the post-distillation noise $\epsilon'$ as a function of the pre-distillation noise $\epsilon$,
\begin{equation}\label{Equation:TPostStateSignedWeights}
    \epsilon'(\epsilon)=1-\frac{1}{\sqrt{3}}\frac{W_{I^\perp}^r\left(1,t(\epsilon)\right)-W_I^r\left(1,t(\epsilon)\right)}{W_I^r\left(1,t(\epsilon)\right)}.
\end{equation}

\medskip

At this stage, we will reduce generality by assuming the code $(I,r)$ is an $M_3$ code, i.e., it has a transversal $M_3$ gate. This means a logical $M_3$ gate is simply a physical $M_3$ gate applied to each qubit. A transversal $M_3$ gate heavily constrains the signs $r:I\to\mathbb{F}_2$. Concretely, a transversal $M_3$ gate means that conjugation by $M_3^{\otimes n}$ must preserve the stabilizer group $S_I^r=\{(-1)^{r(a)}T_a\;|\;a\in I\}$, so that for every stabilizer group element $P\in S_I^r$, we also have $M_3^{\otimes n}P(M_3^{\otimes n})^\dagger\in S_I^r$. This gives a constraint on the signs known as Rall's rule, the proof of which we reproduce here.

\begin{Lemma*}[Rall's rule~\cite{Rall2017,KalraPrakash2025}]
    Let $(I,r)$ be an $[\![n,1]\!]_2$ stabilizer code with a transversal $M_3$ gate. Then for every $u\in I$, the stabilizer group element $P_u:=(-1)^{r(u)}T_u$ has even symplectic weight, and its sign $(-1)^{r(u)}$ is determined by
    \begin{equation}\label{Equation:RallsRule}
        (-1)^{r(u)}=
        \begin{cases}
            +1, & swt(u)\equiv0\pmod 4,\\
            -1, & swt(u)\equiv2\pmod 4.
        \end{cases}
    \end{equation}
    Equivalently, since $swt(u)$ is even, $(-1)^{r(u)}=(-1)^{swt(u)/2}$.
\end{Lemma*}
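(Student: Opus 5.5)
The plan is to work one tensor factor at a time, using the fact that the stabilizer group $S_I^r$ does not contain $-\mathbbm{1}$. Fix $u\in I$ and write $P_u=(-1)^{r(u)}T_u$. On each qubit $k$ in the support of $u$, $T_u$ acts as one of $X,Y,Z$ (the phase convention in Eq.~\eqref{Equation:PauliOperators} makes $T_u$ a Hermitian tensor product of $\{I,X,Y,Z\}$, with $T_{(1,1)}=-iZX=Y$). Since $M_3^{\otimes n}$ preserves $S_I^r$, the operators
\[
P_u,\qquad Q:=M_3^{\otimes n}P_u(M_3^{\otimes n})^\dagger,\qquad R:=(M_3^{\otimes n})^2P_u(M_3^{\otimes n}{}^\dagger)^2
\]
all lie in $S_I^r$. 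Each carries the same sign $(-1)^{r(u)}$, because $M_3$ permutes $X\to Y\to Z\to X$ without signs. On every qubit of the support, the three operators act as the three distinct Paulis $X,Y,Z$ in some order.

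The key step is to compute the product $P_uQR$ qubit by qubit. On a qubit where $P_u$ acts as $X$, the product is $XYZ=i\mathbbm{1}$. Where $P_u$ acts as $Y$, it is $YZX=i\mathbbm{1}$, and where $P_u$ acts as $Z$, it is $ZXY=i\mathbbm{1}$. Thus every qubit in the support contributes a factor $i$ regardless of which Pauli sits there, and
\[
P_uQR=(-1)^{3r(u)}\,i^{swt(u)}\,\mathbbm{1}=(-1)^{r(u)}\,i^{swt(u)}\,\mathbbm{1}.
\]
This product lies in $S_I^r$, and a stabilizer group contains no central element other than $\mathbbm{1}$. So $(-1)^{r(u)}i^{swt(u)}=1$.

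Both conclusions follow from this identity. Since $(-1)^{r(u)}$ is real, $i^{swt(u)}$ must be real, which forces $swt(u)$ to be even. Then $i^{swt(u)}=(-1)^{swt(u)/2}$, and the identity gives $(-1)^{r(u)}=(-1)^{swt(u)/2}$. This is $+1$ when $swt(u)\equiv0\pmod4$ and $-1$ when $swt(u)\equiv2\pmod4$, which is Eq.~\eqref{Equation:RallsRule}.

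The main obstacle is the phase bookkeeping. First, one must confirm that $T_u$ is literally a signless tensor product of $X,Y,Z$ on its support, so that all three conjugates share the sign $(-1)^{r(u)}$. Second, one must check that the product ordering $XYZ$, $YZX$, $ZXY$ always yields $+i$, which holds because all three are cyclic orderings. The only other input is that $-\mathbbm{1}$ and $\pm i\mathbbm{1}$ are excluded from a stabilizer group, which is part of the definition of a stabilizer code.
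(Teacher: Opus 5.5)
Your proof is correct and follows essentially the same route as the paper: multiply $P_u$ by its two $M_3^{\otimes n}$-conjugates, note that each qubit in the support contributes a factor $i$, and use that a stabilizer group contains no nontrivial central element. Your $R$ uses conjugation by $(M_3^{\otimes n})^2$ rather than the paper's $(M_3^{\otimes n})^\dagger$, but these give the same operator since $M_3^3$ is a scalar. Your checks that $T_{(1,1)}=Y$ and that $\pm i\mathbbm{1}$ are also excluded make explicit what the paper leaves implicit.
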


\begin{proof}[Proof of Rall's rule]
    Fix $u\in I$ and let $P_u=(-1)^{r(u)}T_u\in S_I^r$. Define its $M_3$ conjugates
    \begin{equation}
        P_u':=M_3^{\otimes n}P_u(M_3^{\otimes n})^\dagger,\qquad\text{and}\qquad
        P_u'':=(M_3^{\otimes n})^\dagger P_uM_3^{\otimes n}.
    \end{equation}
    Because $I$ is an $M_3$ code, $P_u',P_u''\in S_I^r$. Since $S_I^r$ is a stabilizer group, it is abelian and closed under multiplication, so $P_uP_u'P_u''\in S_I^r$.
    
    On the other hand, $M_3$ cyclically permutes single-qubit Paulis without introducing additional signs. Therefore, on any qubit where $T_u$ acts nontrivially (as $X$, $Y$, or $Z$), the product of its three $M_3$-orbit elements is $X\cdot Y\cdot Z = i\mathbbm{1}$, and on any qubit where $T_u$ acts trivially the product is $\mathbbm{1}$. Thus,
    \begin{equation}
        P_uP_u'P_u'' = (-1)^{r(u)}i^{swt(u)}\mathbbm{1}.
    \end{equation}
    But a stabilizer group cannot contain $-\mathbbm{1}$, only $\mathbbm{1}$, and so $(-1)^{r(u)}i^{swt(u)}=1$. This forces $i^{swt(u)}\in\{\pm1\}$, so $swt(u)$ is even, and furthermore $(-1)^{r(u)}=+1$ if $swt(u)\equiv0\pmod4$ and $(-1)^{r(u)}=-1$ if $swt(u)\equiv2\pmod4$.
\end{proof}

Rall's rule implies that the signed weight enumerator $W_I^r$ is determined completely by the (unsigned) weight distribution of $I$, and in particular, we can see that if $(I,r)$ is an $M_3$ code, then $A_w=0$ for all odd $w$, and
\begin{equation}\label{Equation:SignedEnumeratorFromSimple}
    W_I^r(x,y)=\sum\limits_{a\in I}(-1)^{r(a)}x^{n-swt(a)}y^{swt(a)}
    =\sum_{w=0}^n A_w(-1)^{w/2}x^{n-w}y^w.
\end{equation}
Equivalently, $W_I^r(x,y)=W_I(x,iy)$~\cite{KalraPrakash2025}, where the right-hand side is real-valued because only even weights occur. In particular, we can rewrite the success probability as
\begin{equation}\label{Equation:TsSuccessProbSimple}
    P_s(\epsilon)=\frac{1}{2^{n-1}}W_I^r\left(1,t(\epsilon)\right)
    =\frac{1}{2^{n-1}}\sum_{w=0}^n A_w(-1)^{w/2}t(\epsilon)^w
    =\frac{1}{2^{n-1}}W_I\left(1,it(\epsilon)\right).
\end{equation}

With the assumption of a transversal $M_3$ gate, a similar simplification is possible for the post-distillation state $\rho'$. The Pauli expectation values of the post-distillation state are related by
\begin{align}
    \Tr(T_y\rho') =& \Tr(M_3T_xM_3^\dagger\rho')\\
    =& \frac{1}{P_s(\epsilon)}\Tr(M_3^{\otimes n}T_{\bar{x}}{M_3^\dagger}^{\otimes n}\Pi_I^r\rho_T(\epsilon)^{\otimes n})\\
    =& \frac{1}{P_s(\epsilon)}\Tr(T_{\bar{x}}\Pi_I^r(M_3^\dagger\rho_T(\epsilon)M_3)^{\otimes n})\\
    =& \frac{1}{P_s(\epsilon)}\Tr(T_{\bar{x}}\Pi_I^r\rho_T(\epsilon)^{\otimes n})=\Tr(T_x\rho').
\end{align}
Here in the third line we use the cyclic property of the trace and the fact that since $M_3^{\otimes n}$ is a logical gate, it must preserve the code space, and so it commutes with the projector $\Pi_I^r$. Similarly, $\Tr(T_z\rho')=\Tr(T_y\rho')$. Therefore, no twirling is necessary as the post-distillation state $\rho'$ is already on the line segment Eq.~\eqref{Equation:TStateLine}.

We can choose the logical Pauli operators $\bar{x},\bar{y},\bar{z}\in I^\perp\setminus I$ such that the corresponding logical Pauli operators are cyclically permuted by the transversal $M_3$ gate, i.e., $M_3^{\otimes n}T_{\bar{x}}(M_3^{\otimes n})^\dagger=T_{\bar{y}}$, $M_3^{\otimes n}T_{\bar{y}}(M_3^{\otimes n})^\dagger=T_{\bar{z}}$, and $M_3^{\otimes n}T_{\bar{z}}(M_3^{\otimes n})^\dagger=T_{\bar{x}}$. Then every element of $I^\perp\setminus I$ lies in exactly one of the three cosets $\bar{x}+I$, $\bar{y}+I$, or $\bar{z}+I$, and we get a rule similar to Rall's rule for the signs of the elements of these cosets.

\begin{Lemma}[\cite{KalraPrakash2025}]\label{Lemma:OddCosetRule}
    Let $(I,r)$ be an $[\![n,1]\!]_2$ $M_3$ code, and choose $\bar{x},\bar{y},\bar{z}\in I^\perp\setminus I$ as above. For any $u\in I^\perp\setminus I$, define the signed Pauli operator $P_u$ by writing $u=\bar{v}+w$ with $\bar{v}\in\{\bar{x},\bar{y},\bar{z}\}$ and $w\in I$, and setting
    \begin{equation}
        P_u:=(-1)^{r(w)+\beta(\bar{v},w)}T_u.
    \end{equation}
    Then every $u\in I^\perp\setminus I$ has odd symplectic weight, and moreover, there exists a sign $\lambda\in\{\pm1\}$, independent of $u$, such that
    \begin{equation}\label{Equation:OddCosetSignRule}
        (-1)^{r(w)+\beta(\bar{v},w)}=\lambda(-1)^{(swt(u)-1)/2}.
    \end{equation}
\end{Lemma}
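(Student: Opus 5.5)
The plan is to mimic the proof of Rall's rule given above, but applied to logical rather than stabilizer elements. The key idea is to multiply the three $M_3$-conjugates of the signed operator $P_u$ and compute the result in two ways: physically, qubit by qubit, and logically, via the action on the code space.

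\textbf{Step 1: $P_u$ acts on the code space as a fixed logical Pauli.} Fix $u\in I^\perp\setminus I$ and write $u=\bar v+w$ with $\bar v\in\{\bar x,\bar y,\bar z\}$ and $w\in I$. Since $[\bar v,w]=0$, Eq.~\eqref{Equation:BetaDefinition} with $d=2$ and $\omega=-1$ gives $T_{\bar v}T_w=(-1)^{\beta(\bar v,w)}T_u$. Hence
\begin{equation}
    P_u=T_{\bar v}\,\bigl((-1)^{r(w)}T_w\bigr).
\end{equation}
The factor $(-1)^{r(w)}T_w$ is an element of $S_I^r$, so it acts as the identity on the code space $\Pi_I^r$. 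Therefore $P_u\Pi_I^r=T_{\bar v}\Pi_I^r$. This is exactly why the sign convention in the statement is natural.

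\textbf{Step 2: define the conjugates and their logical action.} As in Rall's rule, set
\begin{equation}
    P_u'=M_3^{\otimes n}P_u(M_3^{\otimes n})^\dagger,\qquad P_u''=(M_3^{\otimes n})^\dagger P_uM_3^{\otimes n}.
\end{equation}
The transversal $M_3$ gate commutes with $\Pi_I^r$. Combined with the chosen cyclic action on $\bar x,\bar y,\bar z$, this shows that on the code space $P_u'$ acts as the next logical operator in the cycle $\bar x\to\bar y\to\bar z\to\bar x$, and $P_u''$ acts as the previous one. Consequently
\begin{equation}
    P_uP_u'P_u''\,\Pi_I^r=T_{\bar v}T_{\mathrm{next}(\bar v)}T_{\mathrm{prev}(\bar v)}\,\Pi_I^r,
\end{equation}
which is one of the cyclic products $T_{\bar x}T_{\bar y}T_{\bar z}\Pi_I^r$, $T_{\bar y}T_{\bar z}T_{\bar x}\Pi_I^r$ or $T_{\bar z}T_{\bar x}T_{\bar y}\Pi_I^r$.

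\textbf{Step 3: the logical product is a constant $\mu=\pm i$, independent of $u$.} Restricted to the code space, the three logical operators are Hermitian, pairwise anticommuting, square to the identity, and are cyclically permuted by a unitary. They therefore form a Pauli triple on a qubit, so their product is $\mu\mathbbm 1$ with $\mu\in\{\pm i\}$. Cyclic rotations of the product do not change it, because $ABC=A(BC)$ and $BCA$ are related by conjugation by $A$, and $\mu$ is a scalar. Hence $\mu$ depends only on the code and the choice of $\bar x,\bar y,\bar z$, not on $u$. Write $\mu=\lambda i$ with $\lambda\in\{\pm1\}$.

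\textbf{Step 4: compute the product physically.} $P_u'$ and $P_u''$ are $P_u$ with each nontrivial tensor factor replaced by its $M_3$-image and $M_3^{-1}$-image. The operator $M_3$ permutes $X\to Y\to Z$ without signs, and the phase convention of Eq.~\eqref{Equation:PauliOperators} makes each single-qubit factor of $T_u$ exactly $X$, $Y$ or $Z$. On each qubit where $T_u$ acts nontrivially, the product is therefore $XYZ$ in cyclic order, which equals $i$; trivial qubits contribute $\mathbbm 1$. The global sign of $P_u$ appears three times, so it survives once. This gives
\begin{equation}
    P_uP_u'P_u''=(-1)^{r(w)+\beta(\bar v,w)}\,i^{swt(u)}\,\mathbbm 1.
\end{equation}

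\textbf{Step 5: compare.} Equating the results of Steps 3 and 4 on the nonzero code space gives
\begin{equation}
    (-1)^{r(w)+\beta(\bar v,w)}\,i^{swt(u)}=\lambda i.
\end{equation}
Since the left-hand sign is real, $i^{swt(u)-1}$ must be real, so $swt(u)$ is odd. Then $i^{1-swt(u)}=(-1)^{(swt(u)-1)/2}$, which yields Eq.~\eqref{Equation:OddCosetSignRule}.

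The main obstacle is the sign bookkeeping in Steps 1 and 4. One must check that the qubit convention $T_v=i^{-\langle v_z,v_x\rangle}Z^{v_z}X^{v_x}$ really makes each tensor factor equal to $X$, $Y$ or $Z$ exactly, so that $M_3^{\otimes n}$ maps $T_u$ to $T_{S u}$ with no extra phase. One must also check that $\beta$ enters with the sign used in the statement. If a stray sign appears, it is global rather than $u$-dependent, so it can be absorbed into $\lambda$.
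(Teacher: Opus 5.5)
Your proposal is correct and follows essentially the paper's proof: you compute $P_uP_u'P_u''$ qubit by qubit as $(-1)^{r(w)+\beta(\bar v,w)}i^{swt(u)}\mathbbm{1}$, compare it with its action $\lambda i\,\Pi_I^r$ on the code space, and your Steps~1 and~3 make explicit two points the paper only asserts (that $P_u\Pi_I^r=T_{\bar v}\Pi_I^r$, and that every cyclic ordering of the logical triple gives the same scalar $\pm i$). The one flaw is your closing hedge: if the exponent $\langle v_z,v_x\rangle$ in Eq.~\eqref{Equation:PauliOperators} is read as an integer, each tensor factor of $T_u$ is exactly $X$, $Y$ or $Z$ and no stray sign arises, whereas a convention-induced sign (e.g.\ from reading that exponent mod $2$) would depend on the number of $Y$ factors in $u$ and so could not be absorbed into a global $\lambda$.
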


\begin{proof}[Proof of Lemma~\ref{Lemma:OddCosetRule}]
    Fix a $u\in I^\perp\setminus I$. We write $u=\bar{v}+w$ with $\bar{v}\in\{\bar{x},\bar{y},\bar{z}\}$ and $w\in I$ and define the corresponding signed Pauli operators $P_u=(-1)^{r(w)+\beta(\bar{v},w)}T_u$. Define its $M_3$ conjugates as
    \begin{equation}
        P_u'=M_3^{\otimes n}P_u(M_3^{\otimes n})^\dagger \quad\text{and}\quad P_u''=(M_3^{\otimes n})^\dagger P_uM_3^{\otimes n}.
    \end{equation}
    These three operators act on the code space as the three nontrivial logical Paulis in the order $\bar{X}\bar{Y}\bar{Z}$. Therefore, on the code space their product is
    \begin{equation}
        P_uP_u'P_u''\Pi_I^r=\lambda i\Pi_I^r
    \end{equation}
    where $\lambda\in\{\pm1\}$ is a fixed constant depending on the sign convention Eq.~\eqref{Equation:PauliOperators} for the logical Pauli operators.

    On the other hand, $M_3$ cyclically permutes the single-qubit Paulis without introducing extra signs, so on each qubit where $T_u$ acts nontrivially, the product $P_uP_u'P_u''$ contributes a factor of $i\mathbbm{1}$, and on each qubit where $T_u$ acts trivially, the contribution is $\mathbbm{1}$. Therefore,
    \begin{equation}
        P_uP_u'P_u''=(-1)^{r(w)+\beta(\bar{v},w)}i^{swt(u)}\mathbbm{1}.
    \end{equation}
    Multiplying this expression by $\Pi_I^r$ and comparing against the expression above, we see that this forces $swt(u)$ to be odd, and
    \begin{equation}
        (-1)^{r(w)+\beta(\bar{v},w)}=\lambda(-1)^{(swt(u)-1)/2},
    \end{equation}
    which proves the claim.
\end{proof}

Rall's rule shows that every word in $I$ has even symplectic weight, while Lemma~\ref{Lemma:OddCosetRule} shows that every word in $I^\perp\setminus I$ has odd symplectic weight. Therefore, if $\{B_w\;|\;w=0,\dots,n\}$ is the symplectic weight distribution of $I^\perp$, i.e.,
\begin{equation}
    W_{I^\perp}(x,y)=\sum_{w=0}^n B_w x^{n-w}y^w,
\end{equation}
then $B_w=A_w$ for even $w$, and the odd-weight terms come entirely from $I^\perp\setminus I$. Using Lemma~\ref{Lemma:OddCosetRule} for the numerator of Eq.~\eqref{Equation:TPostStateSignedWeights}, we get
\begin{equation}\label{Equation:OddPartRule}
    W_{I^\perp}^r\left(1,t(\epsilon)\right)-W_I^r\left(1,t(\epsilon)\right) = \lambda\sum_{\substack{w=0\\\text{$w$ odd}}}^n B_w(-1)^{(w-1)/2}t(\epsilon)^w = -i\lambda\left(W_{I^\perp}\left(1,it(\epsilon)\right)-W_I\left(1,it(\epsilon)\right)\right).
\end{equation}
This gives us three equivalent simplified expression for the post-distillation noise parameter,
\begin{align}\label{Equation:TsUpdateSimple}
    \epsilon'(\epsilon) =& 1+\frac{i\lambda}{\sqrt{3}}\frac{W_{I^\perp}(1,it(\epsilon))-W_I(1,it(\epsilon))}{W_I(1,it(\epsilon))}\\
    =&  1-\frac{\lambda}{\sqrt{3}}\frac{\sum_{\text{$w$ odd}}B_w(-1)^{(w-1)/2}t(\epsilon)^w}{\sum_{\text{$w$ even}}A_w(-1)^{w/2}t(\epsilon)^w}\\
    =& 1-\frac{\lambda}{\sqrt{3}}\frac{\Im W_{I^\perp}(1,it(\epsilon))}{W_I(1,it(\epsilon))}.
\end{align}
Thus, up to the choice of $\lambda\in\{\pm1\}$ (which essentially amounts to choosing $\ket{T}$ or $\ket{T^\perp}$ as your target magic state~\cite{KalraPrakash2025}), for $M_3$ codes the distillation map $\epsilon\mapsto\epsilon'(\epsilon)$ and the success probability $P_s(\epsilon)$ are determined entirely by the weight distributions $A_w$ and $B_w$, or equivalently, by the simple (unsigned) weight enumerators. See Refs.~\cite{Rall2017,KalraPrakash2025} for a more thorough derivation of these expressions for $\epsilon'(\epsilon)$ and $P_s(\epsilon)$.

\subsubsection{Distilling \texorpdfstring{$T$}{T} states with quantum QR codes}\label{Section:QQRTDistillation}

Throughout this section, let $C$ be a $[p,(p-1)/2]_4$ Hermitian self-orthogonal expurgated quadratic residue code, and let $I=\iota^{-1}(C)$ be the corresponding $[\![p,1]\!]_2$ quantum quadratic residue code.

We want to apply the results of the previous section to analyzing the performance of quantum quadratic residue codes for distilling $T$ states. Any stabilizer code on $n$ qubits corresponding to an isotropic subspace of dimension less than $n$ that is $\mathbb{F}_4$-linear under the map $\iota$ is compatible with a transversal $M_3$ gate~\cite{DasuBurton2025}. Since the expurgated quadratic residue code $C$ is linear, this applies to the corresponding quantum quadratic residue code $I$. That is, there exists a choice of signs $r:I\to\mathbb{F}_2$ such that the code $(I,r)$ has a transversal $M_3$ gate. This choice of signs allows us to use the simplified expressions Eq.~\eqref{Equation:TsSuccessProbSimple} and Eq.~\eqref{Equation:TsUpdateSimple} for the success probability $P_s(\epsilon)$ of the protocol and the post-distillation noise parameter $\epsilon'(\epsilon)$.

For a quantum quadratic residue code $I$, the correspondence of Section~\ref{Section:CodeCorrespondence} identifies $I\subset\mathbb{F}_2^n\times\mathbb{F}_2^n$ with an expurgated quadratic residue code $C\subset\mathbb{F}_4^n$, and with Lemma~\ref{Lemma:MTHEqualsHermitian}, the symplectic complement $I^\perp$ is identified with the Hermitian dual $C^{\perp_H}$. Under this identification, $W_I$ and $W_{I^\perp}$ are exactly the Hamming weight enumerators $W_C$ and $W_{C^{\perp_H}}$, the latter of which can be obtained from $W_C$ by the MacWilliams identity Eq.~\eqref{Equation:MacWilliamsIdentity}.

All that remains is to compute the weight enumerators of the expurgated quadratic residue codes. A direct enumeration of the weights of vectors in $C$ is infeasible for the lengths of interest since the cardinality of a code, $|C|=2^{p-1}$, is exponential in the code length. In order to compute the Hamming weight enumerators of expurgated quadratic residue codes over $\mathbb{F}_4$, we use Prange's theorem and classical methods~\cite{NebeSloane2006}.

The expurgated quadratic residue code $C$ is obtained by shortening the extended quadratic residue code $\tilde{C}$ of length $n=p+1$ on any coordinate. Therefore, $W_C$ can be obtained from $W_{\tilde{C}}$ via Prange's theorem, Eq.~\eqref{Equation:PrangeWeights}. Since $C^{\perp_H}$ is the augmented quadratic residue code of length $p$ (see the proof of Theorem~\ref{Theorem:QQRDistance}), $C^{\perp_H}$ is obtained by puncturing $\tilde{C}$ on any coordinate, the weight enumerator $W_{C^{\perp_H}}$ can also be obtained from $W_{\tilde{C}}$ by Prange's theorem. Using the partial derivative form of Prange's theorem, Eqs.~\eqref{Equation:PrangeDerivative1} and~\eqref{Equation:PrangeDerivative2}, we get an even simpler expression for $\epsilon'(\epsilon)$, namely,
\begin{equation}\label{Equation:TsUpdateDerivatives}
    \epsilon'(\epsilon)=1+\frac{i\lambda}{\sqrt{3}}\frac{\frac{\partial}{\partial y}W_{\tilde{C}}(x,y)}{\frac{\partial}{\partial x}W_{\tilde{C}}(x,y)}\Bigg|_{(x,y)=(1,it(\epsilon))}
\end{equation}

Then we only need to determine $W_{\tilde{C}}$. The code $\tilde{C}$ is Hermitian self-dual, and through invariant theory, we know that the Hamming weight enumerator of $\mathbb{F}_4$-linear Hermitian self-dual codes live in the ring $\mathbb{C}[i_2,h_6]$ where
\begin{equation}
    \begin{cases}
        i_2(x,y)=x^2+3y^2\\
        h_6(x,y)=x^6+45x^2y^4+18y^6
    \end{cases}
\end{equation}
are the Hamming weight enumerators of the $[2,1,2]_4$ repetition code and the $[6,3,4]_4$ hexacode respectively~\cite[\S7.3.1]{NebeSloane2006}. With $n=p+1$, since $W_{\tilde{C}}(x,y)$ is homogeneous of degree $n$, we can write
\begin{equation}\label{Equation:ExtendedWEDecomposition}
    W_{\tilde{C}}(x,y)=\sum_{j=0}^{\lfloor n/6\rfloor} a_ji_2(x,y)^{\frac{n}{2}-3j}h_6(x,y)^{j}.
\end{equation}
There are $\lfloor n/6\rfloor+1$ coefficients $\{a_j\}$ in Eq.~\eqref{Equation:ExtendedWEDecomposition}. There is exactly one vector in $\tilde{C}$ of weight $0$, namely, the zero vector, and so $\tilde{A}_0=1$. This gives one linear constraint on the coefficients $\{a_j\}$, leaving $\lfloor n/6\rfloor$ independent parameters. Computing $\lfloor n/6\rfloor$ additional weight multiplicities, for example $\tilde{A}_2,\tilde{A}_4,\dots,\tilde{A}_{2\lfloor n/6\rfloor}$, determines $W_{\tilde{C}}$ uniquely by solving a linear system for the coefficients $\{a_j\}$.

To obtain these low-weight multiplicities $\tilde{A}_w$ without enumerating all codewords of $\tilde{C}$, we use the algorithm of Ref.~\cite{GaboritWassermann2005}, which computes the number of codewords of specified small weights from a complementary pair of generator matrices.\footnote{We have a Julia implementation of this algorithm that runs on Nvidia GPUs available at \href{https://github.com/mzurel/WeightDistributionsGPU.jl}{github.com/mzurel/WeightDistributionsGPU.jl}.} Using this strategy, we computed the weight enumerators of all quantum quadratic residue codes on qubits up to length $71$. The weight enumerators are listed in Appendix~\ref{Section:WeightEnumeratorTables}.

Using these weight enumerators, and the relation between weight enumerators and the performance of $M_3$ codes for distilling $T$ states described in Section~\ref{Section:TDistillationAndWeights}, we find that the qubit quantum quadratic residue codes of length $5$, $23$, $29$, $47$, $53$, and $71$ distill $T$ states with nontrivial thresholds. These thresholds are listed in Table~\ref{Table:TStateThresholds}. The distillation functions $\epsilon'(\epsilon)$ for these codes are shown in Figure~\ref{Figure:TStateDistillation}, and their success probabilities $P_s(\epsilon)$ are shown in Figure~\ref{Figure:TStateSuccessProb}. An additional plot showing the functions $\epsilon'(\epsilon)$ for all qubit quantum QR codes up to length $71$, not just those with non-trivial thresholds, is shown in Appendix~\ref{Section:AllPlots}.

\begin{figure}[ht]
    \centering
    \includegraphics[width=0.85\linewidth]{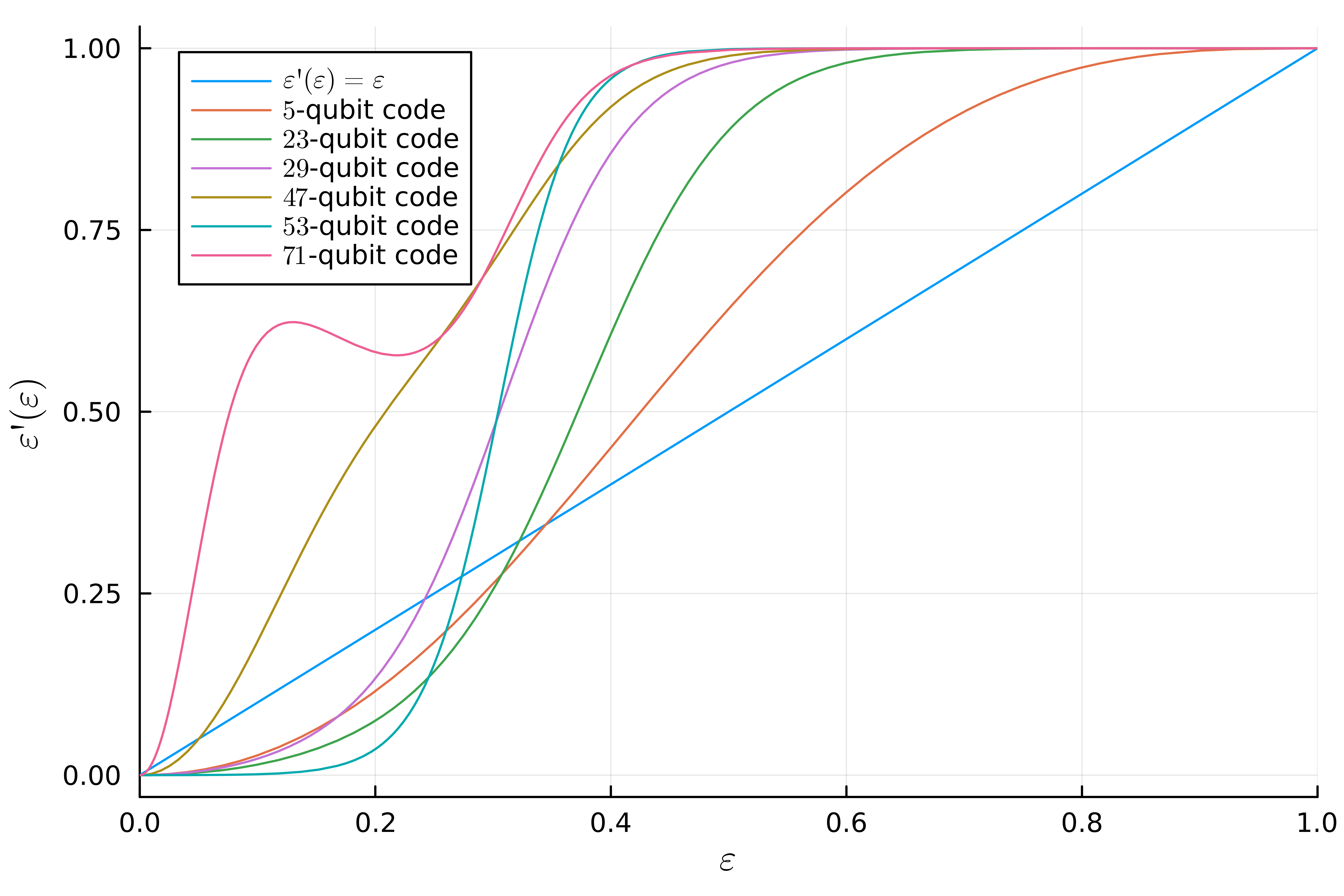}
    \caption{Post-distillation noise $\epsilon'$ as a function of pre-distillation noise $\epsilon$ for quantum quadratic residue codes that distill qubit $T$ states. The $5$-, $23$-, $29$-, $47$-, $53$-, and $71$-qubit codes have non-trivial distillation thresholds.}
    \label{Figure:TStateDistillation}
\end{figure}

\begin{table}
    \centering
    \begin{tabular}{|c|c|}
        \hline
        Code & Threshold \\
        \hline\hline
        $[5,2,4]_4$ & $\approx0.34535$ \\
        \hline
        $[23,11,8]_4$ & $\approx0.32237$ \\
        \hline
        $[29,14,12]_4$ & $\approx0.24190$ \\
        \hline
        $[47,23,12]_4$ & $\approx0.05050$ \\
        \hline
        $[53,26,16]_4$ & $\approx0.27343$ \\
        \hline
        $[71,35,12]_4$ & $\approx0.00664$ \\
        \hline
    \end{tabular}
    \caption{Noise thresholds of magic state distillation protocols for qubit $T$ states based on quantum quadratic residue codes.}
    \label{Table:TStateThresholds}
\end{table}

\begin{figure}[ht]
    \centering
    \includegraphics[width=0.85\linewidth]{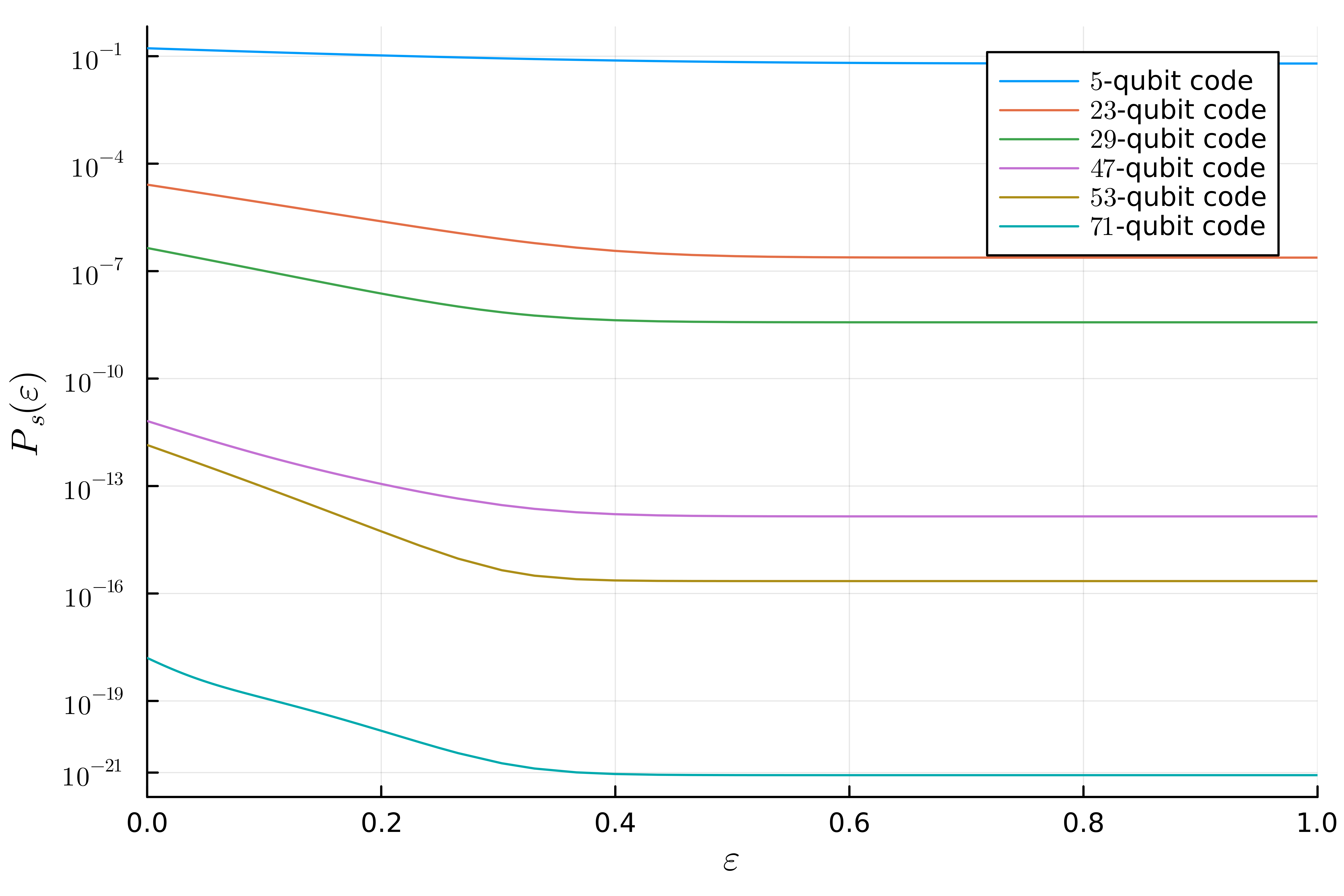}
    \caption{Success probability of magic state distillation protocols for qubit $T$ states based on quantum quadratic residue codes.}
    \label{Figure:TStateSuccessProb}
\end{figure}

\subsubsection{Infinitely many quantum QR codes distill \texorpdfstring{$T$}{T} states}

Table~\ref{Table:TStateThresholds} shows that the qubit quantum quadratic residue codes of length $5$, $23$, $29$, $47$, $53$, and $71$ distill $T$ states with non-trivial thresholds. Here we show that there are infinitely many code lengths with non-trivial threshold for $T$ state distillation. This is the result of the following theorem and corollary.

\begin{Theorem}\label{Theorem:InfiniteTDistillation}
    Let $p$ be a prime with $p\equiv5\text{ or }23\mod24$. Let $C\subset\mathbb{F}_4^p$ be the expurgated quadratic residue code of length $p$ and let $I=\iota^{-1}(C)$ be the corresponding quantum quadratic residue code. Then there is a sign $\lambda$ in Eq.~\eqref{Equation:TsUpdateSimple} such that the map $\epsilon'(\epsilon)$ for the corresponding $T$ state distillation protocol satisfies
    \begin{equation}
        \epsilon'(\epsilon)=O(\epsilon^2),\quad(\epsilon\to0).
    \end{equation}
    In other words, the quantum quadratic residue code of length $p$ distills $T$ states with a non-trivial threshold.
\end{Theorem}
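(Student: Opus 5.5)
The plan is to use the derivative form Eq.~\eqref{Equation:TsUpdateDerivatives} together with the invariant-theoretic decomposition Eq.~\eqref{Equation:ExtendedWEDecomposition}. The key observation is that the Gleason invariant $i_2$ vanishes at the noiseless point. At $\epsilon=0$ we have $t(0)=1/\sqrt3$, and the evaluation point $(x,y)=(1,it)$ gives
\begin{equation}
    i_2(1,it)=1-3t^2=1-(1-\epsilon)^2=2\epsilon-\epsilon^2 .
\end{equation}
So $i_2$ evaluated along the protocol's curve is $O(\epsilon)$. The exponents of $i_2$ in Eq.~\eqref{Equation:ExtendedWEDecomposition} are $n/2-3j$ with $n=p+1$.

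First I check the arithmetic. By Corollary~\ref{Corollary:F4HermitianSelfOrthogonality}, a qubit quantum QR code requires $p\equiv5,7\pmod 8$. The condition $p\equiv5$ or $23\pmod{24}$ is exactly the extra requirement $p\equiv5\pmod 6$, i.e.\ $6\mid n$. Writing $m=n/6$, the exponents of $i_2$ are $3(m-j)$, so they are all multiples of $3$, and the $j=m$ term is the pure power $h_6^m$. Hence
\begin{equation}
    W_{\tilde C}=a_m h_6^m+i_2^3\,G(x,y)
\end{equation}
for a polynomial $G$. Differentiating, $\partial_x W_{\tilde C}=a_m m h_6^{m-1}\partial_x h_6+O(i_2^2)$, and similarly for $\partial_y$. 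Along the curve $(1,it(\epsilon))$ the error terms are therefore $O(\epsilon^2)$.

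Next, provided $a_m\neq0$ and $h_6$ and $\partial_x h_6$ do not vanish at $(1,i/\sqrt3)$, the ratio in Eq.~\eqref{Equation:TsUpdateDerivatives} equals
\begin{equation}
    \frac{\partial_y h_6}{\partial_x h_6}\Big|_{(1,it(\epsilon))}+O(\epsilon^2).
\end{equation}
The factor $a_m m h_6^{m-1}$ cancels, so the leading behaviour is independent of $p$. This ratio is exactly the one Eq.~\eqref{Equation:TsUpdateDerivatives} produces for $p=5$, where $\tilde C$ is the hexacode and the quantum code is (equivalent to) the $5$-qubit code. For that code, $\epsilon'_5(\epsilon)=O(\epsilon^2)$ with an appropriate $\lambda$; this is Bravyi--Kitaev, and I would also verify it directly from $h_6$ by checking that $1+\frac{i\lambda}{\sqrt3}\,\partial_y h_6/\partial_x h_6$ and its $\epsilon$-derivative vanish at $\epsilon=0$. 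Therefore, with the same choice of $\lambda$,
\begin{equation}
    \epsilon'_p(\epsilon)=\epsilon'_5(\epsilon)+O(\epsilon^2)=O(\epsilon^2).
\end{equation}
The direct evaluation of $h_6$ and $\partial_x h_6$ at $(1,i/\sqrt3)$ is routine; there $h_6=1+5-\tfrac{2}{3}=\tfrac{16}{3}\neq0$.

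The main obstacle is showing $a_m\neq0$. Since $i_2(1,i/\sqrt3)=0$, we have $a_m h_6^m=W_{\tilde C}(1,i/\sqrt3)$, and $a_m m h_6^{m-1}\partial_x h_6=\partial_x W_{\tilde C}(1,i/\sqrt3)=2^{p-1}p\,P_s(0)$ by Eq.~\eqref{Equation:TsSuccessProbSimple} and Prange. So it suffices to show that the protocol accepts pure $\ket{T}^{\otimes p}$ with positive probability. I would try to prove this using the transversal $M_3$ structure: $\ket{T}^{\otimes p}$ is an $M_3^{\otimes p}$ eigenvector, and $\Pi_I^r$ commutes with $M_3^{\otimes p}$. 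One would aim to show that $\ket{T}^{\otimes p}$ is not orthogonal to the code space, for instance by expanding $P_s(0)$ as $2^{1-p}\sum_w A_w(-1)^{w/2}3^{-w/2}$ and exhibiting a nonvanishing structure, or by averaging over the $\mathrm{PSL}(2,p)$ symmetry. If that fails, a fallback is to pin down $a_m$ through the linear system relating $\{a_j\}$ to the known low weights $\tilde A_0=1$, $\tilde A_2=0,\dots$ (forced by the square root bound, Corollary~\ref{Corollary:SquareRootBound}) and show the resulting expression for $a_m$ is nonzero, e.g.\ by a $2$-adic or $3$-adic valuation argument.
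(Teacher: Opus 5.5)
You follow the paper's proof step for step. Both use the splitting $W_{\tilde C}=a_m h_6^m+i_2^3G$ (the paper's $a_0$ is your $a_m$). Both observe that $i_2(1,it(\epsilon))=2\epsilon-\epsilon^2$ makes every derivative of $i_2^3G$ of order $\epsilon^2$. Both reduce to the hexacode ratio $\Phi(t)=\partial_yh_6/\partial_xh_6$; the paper evaluates it directly ($\Phi(1/\sqrt3)=-i\sqrt3$, $\Phi'(1/\sqrt3)=0$), which is your ``verify directly from $h_6$''.

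The step you leave open is $a_m\neq0$, equivalently $W_{\tilde C}(1,i/\sqrt3)\neq0$, equivalently $P_s(0)>0$. This is a genuine gap, and the paper does not close it either. The paper divides by $a_0 m v^{m-1}\partial_x h_6$, justified only by the parenthetical ``for $\epsilon\ll1$ (so $u(\epsilon)$ is small and $W_C(1,it)\neq0$)''. At $\epsilon=0$ that parenthetical is exactly the unproved claim.

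\textbf{Why the step is essential.} It does not follow from Hermitian self-duality or from the transversal $M_3$. The self-dual code $C_2^{\oplus 3}$ has $W=i_2^3$, so its $h_6$-coefficient vanishes; its $M_3$-invariant stabilizer state is three singlets, orthogonal to $\ket{T}^{\otimes 6}$. Moreover, if $a_m=0$ the conclusion actually fails. Write $W_{\tilde C}=i_2^{3k}R$ with $R(1,i/\sqrt3)\neq0$ and $1\le k\le m$. The ratio in Eq.~\eqref{Equation:TsUpdateDerivatives} then tends to $\partial_y i_2/\partial_x i_2=3y/x=3it$, not to $\Phi(t)$. Expanding to first order, for the $\lambda$ with $\epsilon'(0)=0$, gives $\epsilon'(\epsilon)=\frac{2m-k}{k}\,\epsilon+O(\epsilon^2)$ with $\frac{2m-k}{k}\ge1$. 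So for these codes $a_m\neq0$ is equivalent to the theorem's conclusion, and any proof must use something specific to quadratic residue codes.

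\textbf{Why your fallbacks do not close it.} The top Gleason coefficient is fixed by $\tilde A_0,\tilde A_2,\dots,\tilde A_{2m}$, with $\tilde A_{2m}$ entering with coefficient one, where $2m=(p+1)/3$. The square-root bound only forces $\tilde A_w=0$ for $0<w<\sqrt p$; already at $p=23$ you would need $\tilde A_6$ and $\tilde A_8=2277$. Valuation arguments fare no better. For instance, $3^{n/2}W_{\tilde C}(1,i/\sqrt3)\equiv\pm\tilde A_n \pmod 9$, which requires congruence information on high-weight multiplicities that is not available. This first-level test is even inconclusive for the hexacode ($\tilde A_6=18$), where $a_m=1$.

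\textbf{What is actually established.} Your proposal and the paper both prove the theorem only under the extra hypothesis $W_{\tilde C}(1,i/\sqrt3)\neq0$. The tabulated enumerators confirm this hypothesis for $p=5,23,29,47,53,71$. Corollary~\ref{Corollary:InfiniteTDistillation} inherits the same hypothesis.

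\textbf{Minor slip.} By Prange, $\partial_xW_{\tilde C}=(p+1)W_C$, so the factor relating it to the success probability is $2^{p-1}(p+1)P_s(0)$, not $2^{p-1}p\,P_s(0)$.
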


\begin{proof}[Proof of Theorem~\ref{Theorem:InfiniteTDistillation}]
    The congruence $p\equiv5\text{ or }23\mod24$ implies $p\equiv5\text{ or }7\mod8$, and so a quantum quadratic residue code of this length exists by Corollary~\ref{Corollary:F4HermitianSelfOrthogonality}. It also implies that $p\equiv2\mod3$.

    Let $\tilde{C}\subset\mathbb{F}_4^{p+1}$ be the extended quadratic residue code. The expurgated quadratic residue code $C$ is obtained from $\tilde{C}$ by shortening on any coordinate, and the corresponding quantum quadratic residue code is obtained via $\iota^{-1}(C)$. We use the expression Eq.~\eqref{Equation:TsUpdateDerivatives} for the post-distillation noise parameter $\epsilon'(\epsilon)$ of the distillation protocol based on the code $C$. This expression involves evaluating partial derivatives of $W_{\tilde{C}}(x,y)$ at $(x,y)=(1,it(\epsilon))$ where $t(\epsilon)=(1-\epsilon)/\sqrt{3}$.
    
    As stated above, $\tilde{C}$ is Hermitian self-dual with Hamming weight enumerator in the ring $\mathbb{C}[i_2,h_6]$. Let $n=p+1=6m$ (so $m=(p+1)/6\in\mathbb{Z}$). Since $\deg(i_2)=2$ and $\deg(h_6)=6$, the weight enumerator of $\tilde{C}$ must have the form
    \begin{equation}\label{Equation:CTildeInvariantExpansion}
        W_{\tilde{C}}(x,y)=\sum_{j=0}^m a_ji_2(x,y)^{3j}h_6(x,y)^{m-j}= a_0h_6(x,y)^m + i_2(x,y)^3R(x,y)
    \end{equation}
    for some coefficients $a_j\in\mathbb{C}$ and some homogeneous polynomial $R\in\mathbb{C}[i_2,h_6]$.

    Let $u(\epsilon) := i_2(1,it(\epsilon))=1-3t^2 = 2\epsilon-\epsilon^2$ and $v(\epsilon):= h_6(1,it(\epsilon))$. Differentiating Eq.~\eqref{Equation:CTildeInvariantExpansion} for $W_{\tilde{C}}(x,y)$, we have
    \begin{equation}
        \begin{cases}
            \frac{\partial W_{\tilde{C}}(x,y)}{\partial x}\bigg|_{(x,y)=(1,it(\epsilon))}=a_0mv(\epsilon)^{m-1}\frac{\partial h_6(x,y)}{\partial x}\bigg|_{(x,y)=(1,it(\epsilon))}+O(u(\epsilon)^2),\vspace{2mm}\\
            \frac{\partial W_{\tilde{C}}(x,y)}{\partial y}\bigg|_{(x,y)=(1,it(\epsilon))}=a_0mv(\epsilon)^{m-1}\frac{\partial h_6(x,y)}{\partial y}\bigg|_{(x,y)=(1,it(\epsilon))}+O(u(\epsilon)^2).
        \end{cases}
    \end{equation}
    Here the error terms are $O(u^2)$ because every term coming from differentiating $i_2^3R$ carries at least a factor $i_2^2$. Therefore, for $\epsilon\ll1$ (so $u(\epsilon)$ is small and $W_C(1,it)\neq0$),
    \begin{equation}
        \frac{\frac{\partial}{\partial y}W_{\tilde{C}}(x,y)}{\frac{\partial}{\partial x}W_{\tilde{C}}(x,y)}\Bigg|_{(x,y)=(1,it(\epsilon))}=\frac{\frac{\partial}{\partial y}h_6(x,y)}{\frac{\partial}{\partial x}h_6(x,y)}\Bigg|_{(x,y)=(1,it(\epsilon))} + O(u(\epsilon)^2).
    \end{equation}
    Computing these partial derivatives, we find
    \begin{equation}
        \Phi(t):=\frac{\frac{\partial}{\partial y}h_6(x,y)}{\frac{\partial}{\partial x}h_6(x,y)}\Bigg|_{(x,y)=(1,it(\epsilon))}=i\frac{-30t^3+18t^5}{1+15t^4}.
    \end{equation}
    At $\epsilon=0$, we have $t=1/\sqrt{3}$ and so $\Phi(1/\sqrt{3})=-i\sqrt{3}$. Furthermore,
    \begin{equation}
        \frac{d\Phi}{dt}=i\frac{90t^2(t^2+1)^2(3t^2-1)}{(1+15t^4)^2},
    \end{equation}
    so at $t=1/\sqrt{3}$, $d\Phi/dt=0$. Since $t-1/\sqrt{3}=-\epsilon/\sqrt{3}$, this implies $\Phi(t)=\Phi(1/\sqrt{3})+O(\epsilon^2)=-i\sqrt{3}+O(\epsilon^2)$, and therefore
    \begin{equation}
        \frac{\frac{\partial}{\partial y}W_{\tilde{C}}(x,y)}{\frac{\partial}{\partial x}W_{\tilde{C}}(x,y)}\Bigg|_{(x,y)=(1,it(\epsilon))} = -i\sqrt{3} + O(\epsilon^2).
    \end{equation}
    Finally, we plug this into Eq.~\eqref{Equation:TsUpdateDerivatives}. With the choice of $\lambda$ for which $\epsilon'(0)=0$, the constant term cancels and we obtain $\epsilon'(\epsilon)=O(\epsilon^2)$. Thus, $\epsilon'(\epsilon)/\epsilon\to 0$ as $\epsilon\to0$, so there exists $\epsilon_{\mathrm{th}}(p)>0$ such that $\epsilon'(\epsilon)<\epsilon$ for all $\epsilon\in(0,\epsilon_{\mathrm{th}}(p))$, i.e. a non-trivial threshold.
\end{proof}

\begin{Corollary}\label{Corollary:InfiniteTDistillation}
    There are infinitely many quantum quadratic residue codes that distill $T$ states with non-trivial threshold.
\end{Corollary}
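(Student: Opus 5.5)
The corollary follows by combining Theorem~\ref{Theorem:InfiniteTDistillation} with Dirichlet's theorem on primes in arithmetic progressions. Theorem~\ref{Theorem:InfiniteTDistillation} shows that for every prime $p\equiv5$ or $23\pmod{24}$, the length-$p$ qubit quantum quadratic residue code distills $T$ states with a non-trivial threshold. So it suffices to show there are infinitely many such primes.

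Since $\gcd(5,24)=1$ and $\gcd(23,24)=1$, Dirichlet's theorem gives infinitely many primes in each of the progressions $5+24k$ and $23+24k$. One progression is already enough; for instance, $p\equiv23\equiv-1\pmod{24}$ would do on its own. One could also avoid Dirichlet by an elementary Euclid-type argument for primes $\equiv-1\pmod{24}$. Take $N=24\prod p_i-1$ over any finite list of such primes. Since $N\equiv-1\pmod{24}$ and $N$ is coprime to $6$, all its prime factors lie in the residues $1,5,7,11,13,17,19,23$ modulo $24$. This set is a group of exponent $2$, so a product of primes $\equiv1\pmod{24}$ cannot give $-1$. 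It does not immediately follow that some factor is $\equiv-1$, however, so I would simply cite Dirichlet rather than refine this argument.

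I also need to check that distinct primes give genuinely distinct codes. They do, because the code length is $p$. Each such $p$ also satisfies the existence condition of Corollary~\ref{Corollary:F4HermitianSelfOrthogonality}, since $p\equiv5$ or $7\pmod 8$, as noted in the theorem's proof.

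The only potential obstacle is this number-theoretic input, and it is resolved by a standard citation of Dirichlet's theorem.
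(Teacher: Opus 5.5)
Your proposal is correct and matches the paper's proof. Like the paper, you combine Theorem~\ref{Theorem:InfiniteTDistillation}, which covers primes $p\equiv5$ or $23\pmod{24}$ (these automatically satisfy the existence condition $p\equiv5$ or $7\pmod 8$), with Dirichlet's theorem to obtain infinitely many admissible prime lengths; the Euclid-type digression you set aside is unnecessary but harmless.
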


\begin{proof}[Proof of Corollary~\ref{Corollary:InfiniteTDistillation}]
    There exists a quantum quadratic residue code for every prime length $p\equiv5\text{ or }7\mod8$. By Theorem~\ref{Theorem:InfiniteTDistillation}, they are guaranteed to distill $T$ states whenever $p\equiv5\text{ or }23\mod24$. By Dirichlet's theorem, there are infinitely many prime lengths $p$ satisfying this condition.
\end{proof}

\subsection{Qutrit Strange state distillation}

As evidenced by the $11$-qutrit Golay code example given in Section~\ref{Section:MSDSpecialCases}, some quantum quadratic residue codes are useful for distilling qutrit Strange states~\cite{VeitchEmerson2014,JainPrakash2020,Prakash2020}. The Strange state is the single-qutrit state $\ket{S}:=(\ket{1}-\ket{2})/\sqrt{2}$. Prakash and Singhal showed that the performance of a magic state distillation protocol for distilling Strange states is determined by the weight distribution of the corresponding code, and relies on a relation between the weight enumerator of the code and the Wigner function of the state~\cite{PrakashSinghal2026}. For completeness, we include a derivation of this fact here.

\subsubsection{Wigner functions and weight enumerators}

The phase space point operators of the Wigner function~\cite{Gross2006,Gross2008} on $n$ $d$-dimensional qudits are
\begin{equation}
    A_u=\frac{1}{d^n}\sum\limits_{v\in\mathbb{F}_d^{2n}}\omega^{[u,v]}T_v,\qquad\forall u\in\mathbb{F}_d^{2n}.
\end{equation}
These operators are Hermitian, have unit trace, and together they form an orthogonal basis for the real vector space of Hermitian operators on the $n$-qudit Hilbert space satisfying the orthogonality relation $\Tr(A_uA_v)=d^n\delta_{u,v}$. The Wigner function of a state $\rho$ is given by the coefficients of $\rho$ when expanded in this basis,
\begin{equation}\label{Equation:WignerExpansion}
    \rho=\sum\limits_{u\in\mathbb{F}_d^{2n}}W_\rho(u)A_u,
\end{equation}
which, by orthogonality, can be extracted by $W_\rho(u)=\Tr(\rho A_u)/d^n$. Since the phase space point operators are Hermitian, the Wigner function is real, and taking a trace of Eq.~\eqref{Equation:WignerExpansion}, we get the normalization condition, $1=\sum_{u\in\mathbb{F}_d^{2n}}W_\rho(u)$. A summary of how the Wigner function transforms under the stabilizer operations is given in Appendix~\ref{Section:PhaseSpaceMethods}.

The Wigner function of the Strange state has a particularly simple form. As a density matrix, the pure Strange state is
\begin{equation}
    \ketbra{S}{S}=\frac{1}{3}\mathbbm{1}-\frac{1}{2}\left(A_0-\frac{1}{3}\mathbbm{1}\right)=-\frac{1}{3}A_0+\frac{1}{6}\sum\limits_{u\in\mathbb{F}_3^2\setminus\{0\}}A_u,
\end{equation}
and therefore has Wigner function $W_{\ketbra{S}{S}}(0)=-1/3$ and $W_{\ketbra{S}{S}}(u)=1/6$ for all $u\ne0\in\mathbb{F}_3^2$. When subject to depolarizing noise, the noisy Strange state,
\begin{equation}\label{Equation:NoisyStrangeState}
    \rho_S(\epsilon):=(1-\epsilon)\ketbra{S}{S}+\frac{\epsilon}{3}\mathbbm{1},
\end{equation}
has Wigner function
\begin{equation}\label{Equation:NoisyStrangeStateWigner}
    W_{\rho_S(\epsilon)}(u)=\begin{cases}
            \frac{4\epsilon-3}{9}\quad\text{if $u=0$,}\\
            \frac{3-\epsilon}{18}\quad\text{if $u\ne0$.}
		\end{cases}
\end{equation}

We will assume that the resource states for magic state distillation are depolarized Strange states of the form Eq.~\eqref{Equation:NoisyStrangeState}. No generality is lost in making this assumption since we can always apply a group twirl with respect to the subgroup of the Clifford group that fixes $A_0$ to get a state that lies on the line running through $\ketbra{S}{S}$ and the maximally mixed state. Concretely, let
\begin{equation}
    G_0=\{g\in\mathcal{C}\ell\;|\;gA_0g^\dagger=A_0\}\cong \mathrm{Sp}(\mathbb{F}_3^2).
\end{equation}
The group twirl
\begin{equation}\label{Equation:StrangeTwirl}
    \mathcal{T}_S(\rho)=\frac{1}{|G_0|}\sum\limits_{g\in G_0}g\rho g^\dagger
\end{equation}
is a stabilizer operation that does not change the value of the Wigner function at the origin of phase space, $W_{\mathcal{T}_S(\rho)}(0)=W_\rho(0)$, but makes the Wigner function $W_{\mathcal{T}_S(\rho)}(u)$ uniform on all other phase space points. With the normalization of the Wigner function, $\sum_uW_\rho(u)=1$, the Wigner function of any state after the twirl is determined uniquely by the value of the Wigner function at the origin, or equivalently, by the noise parameter $\epsilon$ in Eq.~\eqref{Equation:NoisyStrangeState}.

Let $x(\epsilon)=(4\epsilon-3)/9$ and $y(\epsilon)=(3-\epsilon)/18$ be the values of the Wigner function of the state $\rho_S(\epsilon)$ in Eq.~\eqref{Equation:NoisyStrangeStateWigner}. Then the initial state used for magic state distillation, an $n$-fold tensor product of noisy Strange states, can be written as
\begin{equation}
    \rho_S(\epsilon)^{\otimes n}=\left(x(\epsilon)A_0+y(\epsilon)\sum\limits_{u\in\mathbb{F}_3^{2}\setminus\{0\}}A_u\right)^{\otimes n}=\sum\limits_{u\in\mathbb{F}_3^{2n}}x(\epsilon)^{n-swt(u)}y(\epsilon)^{swt(u)}A_u,
\end{equation}
and so the Wigner function of this state is
\begin{equation}\label{Equation:nStrangeStateWignerFunction}
    W_{\rho_S(\epsilon)^{\otimes n}}(u)=x(\epsilon)^{n-swt(u)}y(\epsilon)^{swt(u)}.
\end{equation}

Fix an $n$-to-$1$ qutrit distillation protocol defined by an $[\![n,1]\!]_3$ stabilizer code $I\subset\mathbb{F}_3^{2n}$, post-selection on measurement outcomes $r:I\to\mathbb{F}_3$, and decoding Clifford $D$. The recovery map of the protocol is
\begin{equation}\label{Equation:MSDRecoveryMap}
	\mathcal{R}(A):=\Tr_{2,3,\dots,n}\left(D\Pi_I^rA\Pi_I^rD^\dagger\right).
\end{equation}
When applied to a resource state $\rho$, the success probability of the protocol is given by $\Tr(\mathcal{R}(\rho^{\otimes n}))$, and the post-distillation state looks like
\begin{equation}
	\rho'=\frac{\mathcal{R}(\rho^{\otimes n})}{\Tr(\mathcal{R}(\rho^{\otimes n}))}=\frac{\Tr_{2,3,\dots,n}(D\Pi_I^r\rho^{\otimes n}\Pi_I^rD^\dagger)}{\Tr(D\Pi_I^r\rho^{\otimes n}\Pi_I^rD^\dagger)}.
\end{equation}

Post-selecting on measurement outcomes $r(u)=0$ for all $u\in I$, with Lemma~\ref{Lemma:PhaseSpacePauliUpdate} in Appendix~\ref{Section:PhaseSpaceMethods}, we have
\begin{equation}
    \Pi_I^0A_v\Pi_I^0=\delta_{v\in I^\perp}\frac{1}{|I|}\sum\limits_{w\in v+I}A_w,
\end{equation}
and so using the expansion $\rho^{\otimes n}=\sum_{v\in\mathbb{F}_3^{2n}}W_{\rho^{\otimes n}}(v)A_v$, the subnormalized post-measurement state $\Pi_I^0\rho^{\otimes n}\Pi_I^0$ has Wigner function
\begin{equation}
    W_{\Pi_I^0\rho^{\otimes n}\Pi_I^0}(u)=\begin{cases}
        \frac{1}{|I|}\sum\limits_{v\in u+I}W_{\rho^{\otimes n}}(v)\quad & \text{if $u\in I^{\perp}$,}\\
        0\quad & \text{otherwise.}
    \end{cases}
\end{equation}
The decoding $D$ identifies the cosets $I^\perp/I\cong\mathbb{F}_3^2$ with the phase space of the first physical qutrit, and so the Wigner function of the normalized post-distillation state is
\begin{equation}\label{Equation:StabilizerReductionWignerFunction}
    W_{\rho'}(u)=\frac{\sum\limits_{v\in\bar{u}+I}W_{\rho^{\otimes n}}(v)}{\sum\limits_{u\in\mathbb{F}_3^2}\sum\limits_{v\in\bar{u}+I}W_{\rho^{\otimes n}}(v)}=\frac{\sum\limits_{v\in\bar{u}+I}W_{\rho^{\otimes n}}(v)}{\sum\limits_{v\in I^\perp}W_{\rho^{\otimes n}}(v)}.
\end{equation}
where $\bar{u}\in I^\perp$ is a representative of the coset of $u\in\mathbb{F}_3^2$, and the denominator is determined by imposing normalization of the Wigner function, $\sum_{u\in\mathbb{F}_3^2}W_{\rho'}(u)=1$.

If we twirl by $\mathcal{T}_S$ after the protocol, the result is a depolarized Strange state $\rho_S(\epsilon')$ of the form Eq.~\eqref{Equation:NoisyStrangeState} with noise parameter $\epsilon'$. The value of $\epsilon'$ is determined by the value of the Wigner function at phase space point $0\in\mathbb{F}_3^2$, via the relation $W_{\rho_S(\epsilon')}(0)=:x(\epsilon')=(4\epsilon'-3)/9$. Solving this for $\epsilon'$, we get the expression
\begin{equation}
    \epsilon'=\frac{3}{4}\left(1+3W_{\rho_S(\epsilon')}(0)\right)=\frac{3}{4}\left(1+3\frac{\sum\nolimits_{v\in I}W_{\rho^{\otimes n}}(v)}{\sum\nolimits_{v\in I^\perp}W_{\rho^{\otimes n}}(v)}\right).
\end{equation}
If the resource state is $\rho=\rho_S(\epsilon)$, then the numerator in this expression is exactly the symplectic weight enumerator polynomial $W_I(x,y)$ of the code $I$ evaluated at $(x,y)=(x(\epsilon),y(\epsilon))$, and the denominator is the symplectic weight enumerator polynomial $W_{I^\perp}(x,y)$ of $I^\perp$ evaluated at the same point. That is, we get the expression for the post-distillation noise $\epsilon'$ as a function of the pre-distillation noise $\epsilon$,
\begin{equation}
    \epsilon'(\epsilon)=\frac{3}{4}\left(1+3\frac{W_I(x(\epsilon),y(\epsilon))}{W_{I^\perp}(x(\epsilon),y(\epsilon))}\right).
\end{equation}
Using the MacWilliams identity to write the symplectic weight enumerator of $I^\perp$ in terms of that of $I$, we get the final expression
\begin{equation}
    \epsilon'(\epsilon)=\frac{3}{4}\left(1+3^n\frac{W_I(\frac{4\epsilon-3}{9},\frac{3-\epsilon}{18})}{W_I(1,\frac{\epsilon-1}{2})}\right),
\end{equation}
where we use the identities $x(\epsilon)+8y(\epsilon)=1$ (from normalization of the Wigner function), and $x(\epsilon)-y(\epsilon)=(\epsilon-1)/2$.

The success probability of the protocol applied to resource states $\rho_S(\epsilon)$ is
\begin{equation}
    P_s(\epsilon)=\Tr(\mathcal{R}(\rho^{\otimes n}))=\sum\limits_{v\in I^\perp}W_{\rho_S(\epsilon)^{\otimes n}}(v)=W_{I^\perp}(x(\epsilon),y(\epsilon)).
\end{equation}
See Ref.~\cite{PrakashSinghal2026} for a more complete derivation of these expressions for $\epsilon'(\epsilon)$ and $P_s(\epsilon)$.

\subsubsection{Distilling Strange states with quantum QR codes}

The result of the previous section is that the performance of a magic state distillation protocol for distilling Strange states is determined entirely by the symplectic weight enumerator of the stabilizer code $I$. Via the map $\iota$ of Section~\ref{Section:CodeCorrespondence}, this weight enumerator is exactly the Hamming weight enumerator of the corresponding $\mathbb{F}_9$ code $\iota(I)$. Here we will use these results to analyze the performance of quantum quadratic residue codes for distilling Strange states. Throughout this section, let $C$ be a $[p,(p-1)/2]_9$ Hermitian self-orthogonal expurgated quadratic residue code, and let $I=\iota^{-1}(C)$ be the corresponding $[\![p,1]\!]_3$ quantum quadratic residue code.

Our strategy for computing these weight enumerators will be the same as in Section~\ref{Section:QQRTDistillation}. We use Prange's theorem to relate the weight enumerator of $C$ to that of $\tilde{C}$, the length $n=p+1$ extended quadratic residue code via Eq.~\eqref{Equation:PrangeWeights}. From invariant theory, the Hamming weight enumerator of Hermitian self-dual codes over $\mathbb{F}_9$ live in the ring
\begin{equation}
    \mathbb{C}[r_2,r_4]\oplus r_6\mathbb{C}[r_2,r_4]
\end{equation}
with syzygy
\begin{equation}
    r_6^2=\frac{3}{4}r_2^4r_4-\frac{3}{2}r_2^2r_4^2-\frac{1}{4}r_4^3-r_2^3r_6+3r_2r_4r_6
\end{equation}
where
\begin{equation}
    \begin{cases}
        r_2=x^2+8y^2\\
        r_4=x^4+32xy^3+48y^4\\
        r_6=x^6+16x^3y^3+72x^2y^4+288xy^5+352y^6
    \end{cases}
\end{equation}
are the weight enumerators of certain length $2$, $4$, and $6$ codes~\cite[\S5.8]{NebeSloane2006}. Therefore, the weight enumerator $W_{\tilde{C}}(x,y)$ can be written as
\begin{equation}
    W_{\tilde{C}}(x,y)=\sum\limits_{j=0}^{\lfloor n/4\rfloor}a_jr_2(x,y)^{\frac{n}{2}-2j}r_4(x,y)^j + r_6(x,y)\sum\limits_{j=0}^{\lfloor(n-6)/4\rfloor}b_jr_2(x,y)^{\frac{n-6}{2}-2j}r_4(x,y)^j.
\end{equation}
The coefficients $a_j$ and $b_j$ are obtained by imposing a number of linear constraints obtained by computing the numbers of vectors of certain low weights. These numbers are obtained using the algorithm of Ref.~\cite{GaboritWassermann2005}.\footnote{A Julia implementation of this algorithm for codes over $\mathbb{F}_9$ that runs on Nvidia GPUs is also available at \href{https://github.com/mzurel/WeightDistributionsGPU.jl}{github.com/mzurel/WeightDistributionsGPU.jl}}

We computed the weight enumerators for all qutrit quantum quadratic residue codes up to length $41$. The weight enumerators are listed in Appendix~\ref{Section:WeightEnumeratorTables}. Using the weight enumerators found, together with the expression for the distillation function $\epsilon'(\epsilon)$ in terms of the weight enumerators of the corresponding code, we find that the length $11$, $17$, $23$, and $41$ qutrit quantum quadratic residue codes distill Strange states with non-trivial thresholds. The distillation functions $\epsilon'(\epsilon)$ for these codes are plotted in Figure~\ref{Figure:StrangeStateDistillation}, and the thresholds of these codes are listed in Table~\ref{Table:StrangeStateThresholds}. A plot showing the success probability of these protocols is given in Figure~\ref{Figure:StrangeStateSuccessProb}. Additional figures showing the functions $\epsilon'(\epsilon)$ for all codes up to length $41$ (not just those with non-trivial thresholds) are given in Appendix~\ref{Section:AllPlots}.

\begin{figure}[ht]
    \centering
    \includegraphics[width=0.85\linewidth]{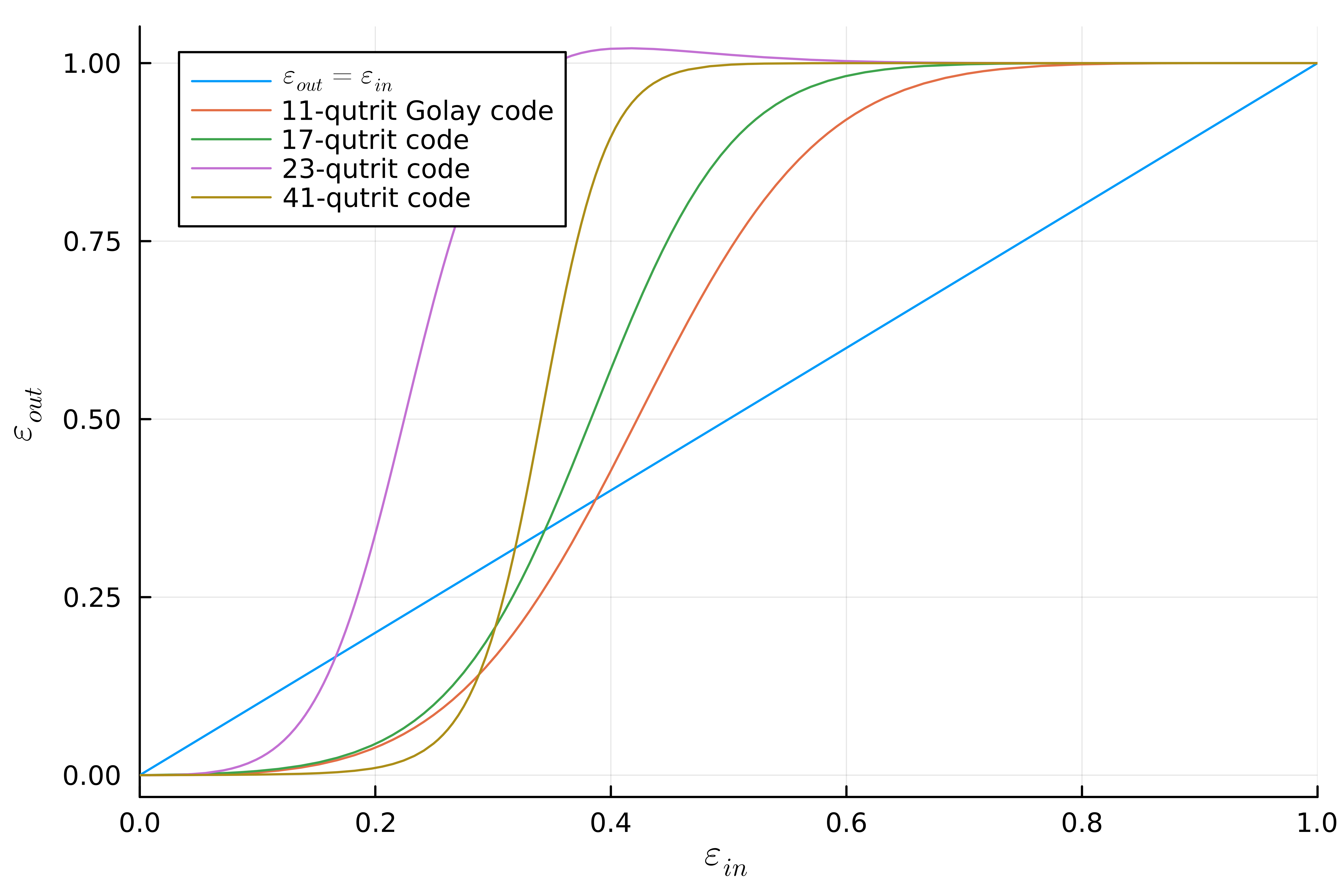}
    \caption{Post-distillation noise as a function of pre-distillation noise for codes that distill qutrit Strange states. The $11$-qutrit Golay code, as well as the $17$-, $23$-, and $41$-qutrit codes have non-trivial distillation thresholds.}
    \label{Figure:StrangeStateDistillation}
\end{figure}

\begin{figure}[ht]
    \centering
    \includegraphics[width=0.85\linewidth]{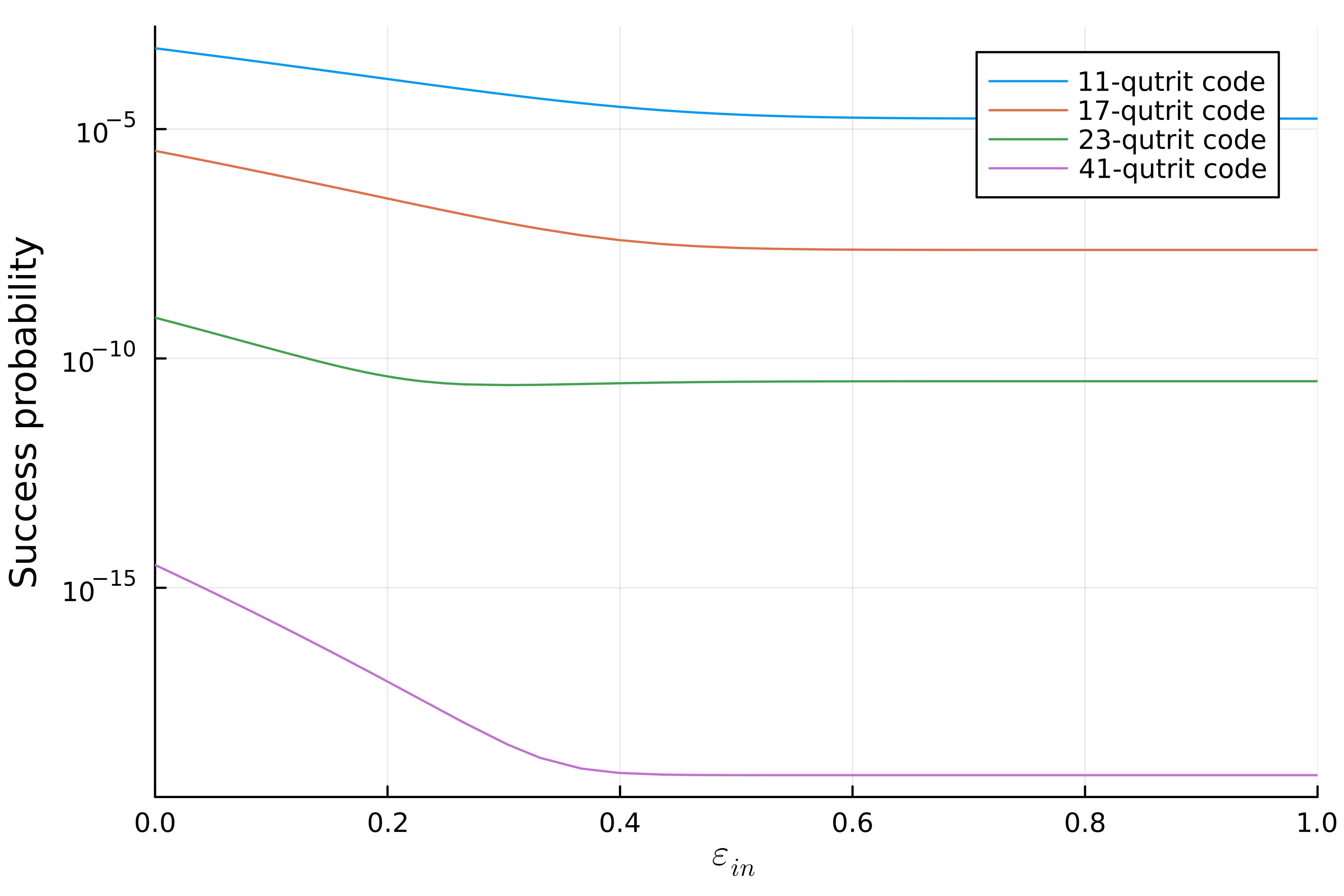}
    \caption{Success probability as a function of pre-distillation noise for codes that distill qutrit Strange states.}
    \label{Figure:StrangeStateSuccessProb}
\end{figure}

\begin{table}[ht]
    \centering
    \begin{tabular}{|c|c|}
        \hline
        Code & Threshold \\
        \hline\hline
        $[11,5,6]_9$ & $\approx0.38715$ \\
        \hline
        $[17,8,8]_9$ & $\approx0.34394$ \\
        \hline
        $[23,11,9]_9$ & $\approx0.16636$ \\
        \hline
        $[41,20,14]_9$ & $\approx0.31877$ \\
        \hline
    \end{tabular}
    \caption{Noise thresholds of magic state distillation protocols for qutrit Strange states based on quadratic residue codes over $\mathbb{F}_9$.}
    \label{Table:StrangeStateThresholds}
\end{table}

\section{Discussion}\label{Section:Discussion}

In this paper, we showed that quantum quadratic residue codes are useful for distilling magic states. This includes showing that some codes which were previously known to distill qubit $H$ and $T$ states and qutrit Strange states are equivalent to certain quantum QR codes. We also found new examples of quantum QR codes that distill qubit $T$ states and qutrit Strange states. Finally, we showed that there are infinitely many quantum QR codes on qubits that distill $T$ states with a non-trivial threshold.

A strength of the quantum quadratic residue code family is that many particularly special codes are unified under a single umbrella. This includes the codes with the highest currently known thresholds for $T$ and Strange state distillation, namely the $5$-qubit perfect code~\cite{BravyiKitaev2005} and the $11$-qutrit Golay code~\cite{Prakash2020}, as well as the $7$-qubit Steane code and $23$-qubit Golay code, which achieve the highest possible threshold for $H$ state distillation~\cite{Reichardt2005}. Therefore, quantum quadratic residue codes provide a common algebraic framework for defining magic state distillation protocols for a variety of magic states on qubits and on qutrits.

In addition to reproducing previous magic state distillation protocols, we find a number of new distillation protocols, including some fairly large codes that distill qubit $T$ states and qutrit Strange states with fairly high thresholds. This is contrary to the common expectation that smaller codes tend to have better distillation thresholds. In fact, from Tables~\ref{Table:TStateThresholds} and~\ref{Table:StrangeStateThresholds}, we can see that for quantum QR codes the distillation thresholds are not monotonic either with the length of the code or with the code distance. Rather, the thresholds depend on the full weight distributions of the codes.

There is an infinite sequence of quantum QR codes of increasing size and increasing distance for both qubits and qutrits, as well as more generally for prime-dimensional qudits. Theorem~\ref{Theorem:InfiniteTDistillation} shows that there are infinitely many quantum QR codes that distill $T$ states with a non-trivial threshold. Although this falls short of a resolution of the conjecture stated in the introduction, it does suggest that the capacity of these codes to distill $T$ states is not merely incidental for some small examples. Some deeper algebraic structure of these codes make them genuinely well-suited to magic state distillation.

Taken together, the above-mentioned results make quantum quadratic residue codes natural candidates for resolving the two conjectures stated in the introduction on which states are universal for quantum computation. That said, the performance of these codes for distilling magic states seems to depend on the entire weight distribution of the codes, and a complete analytic characterization of these weight distributions is likely out of reach with present methods. The weight distributions of the corresponding classical quadratic residue codes are not known in general either, and characterizing them has been an open problem since the codes were first introduced several decades ago.

In light of this limitation, one possible avenue for future work could be to extend the numeric results obtained here to larger code sizes. Furthermore, Theorem~\ref{Theorem:InfiniteTDistillation} shows that some asymptotic results are possible even without a complete characterization of the weight distributions of these codes. Another avenue could involve extending Theorem~\ref{Theorem:InfiniteTDistillation}, for example, by showing that there exists a subsequence of qubit quantum QR codes with thresholds that remain bounded away from $0$, or even which increase monotonically. An equivalent of Theorem~\ref{Theorem:InfiniteTDistillation} could also be shown for qutrit Strange state distillation. Finally, quantum quadratic residue codes are defined on qudits of any prime Hilbert space dimension, not just qubits and qutrits. They could lead to distillation protocols for ququint magic states and magic states on higher-dimensional qudits.

Overall, the present results suggest that quantum quadratic residue codes are a promising testing ground for both finite computational results and asymptotic questions in magic state distillation.

\section*{Acknowledgements}

We thank Amolak Ratan Kalra, Shiroman Prakash, and Mark Howard for insightful discussions. M.Z.~is funded by the Natural Sciences and Engineering Research Council of Canada~(NSERC). S.J.~is funded by the NSERC-European Commission project FoQaCiA. N.D.~acknowledges support from the Canada Research Chair program, NSERC Discovery Grant RGPIN-2022-03103, the NSERC-European Commission project FoQaCiA, and the Faculty of Science of Simon Fraser University. This work was also supported by NSERC Alliance International Catalyst Quantum Grant (ALLRP 592647-24). Finally, this work was aided by discussions that took place at the Mathematical Foundations of Quantum Advantage workshop held at Simon Fraser University in May of 2025 and funded by the Pacific Institute for the Mathematical Sciences~(PIMS).

\bibliographystyle{quantum}
\bibliography{biblio}

\appendix

\section{The structure of magic state distillation protocols for qudits}\label{Section:MSDStructure}

In the main text, we use the Campbell-Browne result~\cite{CampbellBrowne2009}, which says that in analyzing $n$-to-$1$ magic state distillation protocols it suffices to consider stabilizer reductions, i.e., protocols of the form
\begin{equation}
    \rho\longmapsto\mathcal{R}(\rho^{\otimes n}):=\Tr_{2,\dots,n}(D\Pi_I^r\rho^{\otimes n}\Pi_I^rD^\dagger),
\end{equation}
where $I\subset\mathbb{F}_d^{2n}$ is an isotropic subspace, $r:I\to\mathbb{F}_d$ is a consistent value assignment (see Eq.~\eqref{Equation:NoncontextualityCondition}), $\Pi_I^r$ is the stabilizer projector (Eq.~\eqref{Equation:PauliProjector}), and $D$ is a decoding Clifford unitary. In Ref.~\cite{CampbellBrowne2009}, this was proven only for qubits. In this appendix, we show that this result also holds for magic state distillation on odd-prime-dimensional qudits.

\begin{Theorem*}\label{Theorem:StabilizerReductionPrime}
    Let $d$ be an odd prime. Consider any $n$-to-$1$ qudit stabilizer protocol $\Phi$ on $d$-dimensional qudits (a completely positive, trace non-increasing map built from Clifford unitaries, Pauli measurements with postselection, and partial trace). Fix an input state $\rho$ and a target pure state $\ket{M}$. Then either (i)~there is a stabilizer state whose fidelity with respect to $\ket{M}$ is at least that of $\Phi(\rho^{\otimes n})$, or (ii)~there exists an $n$-to-$1$ qudit stabilizer protocol $\mathcal{R}$ whose conditional output fidelity with respect to $\ket{M}$ is at least that of $\Phi$ on input state $\rho^{\otimes n}$ and which has the form of a stabilizer reduction, for some $(n-1)$-dimensional isotropic subspace $I\subset\mathbb{F}_d^{2n}$, consistent assignment $r:I\to\mathbb{F}_d$, and decoding Clifford $D$.
\end{Theorem*}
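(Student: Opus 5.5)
I will follow the Campbell--Browne argument, replacing qubit Pauli facts by their odd-prime analogues (Weyl relations with $\beta\equiv0$, $\mathcal{C}\ell/\mathcal{P}\cong\mathrm{Sp}(\mathbb{F}_d^{2n})$, transitivity of the Clifford group on isotropic subspaces of fixed dimension). The first step puts $\Phi$ in a canonical form. Any stabilizer protocol is an adaptive sequence of Clifford unitaries, Pauli measurements, ancilla preparations and partial traces. I will use deferred measurement and the fact that stabilizer ancillas can be absorbed: a Pauli measurement on a system containing stabilizer ancillas equals a Pauli measurement on the data qudits followed by a Clifford, up to relabelling outcomes. This rewrites $\Phi(\rho^{\otimes n})$ as a sum over measurement records $r$ of branches $\Tr_{2,\dots,n}(D_r\Pi_{I}^{r}\rho^{\otimes n}\Pi_I^{r}D_r^\dagger)$ (plus branches we discard). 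Here $I$ is the isotropic subspace generated by all measured Paulis pulled back to the data. Adaptivity only makes the subspace and Clifford depend on the branch, $I=I_r$, $D=D_r$. I will take the output of $\Phi$ to be the normalized mixture of these branches.

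The second step is convexity. Fidelity with $\ket{M}$ is linear in the state, so the conditional output fidelity of $\Phi$ is a weighted average of the per-branch conditional fidelities. Hence some single branch has fidelity at least that of $\Phi$. I fix that branch, with data $(I,r,D)$. Classical conditioning is thereby removed, and we have one protocol of the form $A\mapsto\Tr_{2,\dots,n}(D\Pi_I^rA\Pi_I^rD^\dagger)$ with $I$ isotropic of some dimension $m\le n$.

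The third step forces $m=n-1$, splitting into cases. If $m=n$, then $\Pi_I^r$ projects onto a single stabilizer state, so the output is independent of $\rho$ and is a stabilizer state; this is case (i). If $m<n-1$, the code space $I^\perp/I$ has dimension $d^{\,n-m}\ge d^2$. The output qudit is one logical qudit after $D$, and the partial trace discards the rest. I will extend $I$ by a Pauli $T_w$ with $w\in I^\perp\setminus I$ chosen to commute with the logical operators of the first output qudit (i.e. an operator on the traced-out logical qudits), yielding an isotropic $I'=I\oplus\langle w\rangle$. The projector $\Pi_I^r$ decomposes as $\sum_{s\in\mathbb{F}_d}\Pi_{I'}^{(r,s)}$, with each $(r,s)$ a consistent assignment; consistency is automatic for odd $d$ since $\beta\equiv0$ makes $r$ linear. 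Because $T_w$ acts only on discarded logical qudits, the output becomes the average over $s$ of the reduced branches, with appropriate Cliffords $D_s$ (obtained by correcting $D$ by a Pauli/Clifford on the discarded logical subsystem). Convexity again picks one $s$ at least as good. Iterating reaches $m=n-1$, and the final decoding Clifford is chosen by Clifford transitivity to map $I^\perp/I$ onto the first qudit.

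The main obstacle is this last step: checking that the partial trace over the extra logical qudits really equals the $s$-average of projected branches. This amounts to the identity $\Tr_B(X)=\sum_s\Tr_B(\Pi_sX\Pi_s)$ for a complete set of projectors on $B$, transported through $D$. It also requires that the resulting decoder is Clifford, which rests on the symplectic group acting transitively on symplectic bases of $I^\perp/I$.
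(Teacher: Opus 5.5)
Your Steps 1 and 2 match the paper's sketch: branch decomposition, convexity, and collapsing the selected branch to a single projector $\Pi_I^r$ followed by a Clifford. (Strictly, a measured Pauli that fails to commute with earlier ones is absorbed as a Clifford rather than added to $I$.)

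The gap is in Step 3. Write $P_1\subset\mathbb{F}_d^{2n}$ for the labels of Paulis supported on qudit~1, and set $L_1:=S_D^{-1}P_1$. Your argument assumes that after the branch's Clifford $D$ the output qudit is exactly one logical qudit of $(I,r)$, i.e.\ $L_1\subseteq I^\perp$ and $L_1\cap I=0$. Nothing in the canonical form guarantees this. For example, with $I=\langle z_1z_2\rangle$ and $D=\mathbbm{1}$, $X_1$ does not commute with the stabilizer. In such a branch, ``operators on the traced-out logical qudits'' is undefined. Your last move, choosing the final decoder by Clifford transitivity, is also not allowed. $D$ is fixed by the selected branch, and swapping it for another Clifford changes the output state, so the fidelity comparison with $\Phi$ is lost.

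The repair is a case split.

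\emph{Case $L_1\not\subseteq I^\perp$.} Pick $u\in P_1\setminus(S_DI)^\perp$. Then $T_u$ has a nontrivial commutator with some element of $S_DI$, which acts as a phase on the support of $D\Pi_I^r\rho^{\otimes n}\Pi_I^rD^\dagger$. Hence the corresponding single-qudit Pauli has zero expectation in the output. The output is therefore $\mathbbm{1}/d$ or diagonal in the eigenbasis of a single Pauli, so it is a mixture of stabilizer states, and case (i) follows by linearity of fidelity.

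\emph{Case $L_1\subseteq I^\perp$.} Here $L_1\cap I=0$ holds automatically, since a nonzero element of $L_1\cap I$ would pair nontrivially with another element of $L_1\subseteq I^\perp$. Your induction then works with $D$ unchanged:
\begin{itemize}
\item Choose $w\in(I^\perp\cap L_1^\perp)\setminus I$, which exists because $\dim(I^\perp\cap L_1^\perp)=2n-m-2>m$.
\item $L_1$ remains a logical qudit of $I\oplus\langle w\rangle$, and $DT_wD^\dagger$ is supported on qudits $2,\dots,n$.
\item Your identity $\Tr_B X=\sum_s\Tr_B(\Pi_sX\Pi_s)$ then splits the branch with the same $D$ for every $s$.
\item At $m=n-1$, $S_DI$ is supported on qudits $2,\dots,n$. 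So $D$ is already a decoder up to a Clifford on those qudits, which does not affect the output.
\end{itemize}

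The paper avoids the case split altogether. It rewrites the partial trace as a $Z_2,\dots,Z_n$ measurement with discarded outcomes, postselects the best outcome $s$, and merges that projector into $\Pi_I^r$. Because the image then lies in $\mathbb{C}^d\otimes\ket{s}$, the merged projector has rank $1$ (case (i)) or rank $d$. In the rank-$d$ case the resulting Clifford is automatically a decoder, whatever the alignment of $D$.
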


\begin{proof}[Proof sketch.]
    Any stabilizer protocol $\Phi$ has a decomposition into a number of branches, with each branch determined by the outcomes of the Pauli measurements and any classical randomness used. For the fixed input state $\rho^{\otimes n}$, write the unnormalized output state of the protocol $\Phi$ as
    \begin{equation}
        \Phi(\rho^{\otimes n})=\sum\limits_{\alpha}\Pi_\alpha(\rho^{\otimes n}),
    \end{equation}
    where $\alpha$ ranges over indices of accepted branches, and each channel $\Pi_\alpha$ is simply a sequence of Clifford gates $g_i\in\mathcal{C}\ell$ and Pauli projectors $\Pi_{I_j}^{r_j}$, followed by a partial trace of qudits $2$ through $n$. Let $p_\alpha=\Tr(\Pi_\alpha(\rho^{\otimes n}))$, and let $\rho_\alpha=\Pi_\alpha(\rho^{\otimes n})/p_\alpha$ be the normalized output state of each branch. The normalized output of $\Phi$ is
    \begin{equation}
        \rho'=\frac{\Phi(\rho^{\otimes n})}{\Tr(\Phi(\rho^{\otimes n}))}=\sum_\alpha\frac{p_\alpha}{\sum_\beta p_\beta}\rho_\alpha.
    \end{equation}

    The fidelity of the post-distillation state with respect to $\ket{M}$ is
    \begin{equation}
        \Tr(\ketbra{M}{M}\rho')=\sum_\alpha\frac{p_\alpha}{\sum_\beta p_\beta}\Tr(\ketbra{M}{M}\rho_\alpha)\le\max_\alpha\Tr(\ketbra{M}{M}\rho_\alpha),
    \end{equation}
    where the first equality follows from linearity of the trace, and the last inequality follows from the fact that the coefficients are nonnegative. Therefore, there is a branch $\alpha^*$ such that the fidelity of $\rho_{\alpha^*}$ with respect to $\ket{M}$ is at least the fidelity of $\rho'$ with respect to $\ket{M}$.

    The channel $\Pi_{\alpha^*}$ has the form
    \begin{equation}
        \Pi_{\alpha^*}(\rho^{\otimes n})=\Tr_{2,\dots,n}(\cdots \Pi_{I_2}^{r_2}g_2\Pi_{I_1}^{r_1}g_1\rho^{\otimes n}g_1^\dagger\Pi_{I_1}^{r_1}g_2^\dagger\Pi_{I_2}^{r_2} \cdots).
    \end{equation}
    with $g_1,g_2,\dots\in\mathcal{C}\ell$, $I_1,I_2,\dots\subset\mathbb{F}_d^{2n}$ isotropic subspaces, and $r_1,r_2,\dots$ consistent value assignments for the subspaces. This can be reduced to the form of a stabilizer reduction using standard techniques from Pauli-based computation (see, for example, Refs.~\cite{BravyiSmolin2016,PeresGalvao2021,Peres2023}). In particular, to start the Clifford gates can be propagated past the projectors, conjugating the Pauli projectors into other Pauli projectors, resulting in a map of the form
    \begin{equation}
        \Pi_{\alpha^*}(\rho^{\otimes n})=\Tr_{2,\dots,n}(g\cdots \Pi_{I_2'}^{r_2'}\Pi_{I_1'}^{r_1'}\rho^{\otimes n}\Pi_{I_1'}^{r_1'}\Pi_{I_2'}^{r_2'}\cdots g^\dagger).
    \end{equation}
    Then using techniques from Pauli-based computation (see Ref.~\cite{Peres2023}), the sequence of Pauli projectors can be reduced to a single Pauli projector $\Pi_I^r$ to get a map of the form
    \begin{equation}
        \Pi_{\alpha^*}(\rho^{\otimes n})=\Tr_{2,\dots,n}(g\Pi_I^r\rho^{\otimes n}\Pi_I^rg^\dagger).
    \end{equation}
    Note that this step holds for qudits of all prime dimensions, but not necessarily for qudits of composite dimensions, because it assumes the Clifford group acts transitively on the sets of Pauli projectors $\Pi_i^r$ of each dimension.

    The last step involves tracing out the last $n-1$ qudits. This is equivalent to measuring observables $Z_2,\dots,Z_n$ on these qudits and then discarding the outcomes. Let $J=\langle z_2,\dots,z_n\rangle$ and $s:J\to\mathbb{F}_d$ be the labels of the measurement outcomes for measurements $J$. Then
    \begin{equation}
        \Pi_{\alpha^*}(\rho^{\otimes n})=\sum\limits_{s\in J^*}p(s)\Tr_{2,\dots,n}(\Pi_J^sg\Pi_I^r\rho^{\otimes n}\Pi_I^rg^\dagger\Pi_J^s).
    \end{equation}
    with nonnegative coefficients $p(s)$. By the same argument as before, without decreasing the output fidelity (but reducing the success probability), we can post-select on one of the outcomes $s:J\to\mathbb{F}_d$.

    Now, propagating the new projector $\Pi_J^s$ past the Clifford group element $g$, and absorbing it into the old projector, we get
    \begin{equation}
        \Pi_J^sg\Pi_I^r\propto g\Pi_{I'}^{r'},
    \end{equation}
    where $\Pi_{I'}^{r'}$ has rank $1$ or $d$. Here we have two cases. If $\Pi_{I'}^{r'}$ has rank $1$, then the post-distillation state is a stabilizer state and we are in case~1 of the theorem statement. Otherwise, $\Pi_{I'}^{r'}$ has rank $d$, we are still in the second case of the theorem statement, and we can assume $\dim(I')=n-1$. Also, by construction, $g$ is a decoding Clifford for $\Pi_{I'}^{r'}$, it maps the code space onto the first physical qudit.

    The result is a protocol of the form
    \begin{equation}
        \rho\longmapsto\mathcal{R}(\rho^{\otimes n})=\Tr_{2,\dots,n}\left(g\Pi_{I'}^{r'}\rho^{\otimes n}\Pi_{I'}^{r'}g^\dagger\right),
    \end{equation}
    where $I'\subset\mathbb{F}_d^{2n}$ is a $n-1$-dimensional isotropic subspace, $r':I'\to\mathbb{F}_d$ is a consistent value assignment for $I'$, and $g$ is a Clifford that maps the code space of $I'$ onto the first physical qudit. This is exactly a stabilizer reduction.  By construction, this stabilizer reduction has conditional output fidelity at least that of the original protocol $\Phi$ on $\rho^{\otimes n}$.
\end{proof}

\section{Stabilizer operations in phase space}\label{Section:PhaseSpaceMethods}

In this section, we assume we have a system of $n$ qudits, each with odd-prime Hilbert space dimension $d$. It is helpful to understand how stabilizer operations act in the phase space picture.

First, we consider Clifford unitaries. For odd-prime $d$, the Clifford group splits as
\begin{equation}
    \mathcal{C}\ell\cong\mathrm{Sp}(\mathbb{F}_d^{2n})\ltimes\mathbb{F}_d^{2n},
\end{equation}
and Clifford operations act on the phase space as affine symplectic transformations,
\begin{equation}\label{Equation:PhaseSpaceCliffordUpdate}
    gA_ug^\dagger=A_{S_gu+a_g}\qquad\forall g\in\mathcal{C}\ell\quad\forall u\in\mathbb{F}_d^{2n}.
\end{equation}
where $S_g\in\mathrm{Sp}(\mathbb{F}_d^{2n})$ and $a_g\in\mathbb{F}_d^{2n}$~\cite[Theorem~7]{Gross2006}. The Pauli group is a subgroup of the Clifford group corresponding to translations of the phase space, $T_aA_uT_a^\dagger=A_{u+a}$. As a corollary of Eq.~\eqref{Equation:PhaseSpaceCliffordUpdate}, the Wigner function is Clifford covariant, i.e., for any quantum state $\rho$ and any Clifford group element $g\in\mathcal{C}\ell$,
\begin{equation}
    W_{g\rho g^\dagger}(S_gu+a_g)=W_\rho(u).
\end{equation}

The effect of Pauli measurements on the phase space is described by the following lemma.

\begin{Lemma}\label{Lemma:PhaseSpacePauliUpdate}
    For any isotropic subspace $I\subset\mathbb{F}_d^{2n}$, any value assignment $r:I\to\mathbb{F}_d$, and any phase space point $u\in\mathbb{F}_d^{2n}$,
    \begin{itemize}
        \item[(i)] if $r(a)=-[u,a]$ for all $a\in I$, then $\Tr(\Pi_I^rA_u)=1$ and
        \begin{equation}
            \Pi_I^rA_u\Pi_I^r=\frac{1}{|I|}\sum\limits_{v\in u+I}A_v,
        \end{equation}
        \item[(ii)] if $r(a)\ne-[u,a]$ for some $a\in I$, then $\Tr(\Pi_I^rA_u)=0$ and $\Pi_I^rA_u\Pi_I^r=0$.
    \end{itemize}
\end{Lemma}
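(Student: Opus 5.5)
The plan is to compute $\Tr(\Pi_I^rA_u)$ and $\Pi_I^rA_u\Pi_I^r$ directly from the Pauli expansion of the projector, Eq.~\eqref{Equation:PauliProjector}, together with the phase space covariance relations. For odd $d$ we have $\beta\equiv0$ and $T_aT_b=\omega^{[a,b]2^{-1}}T_{a+b}$. The first step is to establish the conjugation rule $T_aA_uT_a^\dagger=\omega^{\,?}\cdots$. More precisely, $A_u$ is built from $\sum_v\omega^{[u,v]}T_v$, so $T_aA_u=A_{u+a/2}$-type identities hold. The cleanest relation to derive is
\[
T_aA_u=\omega^{-[u,a]}A_uT_{-a}.
\]
This follows from $A_u=T_{2u}A_0$-style parity: $A_0$ is the parity operator, with $A_0T_vA_0=T_{-v}$, and $A_u=T_uA_0T_u^\dagger$. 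From this we get $T_aA_uT_a=\omega^{-[u,a]}A_u$ up to a fixed phase convention that I will pin down by a short computation.

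\textbf{Trace.} Expanding gives $\Tr(\Pi_I^rA_u)=\frac{1}{|I|}\sum_{a\in I}\omega^{-r(a)}\Tr(T_aA_u)$. Next I will compute $\Tr(T_aA_u)$ from the definition of $A_u$ and the orthogonality $\Tr(T_v^\dagger T_w)=d^n\delta_{v,w}$. This yields $\Tr(T_aA_u)=\omega^{[u,-a]}=\omega^{-[u,a]}$, up to a sign convention to be checked, so that the trace becomes $\frac{1}{|I|}\sum_{a\in I}\omega^{-r(a)-[u,a]}$. Because $r$ is a consistent assignment and $\beta\equiv0$, the map $a\mapsto r(a)+[u,a]$ is a linear functional on $I$. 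The sum is therefore $|I|$ if this functional vanishes, which is case (i) with trace $1$, and $0$ otherwise, which is case (ii).

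\textbf{Sandwich.} Write $\Pi_I^rA_u\Pi_I^r=\frac{1}{|I|^2}\sum_{a,b\in I}\omega^{-r(a)-r(b)}T_aA_uT_b$. Using the commutation relation, move $T_a$ through $A_u$, which turns it into $T_{-a}$ times the phase $\omega^{-[u,a]}$. Since $I$ is isotropic, $T_{-a}T_b=T_{b-a}$ with no extra phase. Substitute $c=b-a$ and sum over $a$ first. The phase is $\omega^{-r(a)-r(a+c)-[u,a]}=\omega^{-2r(a)-r(c)-[u,a]}$. In case (ii), the functional $2r(a)+[u,a]$ combines with $r(a)=-[u,a]$ failing, and I need to check that the $a$-sum vanishes. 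This is the main obstacle: the exponent is $r(a)+(r(a)+[u,a])$, which is not obviously of the same form as the trace condition.

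To resolve this, I will redo the sandwich more carefully by writing $\Pi A_u\Pi=\Pi(\Pi A_u)$. I will then use $\Pi T_a=\omega^{r(a)}\Pi$, which holds because $\Pi$ projects onto the $\omega^{r(a)}$-eigenspace of $T_a$ and so absorbs stabilizer elements. Pushing the right-hand $\Pi$ through $A_u$ converts $T_b$ into $T_{-b}$ with phase $\omega^{-[u,b]}$, after which it is absorbed by the left $\Pi$. The result is
\[
\Pi A_u\Pi=\frac{1}{|I|}\sum_b\omega^{-r(b)-[u,b]+r(-b)}\,\Pi A_u=\frac{1}{|I|}\sum_b\omega^{-2r(b)-[u,b]}\,\Pi A_u.
\]
With the correct sign convention this factor is $|I|\delta$ of the same functional, giving $0$ in case (ii); in case (i) it gives $\Pi A_u\Pi=\Pi A_u$. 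Fixing the conventions so that the two computations match is the delicate bookkeeping step.

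\textbf{Case (i).} Finally, expand $\Pi A_u=\frac{1}{|I|}\sum_a\omega^{-r(a)}T_aA_u$, using the translation covariance $T_aA_u\propto A_{u+a/2}T_a$. Equivalently, write $T_aA_u=\omega^{c}A_{u+2^{-1}a}\cdots$, or more simply compute the Wigner coefficients $\Tr(\Pi A_uA_v)/d^n$. The Wigner coefficients are evaluated through the identity $\Tr(T_aA_uA_v)$, which is a character sum that is nonzero only when $v-u\in\mathrm{span}(a)$. This should yield the uniform coefficient $1/|I|$ on the coset $u+I$ and $0$ elsewhere. As a consistency check, the trace of the result is $1$, matching the trace computation.
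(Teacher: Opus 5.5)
Your strategy works and differs from the paper's, but the identity it rests on is wrong by a factor of two. That error is the only source of your ``main obstacle.'' Your trace computation is correct. For the sandwich, use $A_u=T_{2u}A_0$, $A_0T_v=T_{-v}A_0$ and the paper's convention $T_aT_b=\omega^{[a,b]2^{-1}}T_{a+b}$. They give
\[
T_aA_u=\omega^{-[u,a]}A_{u+2^{-1}a},\qquad A_uT_{-a}=\omega^{[u,a]}A_{u+2^{-1}a},\qquad\text{so}\qquad T_aA_u=\omega^{-2[u,a]}A_uT_{-a}.
\]
In particular, there is no trailing $T_a$ in the translation identity.

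With this phase your first sandwich computation already closes. After substituting $b=a+c$, the exponent is $-2\bigl(r(a)+[u,a]\bigr)-r(c)$, so
\[
\Pi_I^rA_u\Pi_I^r=\Bigl(\frac{1}{|I|}\sum_{a\in I}\omega^{-2(r(a)+[u,a])}\Bigr)A_u\Pi_I^r=\Tr(\Pi_I^rA_u)\,A_u\Pi_I^r .
\]
The second equality holds because $2$ is invertible modulo odd $d$, so $2(r+[u,\cdot])$ vanishes on $I$ exactly when $r+[u,\cdot]$ does. This gives case (ii) at once.

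For case (i), combine $A_uT_c=\omega^{-[u,c]}A_{u-2^{-1}c}$ with $r(c)=-[u,c]$. This yields $A_u\Pi_I^r=\frac{1}{|I|}\sum_{c\in I}A_{u-2^{-1}c}=\frac{1}{|I|}\sum_{v\in u+I}A_v$.

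Some smaller points:
\begin{itemize}
\item Your detour through ``$\Pi(\Pi A_u)$'' is unnecessary, and that expression should read $(\Pi A_u)\Pi$. With the corrected phase it also works, giving the same prefactor times $\Pi_I^rA_u$.
\item The Wigner-coefficient variant is fine. Note that $\Tr(T_aA_uA_v)=d^n\omega^{-2[u,v]}\delta_{a,2(v-u)}$ is supported at the single point $v=u+2^{-1}a$.
\end{itemize}

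The paper never uses the parity structure. It expands $\Pi_I^rA_u\Pi_I^r$ as a triple Pauli sum and kills one sum with $\sum_{c\in I}\omega^{[b,c]}=|I|\delta_{b\in I^\perp}$. This gives $\frac{1}{|I|d^n}\sum_{a\in I}\sum_{b\in I^\perp}\omega^{-r(a)+[u,b]}T_aT_b$, from which it reads off the trace. In case (i) it matches this against a separate Pauli expansion of $\frac{1}{|I|}\sum_{v\in u+I}A_v$. Case (ii) is handled indirectly, by showing $\Tr(T_a\Pi_I^rA_u\Pi_I^r)=0$ separately for three cases:
\begin{itemize}
\item $a\in I$;
\item $a\in I^\perp\setminus I$, which imports a projector-refinement lemma from another paper plus nonnegativity of the traces;
\item $a\notin I^\perp$.
\end{itemize}
Your corrected route gives the factorization $\Pi_I^rA_u\Pi_I^r=\Tr(\Pi_I^rA_u)\,A_u\Pi_I^r$. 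That disposes of case (ii) in one line with no case analysis or external lemma, and makes the coset structure in case (i) a direct consequence of translation covariance. The paper's route needs only the Weyl relations and character sums, at the cost of a longer case-(ii) argument.
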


\begin{proof}[Proof of Lemma~\ref{Lemma:PhaseSpacePauliUpdate}]
	For any isotropic subspace $I\subset \mathbb{F}_d^{2n}$, any value assignment $r:I\to\mathbb{F}_d$, and any phase space point $u\in\mathbb{F}_d^{2n}$,
	\begin{align}
		\Pi_I^rA_u\Pi_I^r =& \frac{1}{|I|^2d^n}\sum\limits_{a,c\in I}\sum\limits_{b\in\mathbb{F}_d^{2n}}\omega^{-r(a)-r(c)+[u,b]}T_aT_bT_c\\
		=& \frac{1}{|I|^2d^n}\sum\limits_{a,c\in I}\sum\limits_{b\in\mathbb{F}_d^{2n}}\omega^{-r(a+c)+[u,b]+[b,c]}T_{a+c}T_b\\
		=& \frac{1}{|I|^2d^n}\sum\limits_{a\in I}\sum\limits_{b\in\mathbb{F}_d^{2n}}\omega^{-r(a)+[u,b]}\left[\sum\limits_{c\in I}\omega^{[b,c]}\right]T_aT_b\\
		=& \frac{1}{|I|d^n}\sum\limits_{a\in I}\sum\limits_{b\in I^\perp}\omega^{-r(a)+[u,b]}T_aT_b.
	\end{align}
	Taking a trace of this equation, we get
	\begin{align}
		\Tr(\Pi_I^rA_u) =& \frac{1}{|I|d^n}\sum\limits_{a\in I}\sum\limits_{b\in I^\perp}\omega^{-r(a)+[u,b]}d^n\delta_{b,-a}\\
		=& \frac{1}{|I|}\sum\limits_{a\in I}\omega^{-r(a)-[u,a]}\\
		=& \begin{cases}
			1\quad\text{if $r(a)=-[u,a]\;\forall a\in I$},\\
			0\quad\text{otherwise}.
		\end{cases}
	\end{align}
	Here the last line follows from orthogonality of characters. Now, assuming $r(a)=-[u,a]\;\forall a\in I$, we have
	\begin{align}
		\Pi_I^rA_u\Pi_I^r =& \frac{1}{|I|d^n}\sum\limits_{a\in I}\sum\limits_{b\in I^\perp}\omega^{[u,a+b]}T_{a+b}.
	\end{align}
	Since $I$ is isotropic, $I\subset I^\perp$, and we can write $I^\perp$ as $I^\perp=\bigsqcup_{a\in I^\perp/ I}I+a$. Then $I+I^\perp=|I|\cdot I^\perp$, and
	\begin{equation}
		\Pi_I^rA_u\Pi_I^r=\frac{1}{d^n}\sum\limits_{a\in I^\perp}\omega^{[u,a]}T_a.
	\end{equation}
	It is easy to check that for any $v\in u+I$, we have $[v,w]=[u,w]\;\forall w\in I^\perp$. Then
	\begin{align}
		\frac{1}{|I|}\sum\limits_{v\in u+I}A_v =& \frac{1}{|I|d^n}\sum\limits_{v\in u+I}\sum\limits_{b\in\mathbb{F}_d^{2n}}\omega^{[v,b]}T_b\\
		=& \frac{1}{|I|d^n}\sum\limits_{b\in\mathbb{F}_d^{2n}}\left[\sum\limits_{v\in u+I}\omega^{[v,b]}\right]T_b\\
		=& \frac{1}{d^n}\sum\limits_{b\in I^\perp}\omega^{[u,b]}T_b=\Pi_I^rA_u\Pi_I^r.
	\end{align}
	This proves the first statement of the lemma. It remains to show that in the case that there exists some $a\in I$ such that $r(a)\ne-[u,a]$, we have $\Pi_I^rA_u\Pi_I^r=0$ as an operator. To see this, consider the inner product $\Tr(T_a\Pi_I^rA_u\Pi_I^r)$ for any Pauli operator $T_a,\;\in\mathbb{F}_d^{2n}$. Here we have three cases.
	
	(I)~If $a\in I$, then $T_a\Pi_I^r=\omega^{r(a)}\Pi_I^r$, and so $\Tr(T_a\Pi_I^rA_u\Pi_I^r)=\omega^{r(a)}\Tr(\Pi_I^rA_u)=0$.
	
	(II)~If $a\in I^\perp\setminus I$, then by Lemma~5 of \cite{ZurelHeimendahl2024}, $\Pi_I^r=\sum\limits_{r'\in\Gamma}\Pi_{\langle a,I\rangle}^{r'}$, where $\Gamma$ is the set of all value assignments on $\langle a,I\rangle$ which agree with $r$ on $I$. Then
	\begin{equation}
		\Tr(\Pi_I^rA_u)=\sum\limits_{r'\in\Gamma}\Tr(\Pi_{\langle a,I\rangle}^{r'}A_u).
	\end{equation}
	Here the left hand side is zero by assumption, and on the right hand side we have a sum of nonnegative terms. Therefore, each term in the sum on the right hand side is zero. The spectral decomposition of $T_a$ is
	\begin{equation}
		T_a=\sum\limits_{k\in\mathbb{Z}_d}\omega^k\Pi_a^k.
	\end{equation}
	Then,
	\begin{align}
		\Tr(T_a\Pi_I^rA_u\Pi_I^r) =& \sum\limits_{k\in\mathbb{Z}_d}\omega^k\Tr(\Pi_a^k\Pi_I^rA_u\Pi_I^r)\\
		=& \sum\limits_{k\in\mathbb{Z}_d}\omega^k\Tr(\Pi_{\langle a,I\rangle}^{r_k}A_u)=0.
	\end{align}
	
	(III)~If $a\notin I^\perp$, then
	\begin{align}
		\Pi_I^rT_a\Pi_I^r=&\frac{1}{|I|^2}\sum\limits_{b,c\in I}\omega^{-r(b)-r(c)}T_bT_aT_c\\
		=&\frac{1}{|I|^2}T_a\sum\limits_{b,c\in I}\omega^{-r(b+c)+[b,a]}T_{b+c}\\
		=&\frac{1}{|I|^2}T_a\Pi_I^r\sum\limits_{b\in I}\omega^{[b,a]}.
	\end{align}
	By character orthogonality, the sum in the final expression vanishes. Therefore,
	\begin{equation}
		\Tr(T_a\Pi_I^rA_u\Pi_I^r)=\Tr(\Pi_I^rT_a\Pi_I^rA_u)=0.
	\end{equation}
	Thus, $\Tr(T_a\Pi_I^rA_u\Pi_I^r)=0$ for all Pauli operators $T_a,a\in\mathbb{F}_d^{2n}$, and so $\Pi_I^rA_u\Pi_I^r=0$ as an operator.
\end{proof}

\section{Weight enumerator tables}\label{Section:WeightEnumeratorTables}

\renewcommand{\arraystretch}{1.1}

\begin{table}[H]
    \centering
    \begin{tabular}{| p{0.11\textwidth} | p{0.89\textwidth} |}
        \hline
        Code & Weight enumerator\\
        \hline\hline
        $[5,2,4]_4$ & \hspace{-1pt}$x^{5} + 15 x y^{4}$ \\
        \hline
        $[7,3,4]_4$ & \hspace{-1pt}$x^{7} + 21 x^{3} y^{4} + 42 x y^{6}$ \\
        \hline
        $[13,6,6]_4$ & \hspace{-1pt}$x^{13} + 156 x^{7} y^{6} + 1053 x^{5} y^{8} + 2028 x^{3} y^{10} + 858 x y^{12}$ \\
        \hline
        $[23,11,8]_4$ & \hspace{-6pt}$\begin{array}{l}x^{23} + 1518 x^{15} y^{8} + 110124 x^{11} y^{12} + 425040 x^{9} y^{14} + 1298649 x^{7} y^{16} + 1530144 x^{5} y^{18}\\ + 743820 x^{3} y^{20} + 85008 x y^{22}\end{array}$ \\
        \hline
        $[29,14,12]_4$ & \hspace{-6pt}$\begin{array}{l}x^{29} + 71253 x^{17} y^{12} + 613872 x^{15} y^{14} + 5618025 x^{13} y^{16} + 24701040 x^{11} y^{18} + 65315250 x^{9} y^{20}\\ + 91040976 x^{7} y^{22} + 62560134 x^{5} y^{24} + 17276112 x^{3} y^{26} + 1238793 x y^{28}\end{array}$ \\
        \hline
        $[31,15,8]_4$ & \hspace{-6pt}$\begin{array}{l}x^{31} + 1395 x^{23} y^{8} + 58590 x^{19} y^{12} + 468720 x^{17} y^{14} + 6072807 x^{15} y^{16} + 37914240 x^{13} y^{18}\\ + 135749124 x^{11} y^{20} + 296335200 x^{9} y^{22} + 345151365 x^{7} y^{24} + 200987136 x^{5} y^{26}\\ + 48076350 x^{3} y^{28} + 2926896 x y^{30}\end{array}$ \\
        \hline
        $[37,18,12]_4$ & \hspace{-6pt}$\begin{array}{l}x^{37} + 12987 x^{25} y^{12} + 163836 x^{23} y^{14} + 4190472 x^{21} y^{16} + 49694700 x^{19} y^{18}\\ + 402823773 x^{17} y^{20} + 2140777080 x^{15} y^{22} + 7315768908 x^{13} y^{24} + 15818285880 x^{11} y^{26}\\ + 20702769285 x^{9} y^{28} + 15425347212 x^{7} y^{30} + 5876290827 x^{5} y^{32} + 942953436 x^{3} y^{34}\\ + 40398339 x y^{36}\end{array}$ \\
        \hline
        $[47,23,12]_4$ & \hspace{-6pt}$\begin{array}{l}x^{47} + 38916 x^{35} y^{12} + 1070190 x^{31} y^{16} + 9988440 x^{29} y^{18} + 231731808 x^{27} y^{20}\\ + 3350122776 x^{25} y^{22} + 32422518540 x^{23} y^{24} + 226165393080 x^{21} y^{26} + 1134417096120 x^{19} y^{28}\\ + 4010049018360 x^{17} y^{30} + 9894334090905 x^{15} y^{32} + 16673213458440 x^{13} y^{34}\\ + 18572482193952 x^{11} y^{36} + 13080263144520 x^{9} y^{38} + 5432928694380 x^{7} y^{40}\\ + 1192386006504 x^{5} y^{42} + 113540737860 x^{3} y^{44} + 2948872872 x y^{46}\end{array}$ \\
        \hline
        $[53,26,16]_4$ & \hspace{-6pt}$\begin{array}{l}x^{53} + 196365 x^{37} y^{16} + 1008696 x^{35} y^{18} + 94594188 x^{33} y^{20} + 1497103296 x^{31} y^{22}\\ + 25024672920 x^{29} y^{24} + 272515859616 x^{27} y^{26} + 2301852043920 x^{25} y^{28}\\ + 14237883206976 x^{23} y^{30} + 65437002577947 x^{21} y^{32} + 220371245649360 x^{19} y^{34}\\ + 538407879739080 x^{17} y^{36} + 937381604347968 x^{15} y^{38} + 1135693969571784 x^{13} y^{40}\\ + 925961040090528 x^{11} y^{42} + 484514894731920 x^{9} y^{44} + 151674555714240 x^{7} y^{46}\\ + 25413417241719 x^{5} y^{48} + 1867132096440 x^{3} y^{50} + 38016923532 x y^{52}\end{array}$ \\
        \hline
        $[61,30,18]_4$ & \hspace{-6pt}$\begin{array}{l}x^{61} + 201300 x^{43} y^{18} + 9791964 x^{41} y^{20} + 297188340 x^{39} y^{22} + 7311487755 x^{37} y^{24}\\ + 134130915792 x^{35} y^{26} + 1902476580870 x^{33} y^{28} + 20779092892368 x^{31} y^{30}\\ + 175323674956545 x^{29} y^{32} + 1141965426170520 x^{27} y^{34} + 5726086694463408 x^{25} y^{36}\\ + 21992184927250200 x^{23} y^{38} + 64200120673894854 x^{21} y^{40} + 140927247768490320 x^{19} y^{42}\\ + 229267347842000340 x^{17} y^{44} + 271133644941905040 x^{15} y^{46} + 227146495371493950 x^{13} y^{48}\\ + 130168863213334452 x^{11} y^{50} + 48592446257133540 x^{9} y^{52} + 11002063150069140 x^{7} y^{54}\\ + 1350253440272367 x^{5} y^{56} + 73516463568960 x^{3} y^{58} + 1121442784950 x y^{60}\end{array}$ \\
        \hline
        $[71,35,12]_4$ & \hspace{-6pt}$\begin{array}{l}x^{71} + 7455 x^{59} y^{12} + 499485 x^{55} y^{16} + 44730 x^{53} y^{18} + 42565068 x^{51} y^{20} + 51290400 x^{49} y^{22}\\ + 2044474110 x^{47} y^{24} + 24392610900 x^{45} y^{26} + 478715797065 x^{43} y^{28} + 8385401713080 x^{41} y^{30}\\ + 123970358688225 x^{39} y^{32} + 1477842682144710 x^{37} y^{34} + 14063872023855282 x^{35} y^{36}\\ + 107143099467071640 x^{33} y^{38} + 652728283699886100 x^{31} y^{40}\\ + 3172457555098365720 x^{29} y^{42} + 12254266094922768945 x^{27} y^{44}\\ + 37402445047293925872 x^{25} y^{46} + 89525811715214684355 x^{23} y^{48}\\ + 166409387015909428710 x^{21} y^{50} + 237190163461701506280 x^{19} y^{52}\\ + 255090582785520762000 x^{17} y^{54} + 202747446918460600590 x^{15} y^{56}\\ + 115908476245380852180 x^{13} y^{58} + 45970393268864625351 x^{11} y^{60}\\ + 12033457215125864280 x^{9} y^{62} + 1933977124137515550 x^{7} y^{64}\\ + 170403391711501914 x^{5} y^{66} + 6731583805555770 x^{3} y^{68} + 75335382697656 x y^{70}\end{array}$ \\
        \hline
    \end{tabular}
    \caption{Weight enumerator polynomials of expurgated quadratic residue codes over $GF(4)$.}
    \label{Table:GF4Weights}
\end{table}

\begin{table}[H]
    \centering
    \begin{tabular}{| p{0.11\textwidth} | p{0.89\textwidth} |}
        \hline
        Code & Weight enumerator\\
        \hline\hline
        $[5,2,4]_9$ & \hspace{-1pt}$x^{5} + 40 x y^{4} + 40 y^{5}$ \\
        \hline
        $[11,5,6]_9$ & \hspace{-1pt}$x^{11} + 528 x^{5} y^{6} + 7920 x^{3} y^{8} + 11000 x^{2} y^{9} + 23760 x y^{10} + 15840 y^{11}$ \\
        \hline
        $[17,8,8]_9$ & \hspace{-6pt}$\begin{array}{l}x^{17} + 2720 x^{9} y^{8} + 77248 x^{7} y^{10} + 220864 x^{6} y^{11} + 1214752 x^{5} y^{12} + 3179680 x^{4} y^{13}\\ + 7936960 x^{3} y^{14} + 12206272 x^{2} y^{15} + 12404152 x y^{16} + 5804072 y^{17}\end{array}$ \\
        \hline
        $[23,11,9]_9$ & \hspace{-6pt}$\begin{array}{l}x^{23} + 10120 x^{14} y^{9} + 366528 x^{11} y^{12} + 2003760 x^{10} y^{13} + 13115520 x^{9} y^{14} + 62298720 x^{8} y^{15}\\ + 241544160 x^{7} y^{16} + 803325600 x^{6} y^{17} + 2154216064 x^{5} y^{18} + 4512831840 x^{4} y^{19}\\ + 7226651520 x^{3} y^{20} + 8268193824 x^{2} y^{21} + 6005888064 x y^{22} + 2090613888 y^{23}\end{array}$ \\
        \hline
        $[29,14,12]_9$ & \hspace{-6pt}$\begin{array}{l}x^{29} + 43848 x^{17} y^{12} + 82824 x^{16} y^{13} + 1714944 x^{15} y^{14} + 13739040 x^{14} y^{15} + 90658176 x^{13} y^{16}\\ + 571797408 x^{12} y^{17} + 3028883424 x^{11} y^{18} + 13993514304 x^{10} y^{19} + 56123954040 x^{9} y^{20}\\ + 192285362808 x^{8} y^{21} + 559241011200 x^{7} y^{22} + 1362023628000 x^{6} y^{23}\\ + 2723779515240 x^{5} y^{24} + 4357956668520 x^{4} y^{25} + 5363900914848 x^{3} y^{26}\\ + 4767732820928 x^{2} y^{27} + 2724475922400 x y^{28} + 751572223008 y^{29}\end{array}$ \\
        \hline
        $[41,20,14]_9$ & \hspace{-6pt}$\begin{array}{l}x^{41} + 13120 x^{27} y^{14} + 298480 x^{25} y^{16} + 984000 x^{24} y^{17} + 51010560 x^{23} y^{18}\\ + 241860640 x^{22} y^{19} + 3178539760 x^{21} y^{20} + 21320311600 x^{20} y^{21} + 169521814400 x^{19} y^{22}\\ + 1076372094720 x^{18} y^{23} + 6584910839040 x^{17} y^{24} + 35468786005600 x^{16} y^{25}\\ + 175512737265408 x^{15} y^{26} + 778023868378880 x^{14} y^{27} + 3116202323611840 x^{13} y^{28}\\ + 11168036013571840 x^{12} y^{29} + 35749106898555904 x^{11} y^{30} + 101465555251461632 x^{10} y^{31}\\ + 253683676686494280 x^{9} y^{32} + 553469380198545480 x^{8} y^{33} + 1041846451913296000 x^{7} y^{34}\\ + 1666936669433112672 x^{6} y^{35} + 2222593774949021136 x^{5} y^{36}\\ + 2402798170045927120 x^{4} y^{37} + 2023411266902614720 x^{3} y^{38}\\ + 1245175538958225280 x^{2} y^{39} + 498070323942465632 x y^{40} + 97184444550609056 y^{41}\end{array}$ \\
        \hline
    \end{tabular}
    \caption{Weight enumerator polynomials of expurgated quadratic residue codes over $GF(9)$.}
    \label{Table:GF9Weights}
\end{table}

\pagebreak

\section{Post-distillation noise plots}\label{Section:AllPlots}

\begin{figure}[H]
    \centering
    \includegraphics[width=0.85\linewidth]{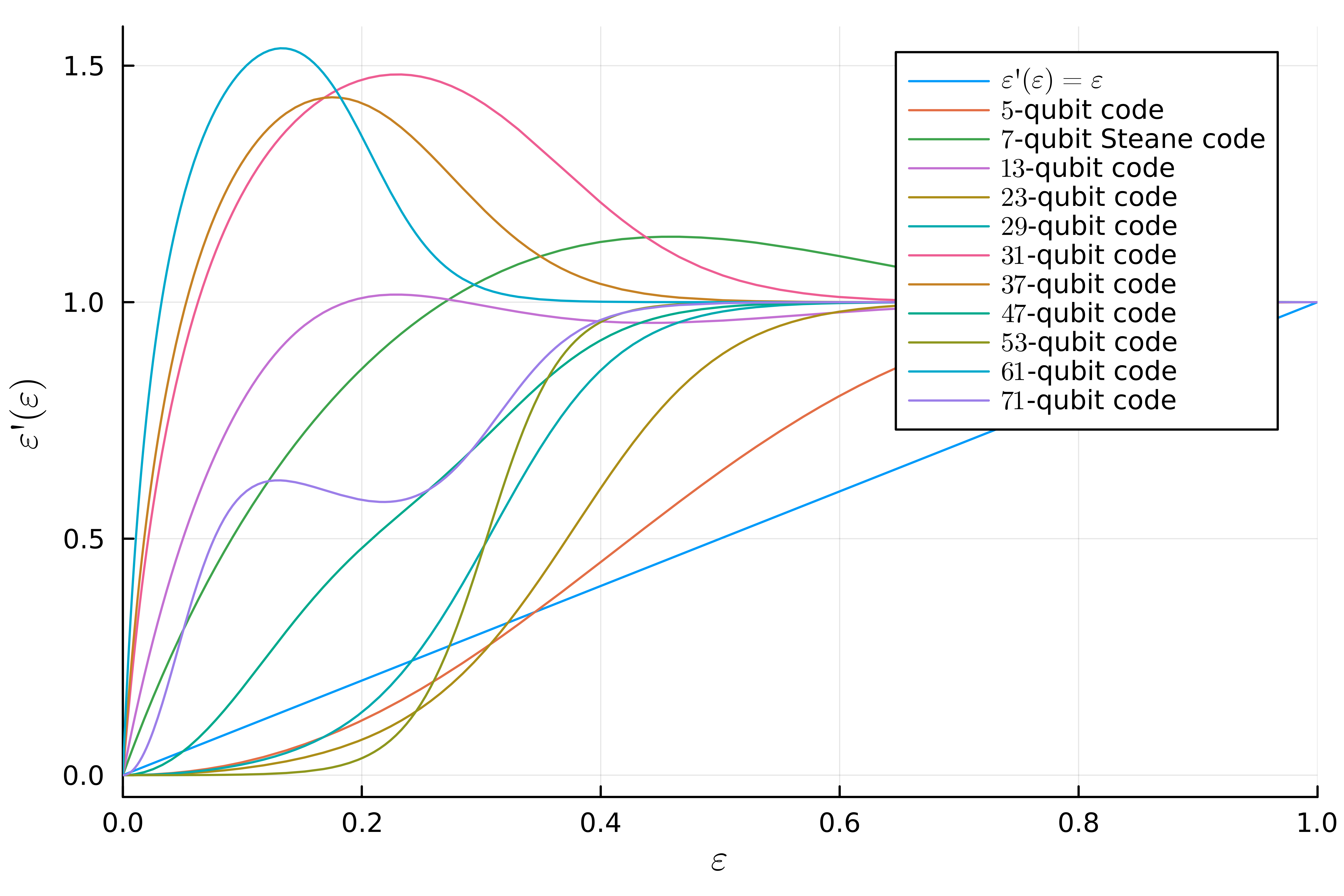}
    \caption{Post-distillation noise as a function of predistillation noise for magic state distillation protocols based on expurgated quadratic residue codes for qubit $T$ states. The length $5$, $23$, $29$, $47$, $53$, and $71$ codes have nonzero threshold. The other code lengths tested have a fixed point at the pure magic state but cannot distill them from noisy magic states.}
    \label{Figure:TStateDistillationAll}
\end{figure}

\begin{figure}[H]
    \centering
    \includegraphics[width=0.85\linewidth]{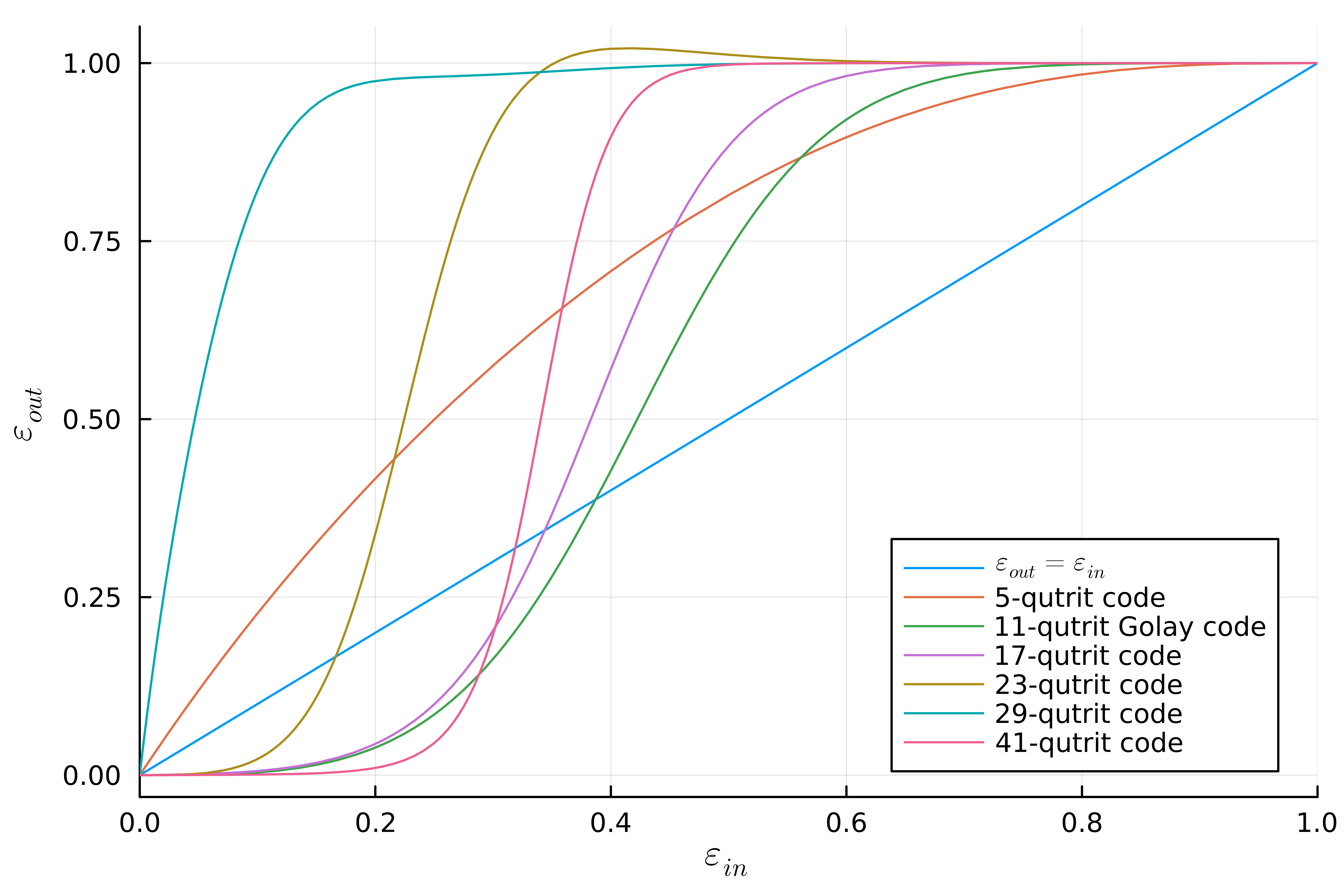}
    \caption{Post-distillation noise as a function of predistillation noise for magic state distillation protocols based on expurgated quadratic residue codes for qutrit Strange states. The length $11$, $17$, $23$, and $41$ codes have a nonzero threshold. The other code lengths tested have a fixed point at the pure magic state but cannot distill them from noisy magic states.}
    \label{Figure:StrangeStateDistillationAll}
\end{figure}

\end{document}